\documentclass[12pt]{article}
\usepackage{amssymb,amsmath,amsthm,mathtools}

\usepackage{array,geometry}  
\geometry{verbose,margin=1.0in}

\usepackage{setspace}
\setstretch{1.1}

\usepackage{bm,paralist,dsfont,nicefrac,bbm}

\usepackage{etoolbox,algorithm,algpseudocode}

\usepackage{tikz,float,tabularx}
\usetikzlibrary{shapes,arrows,patterns,decorations.pathreplacing}

\usepackage[driverfallback=hypertex,pagebackref=true,colorlinks]{hyperref}
\hypersetup{linkcolor=[rgb]{.7,0,0}}
\hypersetup{citecolor=[rgb]{0,.7,0}}
\hypersetup{urlcolor=[rgb]{.7,0,.7}}

\usepackage{xcolor} 

\usepackage[capitalize, nameinlink, noabbrev]{cleveref} 

\makeatletter
\newcommand{\algmargin}{\the\ALG@thistlm}
\algnewcommand{\parState}[1]{\State%
  \parbox[t]{\dimexpr\linewidth-\algmargin}{\strut #1\strut}}

\newtheorem{theorem}{Theorem}[section]
\newtheorem*{theorem*}{Theorem}

\newtheorem{definition}[theorem]{Definition}
\newtheorem*{definition*}{Definition}

\newtheorem{lemma}[theorem]{Lemma}
\newtheorem*{lemma*}{Lemma}

\newtheorem{claim}[theorem]{Claim}
\newtheorem*{claim*}{Claim}

\newtheorem*{fact*}{Fact}

\newtheorem{observation}[theorem]{Observation}
\newtheorem*{observation*}{Observation}

\newtheorem*{conjecture*}{Conjecture}

\newtheorem{corollary}[theorem]{Corollary}
\newtheorem*{corollary*}{Corollary}

\newtheorem*{remark*}{Remark}

\newtheorem{proposition}[theorem]{Proposition}
\newtheorem*{proposition*}{Proposition}

\crefname{claim}{Claim}{Claims}

%
%

\allowdisplaybreaks


\DeclarePairedDelimiter{\paren}{\lparen}{\rparen}

\DeclarePairedDelimiter{\bracket}{[}{]}
\DeclarePairedDelimiter{\set}{\lbrace}{\rbrace}

\DeclarePairedDelimiter{\co}{[}{)}

\DeclareMathOperator*{\argmax}{arg\,max}

\def\rev{\mathsf{rev}}

\def\1{\mathds{1}}

\def\diff{~\mathrm{d}}
\def\abstart{\mathsf{AB}\text{-}\mathsf{start}}

\def\vab{ \mathsf{V}_{ \mathtt{AB} } }
\def\vba{ \mathsf{V}_{ \mathtt{BA} } }

\def\util{\mathsf{util}}
\def\utilab{ \util_{ \mathtt{AB} } }
\def\utilba{ \util_{ \mathtt{BA} } }

\def\fp{\mathsf{FP}}

\title{Approximately Optimal Mechanism Design \\ for Competing Sellers}

\author{
Brendan Lucier\\
{\small Microsoft Research}
\and
Raghuvansh R. Saxena\\
{\small Tata Institute of Fundamental Research}
}

\date{} 
\begin{document}

\maketitle
\thispagestyle{empty}
\addtocounter{page}{-1}

\begin{abstract}
Two sellers compete to sell identical products to a single buyer. Each seller chooses an arbitrary mechanism, possibly involving lotteries, to sell their product. The utility-maximizing buyer can choose to participate in one or both mechanisms, resolving them in either order. Given a common prior over buyer values, how should the sellers design their mechanisms to maximize their respective revenues?

We first consider a Stackelberg setting where one seller (Alice) commits to her mechanism and the other seller (Bob) best-responds.  We show how to construct a simple and approximately-optimal single-lottery mechanism for Alice that guarantees her a quarter of the optimal monopolist's revenue, for any regular distribution. Along the way we prove a structural result: for any single-lottery mechanism of Alice, there will always be a best response mechanism for Bob consisting of a single take-it-or-leave-it price. We also show that no mechanism (single-lottery or otherwise) can guarantee Alice more than a $1/\mathrm{e}$ fraction of the monopolist revenue. Finally, we show that our approximation result does not extend to Nash equilibrium: there exist instances in which a monopolist could extract full surplus, but neither competing seller obtains positive revenue at any equilibrium choice of mechanisms.
\end{abstract}

\newpage
\pagenumbering{roman} 
\tableofcontents

\newpage
\setcounter{page}{1}
\pagenumbering{arabic}


\newpage


\section{Introduction}
\label{sec:intro}


Sellers nearly always face competitive pressure, especially in a digital economy where potential buyers can find identical products being sold by multiple sources.
In such oligopolistic environments, 
there are many ways that sellers can potentially compete, such as via price, quality, quantity, timing, and more~\cite{hicks1935annual,von1953theory,cournot2020researches,champsaur1989multiproduct}. 
Motivated by the theory of approximately optimal mechanism design, we consider yet another way that sellers could compete: by introducing randomness, in the form of lotteries, into their sales protocols.
In so doing, we view competing sellers as mechanism designers and consider how the class of mechanisms from which they choose might influence the outcome of perfect competition.

Imagine the following simple duopoly scenario: two identical sellers wish to sell identical products to a single potential buyer. The buyer can approach either of the sellers first, and in case he does not obtain the item from this seller, can then approach the other seller. The setting is Bayesian, with the buyer's value for the product drawn from a commonly known distribution.  How should revenue-maximizing sellers design their sales mechanisms? In the monopolist setting with only one seller, the theory of screening provides a classic solution: it is revenue-optimal to sell the good using a deterministic take-it-or-leave-it price~\cite{mussa1978monopoly,Mye81}.  In duopoly, however, a restriction to posted prices leads to classic Bertrand competition, with disastrous results for the sellers.  
Indeed, any price $p > 0$ posted by one seller would simply be undercut by the other, leading ultimately to revenue $0$ at equilibrium.\footnote{Throughout this paper we normalize the marginal cost of production to $0$.}
This outcome occurs both
at Nash equilibrium, where prices are chosen simultaneously, and also at  Stackelberg equilibrium, where one seller commits to their price first and the other seller responds.

The Bertrand analysis above applies when sellers compete on price, but in principle sellers need not restrict themselves to posted-price mechanisms.  For example, a seller could offer a \emph{lottery} in which the buyer pays a price for a \emph{chance} at obtaining the good.  It is known that even a monopolist seller may benefit from the use of lotteries for certain screening problems, such as when allocating multiple goods to a single buyer~\cite{HartN13,DaskalakisDT14,HartR15}.
Returning to the case of just a single good but multiple sellers, we ask: is the ability to offer random lotteries likewise helpful for generating revenue in a duopoly?
Can fully general mechanism designers in duopoly approximate, at equilibrium, the revenue obtainable by a monopolist?
\paragraph{Model}
To answer this question, we first require a formulation of sales protocols under competition.  In the setup we consider, each of two sellers can design an arbitrary sales mechanism.  There is a single buyer over which the sellers compete, with value drawn from a known prior.  After observing the proposed mechanisms and their realized value, the buyer can choose the order in which to interact with the sellers.  The buyer interacts with the selected first seller to the conclusion of their mechanism.  The buyer is then free to interact with the other seller, if desired, using that seller's chosen mechanism.  We note that Bertrand competition corresponds to the special case where each seller is restricted to posting a take-it-or-leave-it price.
%

Under this model, we note that the \emph{taxation principle} \cite{hammond1979straightforward,guesnerie1981taxation} applies: it is without loss for each seller to describe their mechanism as a \emph{pricing function} that assigns a price $p(x)$ to each probability of allocation $x \in [0,1]$.  Each seller's protocol then has the buyer select their preferred allocation probability, pay the corresponding price, and receive the good with the chosen probability.
One can therefore model our competitive scenario as a game between sellers in which the actions are pricing functions over lotteries.  Once the lottery menus have been selected, the buyer can first select any one entry from either one of the sellers.  Then, after the randomness of the chosen lottery has been resolved, the buyer is free to purchase from the other seller, if desired.



\paragraph{Results}
We begin our study by considering Stackelberg equilibria of this mechanism design game.  One seller (Alice) is the principal who can commit to her menu in advance, and the other seller (Bob) best-responds to the principal's choice.  In our main result, we show that the sellers can obtain positive revenue at Stackelberg equilibrium.  Moreover, the expected revenue obtained by the principal, Alice, is always at least a quarter of the expected revenue that she would obtain if she were a monopolist. This bound applies to any value distribution that satisfies either of two constraints: decreasing marginal revenue (DMR) or regularity (see Definition~\ref{def:regularity}).

\begin{theorem}
\label{thm:oneoptionalice-informal}
    Consider any instance of the duopoly game in which the buyer's value is drawn from a distribution $D$ that is either regular or DMR.  There exists a Stackelberg equilibrium in which the expected revenue of the principal is at least $\textsc{Rev}(D)/4$, where $\textsc{Rev}(D)$ is the maximum expected revenue obtainable by a monopolistic seller.
\end{theorem}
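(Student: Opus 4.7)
The plan is to exhibit a single-lottery mechanism for Alice that yields expected revenue exactly $\textsc{Rev}(D)/4$ against Bob's best response. Let $p^{*}$ be the monopoly-optimal take-it-or-leave-it price and set $r(p) := p(1-F(p))$, so that $\textsc{Rev}(D) = r(p^{*})$. Under either regularity or DMR, the curve $r$ is continuous and unimodal with peak at $p^{*}$; in particular, $r$ is non-decreasing on $[0,p^{*}]$ with $r(0)=0$, so the intermediate value theorem produces a threshold $\tilde p \in [0,p^{*}]$ with $r(\tilde p) = \textsc{Rev}(D)/2$. Alice's mechanism will be the single lottery selling the good with probability $x = 1/2$ at price $q = \tilde p / 2$, so that her effective per-unit price is exactly $\tilde p$.

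I next invoke the structural result cited in the introduction to reduce Bob's response to a single take-it-or-leave-it price $p_{B}$. A direct comparison of the buyer's expected utilities shows that, for a value $v \geq \max(\tilde p, p_{B})$, approaching Alice first yields $v - q - (1-x)p_{B}$ while approaching Bob first yields $v - p_{B}$; the former is (weakly) preferred iff $p_{B} \geq \tilde p$. Buyers with $v < \tilde p$ never buy from Alice, and those with $v \in [p_{B}, \tilde p)$ transact only with Bob. A two-branch calculation of Bob's revenue follows: if $p_{B} \geq \tilde p$, then all buyers with $v \geq \tilde p$ approach Alice first, and Bob collects $(1-x)\, r(p_{B})$ from those who lose her lottery and still have $v \geq p_{B}$; by unimodality of $r$, this is maximized at $p_{B} = p^{*}$ for revenue $(1-x)\,\textsc{Rev}(D) = \textsc{Rev}(D)/2$. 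If instead $p_{B} < \tilde p$, every buyer with $v \geq p_{B}$ approaches Bob first, giving him $r(p_{B}) < r(\tilde p) = \textsc{Rev}(D)/2$ by monotonicity of $r$ on $[0,p^{*}]$. Hence Bob's best response is $p_{B} = p^{*}$, and against this response Alice sells her lottery to every buyer with $v \geq \tilde p$, obtaining expected revenue $q\,(1-F(\tilde p)) = \tfrac{1}{2}\, r(\tilde p) = \textsc{Rev}(D)/4$.

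The main subtlety I expect is the boundary case $p_{B} = \tilde p$, at which the undercut branch attains its supremum only in the limit and the buyer is indifferent across orderings. I would resolve this either via the standard Stackelberg tie-breaking convention in favor of the principal, or by perturbing Alice's lottery slightly (e.g.\ replacing $\tilde p$ by $\tilde p - \epsilon$ and recalibrating $x$) so that Bob strictly prefers $p_{B} = p^{*}$, then letting the perturbation vanish. The remaining checks --- that the ``approach the more expensive seller first'' heuristic correctly handles every buyer type (notably those in $[p_{B}, \tilde p)$, who are never useful to Alice), and that the structural reduction of Bob's response to a single take-it-or-leave-it price applies throughout the parameter range --- are routine but warrant careful unpacking in the full proof.
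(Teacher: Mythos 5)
Your proposal is correct and follows essentially the same route as the paper: you commit Alice to a single lottery with win probability $1/2$ at per-unit price $\tilde p$ satisfying $\Gamma_D(\tilde p) = \Gamma_D(\mathtt v)/2$, invoke the key lemma that Bob's best response is a posted price, split Bob's candidate prices into ``undercut'' ($p_B < \tilde p$, Bob gets $\Gamma_D(p_B) \le \Gamma_D(\tilde p)$) and ``accommodate'' ($p_B \ge \tilde p$, Bob gets $\tfrac12\Gamma_D(p_B)$, maximized at $p^*$), and conclude that Alice collects $\tfrac12\Gamma_D(\tilde p) = \textsc{Rev}(D)/4$ at Bob's best response $p^*$. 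The only cosmetic difference is that you justify the existence of $\tilde p$ and the undercut bound via unimodality of the revenue curve under regularity/DMR, whereas the paper sidesteps any shape assumption at this step by simply taking $\tilde p$ to be the \emph{smallest} solution (continuity alone then gives $\Gamma_D(q) \le \Gamma_D(\tilde p)$ for $q \le \tilde p$); both are valid, and both handle the $p_B = \tilde p$ tie the same way, by perturbing Alice's price downward.
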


As it turns out, this $4$-approximation is achievable by a simple menu for Alice in which she offers only a single lottery to the buyer.  In our main technical lemma we show that, against any such single-lottery menu, Bob's best response is a posted price. This simplifies Alice's problem of optimizing over single-item menus, and we find that the optimal such menu always achieves the desired approximation bound.

\begin{lemma}[Key Lemma]
\label{lemma:fixedprice-informal}
    For any value distribution satisfying the conditions of \cref{thm:oneoptionalice-informal}, any single-lottery mechanism admits a best response that is a posted-price mechanism.
\end{lemma}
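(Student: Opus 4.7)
The plan is to fix Alice's lottery $(q_A, p_A)$ and argue via a Myerson-style virtual-value decomposition of Bob's revenue. First, I would characterize the buyer's optimal strategy as a function of their type $v$: types with $v < p_A/q_A$ reject Alice's lottery outright and treat Bob as a monopolist, while types with $v \geq p_A/q_A$ choose between visiting Alice first (pay $p_A$, reach Bob with probability $1-q_A$ at full value $v$) and visiting Bob first (select from Bob's menu with effective value $v'(v) := v(1-q_A) + p_A$, the residual value after Alice's outside option). The buyer's order choice is endogenous to Bob's menu.

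Next, I would apply the envelope theorem to the buyer's utility $U(v)$, giving the classical Myerson identity for the total (Alice plus Bob) revenue, $\int \phi(v) X(v) f(v)\,dv$, where $\phi$ is the virtual value and $X(v)$ is the overall allocation probability to type $v$. Subtracting Alice's revenue $p_A \int_{p_A/q_A}^\infty \alpha(v) f(v)\, dv$, where $\alpha(v)$ is the probability that type $v$ takes Alice's lottery, yields
\[
R_B = \int_0^\infty \phi(v)\, y_B(v)\, f(v)\, dv + \int_{p_A/q_A}^\infty \alpha(v)\, \bigl[q_A \phi(v) - p_A\bigr]\, f(v)\, dv,
\]
where $y_B(v)$ is Bob's allocation probability to type $v$. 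The key structural observation is that the second integral has zero mean in $\alpha$: Myerson's identity $\int_t^\infty \phi f = t(1 - F(t))$ applied at $t = p_A/q_A$ yields $\int_{p_A/q_A}^\infty [q_A \phi(v) - p_A] f(v)\, dv = 0$. In the purely Bob-first regime, substituting the relation $\alpha = 1 - y_B$ collapses $R_B$ to $\int h(v)\, y_B(v)\, f(v)\, dv$, where $h(v) = \phi(v)$ on $[0, p_A/q_A)$ and $h(v) = (1-q_A)\phi(v) + p_A$ on $[p_A/q_A, \infty)$; this function is non-decreasing under regularity (and the upward jump at $p_A/q_A$ equals $p_A - q_A \phi(p_A/q_A) \geq 0$).

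Finally, I would optimize $R_B$ over all menus. The feasible pairs $(y_B, \alpha)$ induced by Bob's menu split, for each $v \geq p_A/q_A$, into an Alice-first branch ($\alpha = 1$, $y_B \in [0, 1-q_A]$) and a Bob-first branch ($\alpha = 1 - y_B$, $y_B \in [0,1]$), together with the IC-implied monotonicity of $y_B$. Within either branch, monotonicity of the integrand and pointwise maximization show that the optimal allocation is a threshold rule $y_B(v) = \mathds{1}[v \geq t]$, exactly realized by a posted price: $p_B < p_A/q_A$ in the Bob-first branch earning $p_B(1 - F(p_B))$, or $p_B \geq p_A/q_A$ in the Alice-first branch earning $(1-q_A) p_B(1 - F(p_B))$. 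The main obstacle I anticipate is ruling out mixed-order menus in which different types $v \geq p_A/q_A$ choose different orders; handling this requires leveraging the monotonicity of $\phi$ under regularity (or the unimodality of the posted-price revenue curve under DMR) to show that no such mixing strategy can exceed the best single-threshold menu, and hence that the optimum lies among posted prices.
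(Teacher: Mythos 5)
Your virtual-value decomposition of $R_B$ is correct and is a genuinely different (and in some ways cleaner) way to set up the problem than the paper's. The paper instead builds an auxiliary distribution $D_s$ and reduces Bob's problem to a monopolist problem on $D_s$; your $h(v) = (1-q_A)\varphi(v) + p_A$ is exactly what the virtual-surplus density of $D_s$ works out to on the middle region, and the Myerson-identity cancellation you invoke (the vanishing of $\int_{p_A/q_A}^\infty[q_A\varphi - p_A]f$) is the dual expression of the indifference constraint at the endogenous threshold $s = \abstart(B)$. So the setups are equivalent reformulations, and your observation that $h$ is non-decreasing with an upward jump at $p_A/q_A$ equal to $p_A - q_A\varphi(p_A/q_A) \ge 0$ is the analogue of the virtual-value inequality the paper proves inside \cref{lemma:fixedprice-reduction} (there split into the regular case and the DMR case; you only sketch regularity, but the DMR check is a similar computation and is a minor gap).

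The real gap is that you never actually close the ``main obstacle'' you flag, and that obstacle \emph{is} the theorem. Your ``purely Bob-first regime'' is the $s=\infty$ special case; the generic case has $s < \infty$ and the three-piece integrand $\varphi$, $h$, $\varphi$ on $[0,p_A/q_A)$, $[p_A/q_A,s]$, $(s,\infty)$ respectively. The feasible set of $y_B$ is not just ``monotone threshold in each branch'': $y_B$ is $\mathtt{z}_B(v)$ below $p_A/q_A$, $\mathtt{z}_B((1-q_A)v + p_A)$ on $[p_A/q_A,s]$, and $(1-q_A)\mathtt{z}_B(v)$ above $s$ for a single monotone $\mathtt{z}_B$, which produces a downward jump in $y_B$ at $s$ and skips an interval of effective values $((1-q_A)s + p_A, s)$; moreover $s$ itself is endogenous and must be consistent with the menu (which is what licenses the Myerson-identity cancellation in the first place). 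Pointwise maximization of each branch separately ignores these couplings, and a naive pointwise optimum is not realizable by any single menu. The paper handles this with a two-step argument you would need some analogue of: first a reduction to ``bottom-proper'' mechanisms (\cref{lemma:fixedprice-reduction}, which flattens the allocation below $s$ to a single step at threshold $\hat p$ without changing $s$, using the regularity/DMR-driven inequality and \cref{lemma:nobottom}), and second a convex-combination bound showing the bottom-proper revenue is a mixture of the revenues of the two posted prices $\fp_{\hat p}$ and $\fp_{\max(s,\mathtt v)}$ (\cref{lemma:fixedprice-bottomproper}). Without something playing the role of both steps, the claim that ``the optimum lies among posted prices'' remains an assertion, not a proof.
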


We also provide a lower bound: there are instances of the duopoly game in which Alice cannot obtain more than a $1/\mathrm{e}$ fraction of the monopolist revenue at any Stackelberg equilibrium.  

\begin{theorem}
\label{thm:lowerbound}
    {Consider an instance of the duopoly game in which the buyer's valuation distribution is a point mass.} 
    At every Stackelberg equilibrium, the expected revenue of the principal is at most $\textsc{Rev}(D)/\mathrm{e}$, where $\textsc{Rev}(D)$ is highest expected revenue obtainable by a monopolistic seller.
\end{theorem}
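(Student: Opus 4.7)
The plan is to reparametrize Alice's menu so that the duopoly interaction becomes essentially one-dimensional, and then read the $1/\mathrm{e}$ factor off a short convex-analytic bound. By the taxation principle Alice's mechanism may be described by a menu $M_A$ of $(x,p)$ pairs that (without loss) contains the opt-out $(0,0)$, and encoded by the concave envelope
\[
A(q) \;=\; \min_{(x,p)\in M_A}\bigl(p+(1-x)q\bigr),
\]
which is concave, nondecreasing, satisfies $A(0)=0$ and $A'(q)\in[0,1]$; intuitively $A(q)$ is the minimum expected payment Alice extracts from a buyer who anticipates a later fallback at effective unit price $q$. By the envelope theorem the buyer's best Alice option at fallback $q$ has allocation $1-A'(q)$ and price $A(q)-qA'(q)$. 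Since Bob can always post a price $q$---after which the buyer prefers Alice-first (because $(0,0)\in M_A$ forces $A(q)\le q$), earning Bob $qA'(q)$---Bob's equilibrium revenue is at least $c:=\max_q qA'(q)$, which in particular implies $A'(q)\le c/q$ for every $q>0$.

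I then case on the buyer's equilibrium order of play. First, suppose he goes to Alice and only then to Bob. Then Bob's revenue is $rA'(\tilde v)$ where $\tilde v:=v(1-y)+r$, and since $r\le\tilde v$ this is at most $\tilde vA'(\tilde v)\le c$; to match the lower bound $c$, equality must hold, forcing $r=\tilde v$ and $y=1$---so Bob is actually posting the price $q^{*}=\tilde v$. Combining $A'(q)\le c/q$ with $A'(q)\le 1$ gives $A'(q)\le \min(1,c/q)$, and integrating from $0$ to $q^{*}$---using $c\le q^{*}$, which follows from $A'(q^{*})\le 1$---yields
\[
A(q^{*})\;\le\;\int_0^{q^{*}}\min(1,c/t)\diff t\;=\;c+c\ln(q^{*}/c).
\]
Hence Alice's revenue is
\[
A(q^{*})-c\;\le\;c\ln(q^{*}/c)\;\le\;q^{*}/\mathrm{e}\;\le\;v/\mathrm{e},
\]
the middle inequality being the elementary maximum of $c\mapsto c\ln(q^{*}/c)$ at $c=q^{*}/\mathrm{e}$.

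If instead the buyer goes to Bob first and then (on failure) to Alice, Alice's revenue factors as $(1-y)(A(v)-vA'(v))$ while Bob's is $r$. The buyer's preference for this ordering forces $A(\tilde v)\ge r+(1-y)A(v)$, and the trivial monotonicity $A(\tilde v)\le A(v)$ sharpens this into the linear constraint $r\le yA(v)$; combined with $r\ge c$ this yields $1-y\le 1-c/A(v)$, so
\[
\text{Alice's revenue}\;\le\;\bigl(1-c/A(v)\bigr)A(v)\;=\;A(v)-c\;\le\;c\ln(v/c)\;\le\;v/\mathrm{e},
\]
reusing the same integral bound, now applied at $q=v$. In either case the principal earns at most $v/\mathrm{e}=\textsc{Rev}(D)/\mathrm{e}$. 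The crucial technical step is the integral bound $A(q)\le c+c\ln(q/c)$, in which concavity of $A$ combines with Bob's incentive constraint $qA'(q)\le c$ to produce the $1/\mathrm{e}$ factor; once it is in hand, the calculus maximum of $c\ln(\cdot/c)$ is routine. The main obstacle I anticipate is the Bob-first branch, where Bob may use a genuine lottery and no direct posted-price reduction applies; the proof closes there only because monotonicity $A(\tilde v)\le A(v)$ linearizes the buyer's preference inequality enough to recover the same $A(v)-c$ upper bound.
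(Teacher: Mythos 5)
Your proof is correct and is, at its core, the same argument as the paper's, translated into the Legendre-dual coordinate.

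The paper works with the lower convex envelope $\tilde{A}:[0,1]\to\mathbb{R}$ of Alice's pricing function and proves (Lemma \ref{lemma:tight-onebob}) the subgradient bound $\tilde{A}'(z)\le R/(1-z)$, where $R$ bounds Bob's posted-price revenue. Your $A(q)=\min_{(x,p)\in M_A}\bigl(p+(1-x)q\bigr)$ is precisely the concave conjugate of $\tilde{A}$, with correspondence $q=\tilde{A}'(z)$ and $A'(q)=1-z$; under this substitution the paper's constraint $\tilde{A}'(z)\le R/(1-z)$ becomes exactly your $qA'(q)\le c$. Likewise, in the Alice-first branch your forced conclusion $y=1$, $r=q^*$ mirrors the paper's deduction $x_B=1$, $B(x_B)\le 1$, and your final bound $c\ln(q^*/c)$ coincides literally with the paper's $-(1-\mathtt{z}_A(B(x_B)))B(x_B)\ln(1-\mathtt{z}_A(B(x_B)))$ once you substitute $c=q^*A'(q^*)$. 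So the decomposition (two cases by order of play), the key inequality, the integration, and the single-variable optimization at $c=q^*/\mathrm{e}$ are all the same. What your reparametrization buys is a cleaner Bob-first case: where the paper needs the equality $A(x_A)=\tilde{A}(x_A)$, a bang-per-buck comparison $x_A B(x_B)\le x_B A(x_A)$, and an affine under-estimation argument to land on $\tilde{A}(1-B(x_B))$, you use only monotonicity $A(\tilde v)\le A(v)$ and the single linear constraint $r\le yA(v)$, which is a nicer route to $A(v)-c$. One small fix: define $c:=\max_{0\le q\le v}qA'(q)$, not a sup over all $q\ge 0$. For $q>v$ the buyer will not pay Bob the posted price $q$, so Bob's revenue from posting such a $q$ is $0$, not $qA'(q)$; restricting to $q\le v$ preserves every use you make of $c$ (it still lower-bounds Bob's equilibrium revenue, still gives $A'(q)\le c/q$ on the relevant integration ranges $[0,q^*]\subseteq[0,v]$, and gives $c\le v$ and $c\le A(v)$ for free).
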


As partial progress toward closing the gap between our upper and lower bounds of $4$ and $\mathrm{e}$, 
we note a connection between these bounds and our approach of focusing on single-item menus by Alice and posted-price menus by Bob.  First, we show that if Alice is restricted to only using single-lottery mechanisms, then our $4$-approximation is tight.
On the other hand,
if Bob is restricted to only using posted-price mechanisms, 
then Alice can always obtain a $1/\mathrm{e}$ fraction of the monopolist revenue.
This latter approximation result applies to \emph{any} valuation distribution, even non-regular.

The fact that Alice can approximate the monopolist revenue at Stackelberg equilibrium raises a natural follow-up question: can such an approximation result also be supported at Nash equilibrium, where the sellers choose their mechanisms simultaneously and then the buyer interacts with them as before?  One observation is that our game always supports at least one pure Nash equilibrium, in which each seller simply offers the good for free.  But are there other equilibria with better revenue outcomes?
We show that there exist instances where the answer to this question is negative. Specifically, we provide instances in which $\textsc{Rev}(D)$ is positive and yet both sellers obtain $0$ revenue at every pure Nash equilibrium, and hence no non-trivial approximation to the monopolist revenue can be obtained. 
Furthermore, in these instances, the buyer's type distribution is a point mass meaning that the buyer's value is fixed and known to both sellers.

\begin{theorem}
\label{thm:nonash-informal}
    Consider an instance of the duopoly game in which the buyer's valuation distribution is a point mass.  At every pure Nash equilibrium, both sellers obtain revenue $0$. 
\end{theorem}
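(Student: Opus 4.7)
The plan is a proof by contradiction: assume some pure Nash equilibrium $(M_A, M_B)$ yields positive total revenue, and then exhibit a profitable unilateral deviation. I will track the buyer's standalone utilities $u_A, u_B$ at each seller, the buyer's equilibrium utility $U^*$, and the sellers' equilibrium revenues $R_A, R_B$. Two bookkeeping inequalities will be used throughout: the buyer can always choose to visit just a single seller, so $U^* \geq \max(u_A, u_B)$; and total payments plus buyer utility cannot exceed the buyer's value $v$, so $R_A + R_B \leq v - U^* \leq v - \max(u_A, u_B)$.

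Assume without loss of generality that the buyer visits Alice before Bob in the equilibrium play. Let $x^*$ be the buyer's allocation choice at Alice facing effective value $c_B := v - u_B$ (the buyer discounts by $u_B$ because she can still reach Bob later), and let $y^*$ be the buyer's arg-max choice at Bob facing full value $v$. I would first argue that Bob's deviation to the single posted price $\{(0,0),(1, c_B - \epsilon)\}$ leaves $u_B$ essentially unchanged, so Alice's response stays at $x^*$, while Bob's expected revenue from the residual path rises from $(1 - x^*)(y^* v - u_B)$ to approximately $(1 - x^*)(v - u_B)$. This is a strict improvement unless $x^* = 1$ or $y^* = 1$, so any equilibrium forces one of these.

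In the case $y^* = 1$, I would consider Alice's deviation to the single price $\{(0,0),(1, v - U^* - \epsilon)\}$: this strictly raises Alice's standalone utility to $U^* + \epsilon$, and the structural property $y^* = 1$ implies that Bob's menu yields zero surplus at the new effective value $v - u_A^{\text{new}}$ (which falls below $c_B$). The buyer's best response therefore routes the full residual $v - U^* - \epsilon$ to Alice, giving $R_A \geq v - U^*$, and hence $R_B = 0$ after combining with the revenue bound. This in turn forces $x^* = 1$ or $u_B = v$. In the remaining case $x^* = 1$, Alice effectively offers only the full item, so a Bob-side deviation to $(1, v - u_A - \epsilon)$ makes $\pi_A$ vanish at the new effective value, and the buyer pays Bob the residual $v - u_A - \epsilon$, contradicting $R_B = 0$ unless $u_A = v$. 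Every branch therefore concludes $u_A = v$ or $u_B = v$, making $U^* = v$, and the revenue bound closes the argument with $R_A = R_B = 0$.

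The main obstacle is tracking the buyer's optimal ordering and option choice after each proposed deviation, especially when the non-deviating seller's menu contains nontrivial lottery options that the buyer may prefer at the modified effective value. Handling this requires case analysis on the shape of the buyer-utility envelopes $\pi_A$ and $\pi_B$, together with vanishing $\epsilon$-perturbations to break ties in the buyer's preferences in the deviator's favor.
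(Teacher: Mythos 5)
Your high-level strategy---assume positive total revenue and exhibit a profitable unilateral posted-price deviation---is a reasonable starting point, but the steps needed to close the case analysis do not hold, so the sketch has a genuine gap. The paper's own proof proceeds differently: it first establishes a structural result (\cref{claim:nonash-2}) showing that at any putative positive-revenue Nash equilibrium, the option the buyer selects from \emph{each} seller is the one maximizing the buyer's \emph{standalone} utility against that seller in isolation, i.e., $x_A$ maximizes $z - A(z)$ and $x_B$ maximizes $z - B(z)$, together with $0 < A(x_A) < x_A < 1$ and $0 < B(x_B) < x_B$. The claim about Bob is immediate because the buyer visits him with no outside option, but the claim about Alice is the crux: although the buyer shopping at Alice faces the reduced effective value $v - u_B$, the option they take must nonetheless be the full-value maximizer, and establishing this requires a careful Alice-side deviation argument rather than a posted-price undercut. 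With that structure in hand, the paper finishes with a single Bob deviation (a one-shot lottery with probability $x_B$ and price $B(x_B) + \delta$).

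Your proposal skips this lemma and tries to finish through a $y^* = 1$ / $x^* = 1$ dichotomy, but the branch-closing claims are false in general. In the $y^* = 1$ branch you assert that Bob's menu ``yields zero surplus at the new effective value'' $v - U^* - \epsilon$. But $y^* = 1$ only says that the subgradient of (the lower convex envelope of) $B$ at $y = 1$ is at most $v$; it places no bound at all on the cheap, low-probability part of the menu. A $B$ that also offers a small lottery at a bang-per-buck strictly below $v - U^*$ still satisfies $y^* = 1$ and yet gives the buyer positive surplus at the reduced effective value, so after your Alice deviation the buyer may route part of the trade through Bob's cheap option and Alice does not collect the full residual. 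Symmetrically, in the $x^* = 1$ branch, the claims that ``Alice effectively offers only the full item'' and that $\pi_A$ vanishes at effective value $v - u_A - \epsilon$ do not follow from $x^* = 1$: that condition only says Alice's envelope has subgradient at most $c_B$ at $x = 1$, and a strictly convex menu such as $A(x) = c_B x^2/2$ has zero subgradient at $x = 0^+$ and hence strictly positive surplus at any positive effective value. To repair the argument you would need precisely the standalone-optimality statement of \cref{claim:nonash-2}, which is the one substantive lemma your sketch does not supply.
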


Taken together with the classic analysis of Bertrand competition, these results suggest that a duopolist can approximate a monopolist's revenue if they have (a) the use of general randomized mechanisms, plus (b) the power to commit as a first-mover. 
However, neither of these two -- randomization nor commitment -- is sufficient on its own to guarantee positive revenue.


\paragraph{Proof Techniques}
The key lemma that drives our main approximation result is that, against any single-lottery mechanism chosen by Alice, Bob's best response is a posted-price mechanism. 
This conclusion is reminiscent of Myerson's characterization of the optimal mechanism in monopoly,
but the duopoly setting introduces new challenges.  For one thing, recall that the buyer can choose which seller to purchase from first.  Depending on the menu selected by Bob, some buyer types may be better off buying from Alice first and others will prefer to buy first from Bob.  From Bob's perspective, this behavior introduces distortion into his perceived distribution over buyer values.  One might therefore be tempted to think of Bob as a monopolist who faces a distorted value distribution. Crucially, however, \emph{this distortion is endogenous to Bob's choice of mechanism}.  This endogeneity complicates the design problem faced by Bob.

To address this issue, we analyze the structure of buyer selections when Alice commits to a single-lottery mechanism. For any mechanism chosen by Bob, all buyer types can be grouped into three categories (see Section~\ref{sec:auxdist}). The lowest buyer types are unwilling to buy Alice's lottery, and will only consider Bob's mechanism.  Intermediate types are willing to purchase from either mechanism, but choose to interact with Bob first.  The highest types choose to purchase Alice's lottery first, and only if they lose do they consider buying from Bob.  The threshold between the second and third categories is endogenous to Bob's choice of mechanism.\footnote{Alice's lottery determines the threshold between the first and second categories.}

Our approach to Bob's design problem is therefore to fix an arbitrary threshold type $s$, then consider Bob's optimal mechanism subject to being consistent with this choice of $s$.  This fixes the distortion of buyer values perceived by Bob (see \cref{lemma:oneseller}), leaving us with a more traditional mechanism design problem but with an additional self-consistency constraint.  While the distorted value distribution may no longer be regular, a virtual value analysis allows us to conclude that Bob's optimal mechanism (a) has allocation supported on $\{0,x\}$, for some $x \geq 0$, for all types below the threshold $s$ (\cref{lemma:fixedprice-reduction}), and (b) generates revenue at most that of the best posted-price mechanism for types above the threshold $s$. It is here where we use our assumption that the value distribution is either regular or DMR: either of these conditions suffices to control the virtual values of the distorted distribution, which drives our analysis. This partial characterization is enough to let us bound the revenue of the optimal mechanism by a convex combination of posted-price mechanisms, from which we conclude that the optimal revenue (over all possible thresholds) is realized at some posted-price mechanism.

\subsection{Related Literature}

Our results relate to classic economic models of duopoly, including Cournot, Stackelberg, and Bertrand competition.  Beyond the perfect competition scenario we consider, a wide variety of models of imperfect competition and/or product differentiation (such as Hotelling models) have been studied along with the resulting impact on seller revenues at equilibrium. A full discussion is beyond our scope; see~\cite{jehle2001advanced} (Chapter 4) for a classic treatment of perfect and imperfect competition models, and~\cite{vives1999oligopoly} for a survey and overview of model variations.

Our problem also builds on the growing literature on approximately optimal screening mechanisms.  In the case of a monopolist seller, it is classic result that lotteries are not needed to sell a single item optimally~\cite{mussa1978monopoly,Mye81}.  In contrast, a more recent literature on screening problems with multiple items have shown that lotteries can be necessary to achieve optimal revenue~\cite{Thanassoulis04, ManelliV07, Pavlov11, HartN13, DaskalakisDT14, HartR15, DaskalakisDT17}.  This has motivated the design and analysis of simple and approximately optimal auctions \cite{LiuP18, CaiDW16, CaiZ17, EdenFFTW17b, EdenFFTW17a, BrustleCWZ17, DevanurW17, FuLLT18, BeyhaghiW19, CaiS21}.  From this literature, closest to our work is~\cite{cai2024bundling} which considers a competitive scenario where a multi-item seller competes against multiple single-item sellers for an additive buyer.  They focus on a Stackelberg solution concept, and establish bounds on the revenue achievable by the multi-item principal seller. 

  %

There is also an economic literature on competition between mechanism designers who compete to attract bidders from a potential pool.  Champsaur and Rochet~\cite{champsaur1989multiproduct} study a model of sellers who compete for a single buyer via quality differentiation, in the style of Mussa and Rosen~\cite{mussa1978monopoly}.  Interpreting vertical differentiation as product quality means that the competition is exclusive: the buyer will choose either one seller or the other, and never visit the sellers sequentially (as may happen in our model).  In a model with multiple bidders and sellers, \cite{mcafee1993mechanism} considers a multi-round game in which the sellers do not fully account for the externality that their mechanism choice imposes on other sellers.  This yields a symmetric equilibrium in which all sellers employ a second-price auction with reserve equal to production cost (in our case, $0$).  This analysis was later extended to a limit market as the number of buyers and sellers grows large~\cite{peters1997competition}.  When sellers compete for buyers by simultaneously setting reserve prices in competing second-price auctions, the resulting equilibrium need not force reserve prices down to the efficient level~\cite{burguet1999imperfect}.  
Our focus is on a small market with screening in which a single buyer can engage with one or more sellers in sequence, rather than having to choose between them in coordination with other potential buyers.





Price competition between sellers is also well-studied in the algorithmic mechanism design literature, with a particular focus on imperfect competition with combinatorial structure.  This includes sellers who compete to sell to a buyer with combinatorial valuation~\cite{babaioff2014price} as well as sellers of identical goods but for whom only certain agents have access~\cite{chawla2008bertrand,babaioff2013bertrand}.  
Designing a best-response mechanism is also related to mechanism design with outside options and/or participation costs~\cite{celik2009optimal,gonczarowski2024revenue}, albeit with endogenous rather than exogenous outside option values (determined by the competing mechanism).  {Nevertheless, as with exogenous outside options, one can view the best-response problem as a form of ``one-and-a-half dimensional'' mechanism design~\cite{fiat2016fedex,DevanurW17}, and our technical approach employs ideas from that literature.}

\subsection{Roadmap}

We define our model formally in Section~\ref{sec:prelim}.  In Section~\ref{sec:stackelberg} we analyze the Stackelberg model and prove Theorem~\ref{thm:oneoptionalice-informal}.  Our main technical lemma, Lemma~\ref{lemma:fixedprice-informal}, is proven in Sections~\ref{sec:keylemmaproof} through~\ref{sec:keylemma.proof.end}.  We prove our lower bound (Theorem~\ref{thm:lowerbound}) in Section~\ref{sec:stackelberg.lowerbound}, and in Section~\ref{sec:oneoptionbob} we show that this bound is achievable if Bob cannot use randomization in his mechanism.  In Section~\ref{sec:fixedtype} we analyze pure Nash equilibria and prove Theorem~\ref{thm:nonash-informal}.  We conclude and discuss future directions in Section~\ref{sec:conclusion}.

\section{Model and Preliminaries}
\label{sec:prelim}

Two sellers, Alice ($\textsc{A}$) and Bob ($\textsc{B}$), are each selling an identical copy of a non-divisible good for which they have no value.  There is a single buyer with value $v \geq 0$ for one copy of the good.
The value $v$ is drawn from publicly known distribution $D$.  We assume $D$ is atomless with well-defined density, and write $F$ and $f$ for its CDF and pdf respectively. We write $\Gamma_D(z) = z(1-F(z))$ for the (monopolist) revenue curve for $D$.  We assume that $D$ has a revenue-maximizing price, meaning that $\max_z \Gamma_D(z)$ is obtained at some finite $z$.\footnote{The restriction to continuous distributions is for technical convenience; all of our results extend to discrete distributions, under an appropriate discrete analog of our regularity conditions~\cite{elkind2007designing}.}


\paragraph{Mechanisms.} Alice and Bob can each choose a \emph{mechanism} with which to sell their good.  A mechanism takes as input a message $\sigma \in \Sigma$ from the buyer, where $\Sigma$ is an arbitrary message space, then returns an outcome $(a,t) \in \mathcal{O} = \{0,1\} \times \mathbb{R}_{\geq 0}$, where $a \in \{0,1\}$ indicates whether the buyer receives a copy of the good and $t$ represents a payment from the buyer to the seller.  A mechanism is then formally represented by a measurable mapping $\mathcal{M} \colon \Sigma \to \Delta( \mathcal{O} )$ with compact image.\footnote{Compactness ensures that every buyer has a well-defined utility-maximizing message.}  We assume for technical convenience that every mechanism accepts a null message $\emptyset$, equivalent to non-participation, upon which it always returns outcome $(0,0)$.


\paragraph{Timing.} Informally, the timing of our market proceeds as follows.  The two sellers first select their mechanisms, either simultaneously or sequentially depending on the equilibrium notion (described below).  The buyer then observes both mechanisms, as well as realized value $v$, and chooses one of the sellers to approach first.  The buyer interacts with the chosen mechanism and observes the outcome.  The buyer can then choose to interact with the remaining seller's mechanism (or not), after which the game ends.  More formally:  
\begin{enumerate}
    \item Alice and Bob select their mechanisms, $\mathcal{M}_A$ and $\mathcal{M}_B$, either simultaneously (for Nash equilibrium) or sequentially (for Stackelberg equilibrium).  
    \item The buyer observes $\mathcal{M}_A$ and $\mathcal{M}_B$ as well as the realization of value $v$. 
    \item The buyer selects a seller $i \in \{A,B\}$ to approach first.  Write $-i$ for the remaining seller.
    \item The buyer sends a message $\sigma_i$ to $\mathcal{M}_i$, yielding an outcome $(a_i, t_i) \sim \mathcal{M}_i(\sigma_i)$.  
    \item Next, the buyer sends a message $\sigma_{-i}$ to $\mathcal{M}_{-i}$, yielding an outcome $(a_{-i}, t_{-i}) \sim \mathcal{M}_{-i}(\sigma_{-i})$.\footnote{Recall that $\sigma_{-i}$ can be the null message $\emptyset$, interpreted as non-participation.}
\end{enumerate}
All players have quasi-linear utilities.  Alice obtains utility $t_A$, Bob obtains utility $t_B$, and the buyer's utility is $v \times \max\{a_A, a_B\} - t_A - t_B$.  Our solution concept is subgame perfect equilibrium.


\paragraph{Equilibrium.} For the buyer's equilibrium play, note first that if $a_i = 1$ then it is utility-maximizing for the buyer to not participate in $\mathcal{M}_{-i}$.  An undominated strategy for the buyer can therefore be described as a tuple $(i,\sigma_i,\sigma_{-i}) \in \{A,B\}\times\Sigma^2$, where $\sigma_{-i}$ is the message to send to $\mathcal{M}_{-i}$ conditional on the event that $a_i=0$.  The buyer's equilibrium condition is that $(i,\sigma_i,\sigma_{-i})$ maximizes expected buyer utility given $\mathcal{M}_A$ and $\mathcal{M}_B$.  
%
We will assume that the buyer breaks ties first in favor of maximizing their expected allocation from Bob, then in favor of maximizing their expected allocation from Alice.
This determines the expected allocations and transfers obtained at equilibrium given $\mathcal{M}_A$ and $\mathcal{M}_B$.  We can therefore define $\rev_{ \mathsf{A}, \paren*{ \mathcal{M}_A, \mathcal{M}_B } }\paren*{ D }$ to denote the expected payment the buyer makes to Alice when his valuation is sampled from $D$ and Alice's and Bob's mechanism are $\mathcal{M}_A$ and $\mathcal{M}_B$ respectively. Similarly, $\rev_{ \mathsf{B}, \paren*{ \mathcal{M}_A, \mathcal{M}_B } }\paren*{ D }$ will be the corresponding expected revenue for Bob.  Likewise, we define $x_{ \mathsf{A}, \paren*{ \mathcal{M}_A, \mathcal{M}_B } }\paren*{ D }$ and $x_{ \mathsf{B}, \paren*{ \mathcal{M}_A, \mathcal{M}_B } }\paren*{ D }$ for the expected allocation from Alice and Bob, respectively, given $\mathcal{M}_A$ and $\mathcal{M}_B$.

For the sellers' equilibrium play, we consider two different equilibrium concepts that correspond to subgame perfect equilibrium under two different timing models.  At \emph{(pure) Nash Equilibrium}, Alice and Bob choose their mechanisms simultaneously and in best response to each other.  Formally, the pair of mechanisms $(\mathcal{M}_A, \mathcal{M}_B)$ form a pure Nash equilibrium if $\mathcal{M}_A \in \argmax_{\mathcal{M}} \rev_{ \mathsf{A}, \paren*{\mathcal{M}, \mathcal{M}_B} }\paren*{ D }$ and $\mathcal{M}_B \in \argmax_{\mathcal{M}} \rev_{ \mathsf{B}, \paren*{\mathcal{M}_A, \mathcal{M}} }\paren*{ D }$.  We note that a pure Nash equilibrium is guaranteed to exist: for example, any pair of mechanisms that each have outcome $(1,0)$ in their image (i.e., giving away the item for free)  will form a pure Nash equilibrium.

At \emph{Stackelberg equilibrium}, Alice selects her mechanism first, and Bob then chooses his mechanism after observing Alice's choice.  Formally, Bob's strategy is a mapping $\mathbb{M}$ from Alice's choice of mechanism to a mechanism for Bob, and Alice's strategy is a mechanism $\mathcal{M}_A$.  These strategies form a Stackelberg equilibrium if (a) $\mathbb{M}(\mathcal{M}') \in \argmax_{\mathcal{M}}\rev_{ \mathsf{B}, \paren*{\mathcal{M}', \mathcal{M}} }\paren*{ D }$ for all mechanisms $\mathcal{M}'$ (Bob best-responds to any possible mechanism $\mathcal{M}'$ chosen by Alice) and $\mathcal{M}_A \in \argmax_{\mathcal{M}}\rev_{ \mathsf{A}, \paren*{\mathcal{M}, \mathbb{M}(\mathcal{M})} }\paren*{ D }$ (Alice is choosing the utility-maximizing mechanism in anticipation of Bob's best response).  In a slight abuse of notation, we will sometimes say that mechanisms $(\mathcal{M}_A, \mathcal{M}_B)$ are in Stackelberg equilibrium if $\mathcal{M}_B = \mathbb{M}(\mathcal{M}_A)$; i.e., when $\mathcal{M}_B$ is the on-path mechanism chosen by Bob at Stackelberg equilibrium.


\paragraph{Pricing Functions and the Taxation Principle.} 
We observe an analogue of the taxation principle: any mechanism can equivalently be defined by a pricing function.
Formally, a \emph{pricing} mechanism is described by a pricing function $M \colon [0,1] \to \mathbb{R}_{\geq 0} \cup \{\infty\}$.  The input to the mechanism is a choice of $x \in [0,1]$, which results in payment $M(x)$ and the buyer receiving the good with probability $x$.  We say that a pricing mechanism is \emph{proper} if (a) $M(0) = 0$, (b) $M$ is weakly increasing, and (c) $M$ is continuous and weakly convex on interval $[0,\bar{x}]$, where $\bar{x} = \sup \{ x \colon M(x) < \infty \}$ is the maximum probability offered at a finite price.

\begin{observation}[Taxation Principle]
\label{obs:taxation}
    For any mechanism $\mathcal{M}$ for either seller there is a strategically equivalent proper pricing mechanism. In the corresponding pricing rule, $M(x)$ is the minimum expected payment over any distribution of buyer messages that results in an allocation with probability at least $x$ (or $M(x) = \infty$ if no such message exists).\footnote{That $M(0) = 0$ follows because of the assumed null message.  That $M$ is non-decreasing follows from free disposal.  That $M$ is continuous and weakly convex follows because, for any $x_1,x_2$ with $M(x_1), M(x_2) < \infty$ and any $\lambda \in (0,1)$, the buyer could randomize between whichever messages yield expected allocations $x_1$ and $x_2$ to obtain expected allocation $\lambda x_1+(1-\lambda)x_2$ at an expected price of $\lambda M(x_1)+(1-\lambda)M(x_2)$, and hence $M$ is equal to its own lower convex envelope.}
\end{observation}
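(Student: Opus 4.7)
The plan is to take the formula in the statement as the definition of $M$, verify the four ``proper'' properties in turn, and then deduce strategic equivalence by observing that each player's payoff in every subgame depends on each mechanism's outcome only through the marginals $\mathbb{E}[a]$ and $\mathbb{E}[t]$. Formally, for $x \in [0,1]$ I set $M(x) = \inf \mathbb{E}[t]$ taken over all (possibly randomized) buyer message strategies $\sigma$ for which $\mathcal{M}(\sigma)$ induces expected allocation at least $x$, with $M(x) = \infty$ when no such strategy exists, and let $\bar{x} = \sup\{x : M(x) < \infty\}$.

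The first three properness properties are essentially routine. $M(0) = 0$ is witnessed by the assumed null message, and monotonicity follows because any $\sigma$ feasible for $x'$ is also feasible for every $x \leq x'$. For convexity on $[0,\bar{x}]$, given near-optimal $\sigma_1, \sigma_2$ for $x_1, x_2$ and any $\lambda \in (0,1)$, the mixture $\lambda \sigma_1 + (1-\lambda) \sigma_2$ is feasible for $\lambda x_1 + (1-\lambda) x_2$ with expected payment $\lambda M(x_1) + (1-\lambda) M(x_2)$ up to an arbitrarily small $\varepsilon$, giving the convex-envelope inequality. Convexity already yields continuity in the interior of $[0,\bar{x}]$; the compact image assumption on $\mathcal{M}$ is what lets me promote infima to minima and conclude continuity at the endpoint $\bar{x}$ as well.

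For strategic equivalence, the crucial observation is that the buyer's total utility $v \max\{a_A, a_B\} - t_A - t_B$, whichever sequential order is used, and each seller's revenue, depend on each mechanism's realized outcome only through the marginal distribution of $a \in \{0,1\}$ (a Bernoulli fully determined by its mean) together with the scalar $\mathbb{E}[t]$. So I would verify bidirectional matching of the set of attainable marginal pairs at each mechanism: any $\sigma$ in $\mathcal{M}$ realizes some $(\mathbb{E}[a], \mathbb{E}[t])$ with $\mathbb{E}[t] \geq M(\mathbb{E}[a])$; conversely, for each $x \in [0,\bar{x}]$ the pricing mechanism directly attains $(x, M(x))$, and compactness produces a $\sigma^*$ in $\mathcal{M}$ achieving the same pair. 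Any strategy in $\mathcal{M}$ with $\mathbb{E}[t] > M(\mathbb{E}[a])$ is weakly dominated for the buyer by such a $\sigma^*$, so the sets of undominated marginal outcomes agree and all subgame-perfect values coincide.

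The main obstacle, as I see it, is making ``strategic equivalence'' precise across the two-stage adaptive structure, because the realization of $a_i$ at the first mechanism enters the buyer's subsequent choice and the joint law of $(a_i, t_i)$ is a priori richer in $\mathcal{M}$ than in its pricing counterpart. This is resolved by the Bernoulli-marginal observation above: since $a \in \{0,1\}$, the only aspects of the joint law of $(a,t)$ that affect the downstream game are the marginal of $a$ and the expectation of $t$, both of which are preserved by the construction.
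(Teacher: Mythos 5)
Your proposal is correct and takes essentially the same approach as the paper: both identify the key structural fact that, because the buyer is risk-neutral with quasi-linear utility and visits the two mechanisms sequentially, each seller's revenue and the buyer's utility in every subgame depend on each mechanism's outcome only through the marginal allocation probability (Bernoulli mean, since $a \in \{0,1\}$) and expected transfer, and then match the sets of attainable $(\mathbb{E}[a],\mathbb{E}[t])$ pairs, discarding dominated strategies. The only small differences are that you make the role of the compact-image assumption explicit (promoting infima to minima and handling continuity at $\bar{x}$), while the paper treats properness via the footnote and performs the swap to pricing mechanisms one seller at a time before invoking symmetry; neither difference alters the substance.
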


See \cref{sec:prelim-app} for the proof of \cref{obs:taxation}. Given \cref{obs:taxation}, we will henceforth think of the sellers as choosing pricing functions.  We will tend to write $A \colon [0,1] \to \mathbb{R} \cup \{\infty\}$ to denote the price function selected by Alice, and similarly $B$ for the corresponding function for Bob.  We allow pricing functions to stand in for mechanisms in all of our notation; e.g., $\rev_{ \mathsf{A}, (A,B)}(D)$ denotes Alice's expected revenue when the sellers choose pricing functions $A$ and $B$ respectively. We will sometimes omit the pricing functions and/or value distribution from our notation when clear from context.



\paragraph{Classes of Mechanisms.} We say that a mechanism is a \emph{single-lottery mechanism} if it offers the buyer a single lottery option, which provides a probability $z \in [0,1]$ to obtain the item.  The expected payment from participating in the lottery is $pz$ for some $p \geq 0$, which can be interpreted as either a payment of $p$ conditional on winning the lottery or an up-front payment of $pz$ to participate in the lottery. The buyer can either accept this single lottery (paying the appropriate price), or obtain nothing at a payment of $0$.  This corresponds to a price function $M$ of the form $M(x) = px$ for all $x \in [0,z]$ and $M(x) = \infty$ for $x > z$.  We say this is the lottery mechanism with probability $z$ and price $p$.  In the special case where $z = 1$, we say that the mechanism is a \emph{fixed-price} (or \emph{posted-price}) mechanism, which corresponds to offering the item at a take-it-or-leave-it price of $p$. 
\begin{definition}
\label{def:fixedprice}
For $p \geq 0$, the fixed price mechanism $\fp_p$ with price $p$ is defined to be so that for all $x \in [0, 1]$, we have $\fp_p(x) = px$. We say that a mechanism $M$ is fixed price if $M = \fp_p$ for some $p \geq 0$, in which case we refer to $p$ as the price of mechanism $M$.
\end{definition}

\paragraph{Monopolist Revenue and Virtual Values}
We now recall Myerson's characterization of the revenue-optimal monopolist mechanism~\cite{Mye81}, which will be useful in our proofs. Specifically, we consider a single seller selling to a single buyer whose valuation $v \geq 0$ is sampled from a distribution $D$ with probability density function $f$. As above, a mechanism for this setting is represented by a proper pricing function $M : [0, 1] \to \mathbb{R} \cup \{\infty\}$.
Faced with this mechanism, the buyer will try to maximize his utility, which means that he will buy the item with probability $\mathtt{z}_M(v) = \argmax_z \paren*{ z v - M(z) }$ at a price of $M\paren*{ \mathtt{z}_M(v) }$.\footnote{Note that properness implies that $\mathtt{z}_M(v)$ is well-defined for all $v$ since the maximum is attained for some $z \in [0, 1]$.} We will also adopt the convention that $\mathtt{z}_M\paren*{ \infty } = 1$. This means the seller's revenue, denoted $\rev_M\paren*{ D }$, is
\[
\rev_M\paren*{ D } = \int_0^{ \infty } f(v) \cdot M\paren*{ \mathtt{z}_M(v) } \diff v .
\]
The function $\mathtt{z}_M(v)$ is often referred to as the allocation rule for $M$.  By Myerson's revenue characterization,  $\mathtt{z}_M(v)$ is non-decreasing and $\rev_M\paren*{ D }$ equals the expected virtual welfare, defined as $\varphi_D\paren*{ v } = v - \frac{ 1 - F(v) }{ f(v) }$.
\begin{proposition}[Revenue equals expected virtual welfare~\cite{Mye81}]
\label{prop:revvv}
For any distribution $D$ and any mechanism $M$, it holds that $\mathtt{z}_M(v)$ is monotone non-decreasing and
\[
\rev_M\paren*{ D } = \int_0^{ \infty } \mathtt{z}_M(v) \cdot \paren*{ v \cdot f(v) - \paren*{ 1 - F(v) } } \diff v .
\]
\end{proposition}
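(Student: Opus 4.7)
The plan is to follow the standard Myerson approach: establish monotonicity of $\mathtt{z}_M(v)$ from a revealed-preference exchange argument, rewrite the per-type payment using the buyer's indirect utility, and then swap the order of integration in the utility term to expose virtual values.

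For monotonicity, I would fix $v_1 < v_2$, set $z_i = \mathtt{z}_M(v_i)$, and use that each $z_i$ is utility-maximizing at valuation $v_i$ to obtain $v_1 z_1 - M(z_1) \ge v_1 z_2 - M(z_2)$ together with $v_2 z_2 - M(z_2) \ge v_2 z_1 - M(z_1)$. Adding these two inequalities yields $(v_2 - v_1)(z_2 - z_1) \ge 0$, so $\mathtt{z}_M$ is monotone non-decreasing.

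For the revenue identity, let $u(v) = \max_z \paren*{vz - M(z)} = v\,\mathtt{z}_M(v) - M\paren*{\mathtt{z}_M(v)}$ denote the buyer's indirect utility. As a supremum of affine-in-$v$ functions, $u$ is convex, and $u(0) = 0$ since $M(0) = 0$. A standard envelope argument---using convexity of $u$ together with the fact that $\mathtt{z}_M(v)$ is a subgradient of $u$ at every $v$---shows that $u$ is absolutely continuous with $u'(v) = \mathtt{z}_M(v)$ almost everywhere, so $u(v) = \int_0^v \mathtt{z}_M(s)\diff s$. Substituting $M\paren*{\mathtt{z}_M(v)} = v\,\mathtt{z}_M(v) - u(v)$ into the definition of $\rev_M(D)$ and swapping the order of integration in the $u(v)$ term via Tonelli yields
\[
\int_0^\infty f(v)\, u(v)\diff v = \int_0^\infty \mathtt{z}_M(s)\paren*{1 - F(s)}\diff s,
\]
and collecting terms produces the stated formula.

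The main obstacle is the technical care needed in the envelope step: $M$ need not be differentiable (for instance, for a single-lottery menu $M$ jumps to $\infty$ at $z$), so one cannot simply differentiate $M\paren*{\mathtt{z}_M(v)}$. I would get around this by working with the convex, monotone indirect utility $u$ rather than with $M$ itself: monotone functions are differentiable almost everywhere, and the subgradient correspondence of $u$ pins $u'$ down to equal $\mathtt{z}_M$ at every point of differentiability. Integrability at infinity is then harmless under the paper's standing assumption that $\max_z \Gamma_D(z)$ is attained at some finite $z$, which forces $v(1 - F(v)) \to 0$ along a sequence and thus makes the tail contributions of both integrals finite.
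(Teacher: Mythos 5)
Your proof is correct, and it is the standard Myerson envelope-theorem argument. For the record, the paper does not supply its own proof of this proposition — it cites it directly to Myerson~\cite{Mye81} — so there is no alternative route in the paper to compare against. Both steps of your argument are sound: the revealed-preference exchange gives $(v_2-v_1)(z_2-z_1)\ge 0$ for any selection of maximizers, which yields monotonicity; and working through the indirect utility $u(v)=\max_z(vz-M(z))$ rather than through $M$ itself correctly sidesteps the non-differentiability of the menu (relevant here, since the paper's pricing functions can be $+\infty$ beyond some $\bar x$). One minor simplification you could note: the Tonelli swap requires only non-negativity of $\mathtt{z}_M$, so it is valid unconditionally (both sides may be interpreted in $[0,\infty]$); the appeal to $\max_z \Gamma_D(z)$ being attained is only needed if you want to assert that the resulting quantities are finite, not to justify the interchange.
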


\begin{definition}\label{def:regularity}
For distribution $D$ with virtual welfare function $\phi_D$ and density function $f$:
\begin{itemize}
    \item $D$ is \emph{regular} if $\varphi_D(v)$ is non-decreasing.
    \item $D$ has \emph{diminishing marginal revenue} (DMR) if $\varphi_D(v)f(v)$ is non-decreasing.
\end{itemize}
\end{definition}

\section{The Stackelberg Setting}
\label{sec:stackelberg}


In this section, we consider the Stackelberg setting and prove \cref{thm:oneoptionalice-informal}. Fix 
distribution $D$ from which the buyer's valuations are drawn and recall that $\Gamma_D\paren*{ v } = v \cdot \paren*{ 1 - F(v) }$ is the revenue curve of $D$. Throughout, we let $\mathtt{v} = \argmax_v \Gamma_D(v)$ be the Myerson price of the distribution $D$. We show that \cref{thm:oneoptionalice-informal} holds even if Alice's mechanism is a single lottery mechanism.
\begin{theorem}[Formal version of \cref{thm:oneoptionalice-informal}]
\label{thm:oneoptionalice-formal}
Let $D$ be a distribution that satisfies either regularity or DMR and let $\mathtt{v} = \argmax_v \Gamma_D(v)$ be the corresponding Myerson price\footnote{We break ties in favor of larger values.}. Let $p \leq \mathtt{v}$ be the smallest such that\footnote{$p$ is guaranteed to exist since $\Gamma_D\paren*{ 0 } = 0$ and any discontinuities in $\Gamma_D$ must be downward discontinuities. Together with our choice of the smallest $p$, this also implies that $\Gamma_D\paren*{ q } \leq \Gamma_D\paren*{ p }$ for all $0 \leq q \leq p$.} $\Gamma_D\paren*{ p } = \frac{ \Gamma_D\paren*{ \mathtt{v} } }{2}$ and suppose Alice's mechanism $A$ is a single lottery, of price $p$, that allocates with probability $z = \frac{1}{2}$. Then there exists a best-response mechanism $B$ for Bob 
such that
\[
\rev_{ \mathsf{A}, \paren*{ A, B } }\paren*{ D } \geq \frac{ \Gamma_D\paren*{ \mathtt{v} } }{4} \hspace{1cm}\text{and}\hspace{1cm} \rev_{ \mathsf{B}, \paren*{ A, B } }\paren*{ D } \geq \frac{ \Gamma_D\paren*{ \mathtt{v} } }{2} .
\]
Moreover, by decreasing $p$ marginally, we can approach the above inequalities for \emph{any} choice of best-response mechanism for Bob.
\end{theorem}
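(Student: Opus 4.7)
The plan is to invoke the Key Lemma to reduce Bob's search to posted-price mechanisms, then do a case analysis on the sign of $q-p$ to compute both players' revenues in each case. By \cref{lemma:fixedprice-informal}, I may assume Bob's best response is $\fp_q$ for some $q \geq 0$. I would then enumerate the buyer's four candidate strategies as a function of his value $v$: (i) abstain, with utility $0$; (ii) participate in Alice's lottery only, with utility $(v-p)/2$ when $v \geq p$; (iii) buy from Bob only, with utility $v - q$ when $v \geq q$; and (iv) try Alice first and then go to Bob upon losing, with utility $(v-p)/2 + (1/2)\max(v-q,0)$ when $v \geq p$. A direct calculation shows that for $v \geq \max(p,q)$, the difference (iv) minus (iii) equals $(q-p)/2$, so the sign of $q-p$ together with the tie-breaking rule favoring Bob's allocation on utility ties fully determines the buyer's behavior.

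This leads to three sub-cases. When $q > p$, buyers with $v \geq q$ strictly prefer (iv), buyers with $p \leq v < q$ take (ii) because Bob is unaffordable, and buyers with $v < p$ abstain; summing contributions gives $\rev_{\mathsf{A}} = (p/2)(1-F(p)) = \Gamma_D(p)/2$ and $\rev_{\mathsf{B}} = (q/2)(1-F(q)) = \Gamma_D(q)/2$. When $q = p$, utilities tie for $v \geq p$ but (iii) gives Bob allocation $1$ while (iv) gives $1/2$, so the buyer plays (iii), yielding $\rev_{\mathsf{A}} = 0$ and $\rev_{\mathsf{B}} = \Gamma_D(p)$. When $q < p$, (iii) strictly beats (iv) for $v \geq p$ and (ii) is infeasible for $q \leq v < p$, so $\rev_{\mathsf{A}} = 0$ and $\rev_{\mathsf{B}} = \Gamma_D(q)$. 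Over all three cases, Bob's revenue is bounded by $\Gamma_D(\mathtt{v})/2$: attained at $q = \mathtt{v}$ in Case $q > p$, equal to $\Gamma_D(p) = \Gamma_D(\mathtt{v})/2$ in Case $q = p$, and at most $\Gamma_D(p) = \Gamma_D(\mathtt{v})/2$ for $q < p$ by the minimality-of-$p$ property stated in the theorem's footnote. Choosing the best response $B = \fp_{\mathtt{v}}$ therefore gives $\rev_{\mathsf{A}} = \Gamma_D(p)/2 = \Gamma_D(\mathtt{v})/4$ and $\rev_{\mathsf{B}} = \Gamma_D(\mathtt{v})/2$, proving the main inequalities.

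For the moreover clause, I would perturb Alice's price to $p' = p - \varepsilon$ for a small $\varepsilon > 0$. Minimality of $p$ yields $\Gamma_D(p') < \Gamma_D(\mathtt{v})/2$ strictly, so the Case $q \leq p'$ regime gives Bob strictly less than $\Gamma_D(\mathtt{v})/2$; hence every best response of Bob lies in the Case $q > p'$ regime with $\Gamma_D(q) = \Gamma_D(\mathtt{v})$, and each such response yields $\rev_{\mathsf{A}} = \Gamma_D(p')/2 \to \Gamma_D(\mathtt{v})/4$ as $\varepsilon \to 0^+$. The real substance of the theorem is the Key Lemma, which I treat as a black box; conditional on it, the remaining work is elementary case analysis whose only subtle point is the tie-breaking at the boundary $q = p$, a subtlety that the moreover clause is precisely designed to circumvent.
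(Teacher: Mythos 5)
Your main argument is correct and follows essentially the same route as the paper: reduce to posted-price responses via the Key Lemma, compute $\rev_{\mathsf{B},(A,\fp_q)}(D)$ explicitly as $\Gamma_D(q)$ for $q\le p$ and $\Gamma_D(q)/2$ for $q>p$ (matching the paper's \cref{lemma:revfixedprice}), observe that $q=\mathtt{v}$ is a maximizer, and note that buyers with $v\ge p$ pay $p/2$ to Alice under that response. Your explicit enumeration of the buyer's four strategies and the $(q-p)/2$ comparison is a cleaner and more self-contained way of arriving at the same revenue table than the paper's citation of auxiliary lemmas, so that part is a genuine (if modest) improvement in readability.

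The gap is in the ``moreover'' clause. The Key Lemma's basic statement only gives you that \emph{there exists} a posted-price best response; it does not say that \emph{every} best response is a posted price. Your sentence ``hence every best response of Bob lies in the Case $q>p'$ regime'' silently assumes exactly that, because the three regimes you defined are all posted-price regimes. But the theorem's moreover clause demands the bound for \emph{any} best-response mechanism $B$, including non-posted-price ones, and your case analysis never touches those. This is precisely what the ``moreover'' clause of \cref{lemma:fixedprice-formal} is for: it says that if the posted-price landscape strictly separates (i.e.\ $\max_{\hat p\le p}\rev_{\mathsf{B},(A,\fp_{\hat p})}(D) < \rev_{\mathsf{B},(A,\fp_{\mathtt v})}(D)$, which your perturbation arranges) and some buyer with $v\ge p$ pays less than $pz$ to Alice, then $B$ is strictly dominated by $\fp_{\mathtt v}$ and hence cannot be a best response. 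Taking the contrapositive is what forces \emph{every} best response, posted-price or not, to charge Alice's lottery to all buyer types $v\ge p'$. You need to invoke that half of the Key Lemma explicitly; the limiting computation $\Gamma_D(p')/2\to\Gamma_D(\mathtt v)/4$ you wrote is then fine.
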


Observe that \cref{thm:oneoptionalice-formal} implies that the best Stackelberg equilibrium for Alice gives her revenue at least $\frac{ \Gamma_D\paren*{ \mathtt{v} } }{4}$. Note also that while \cref{thm:oneoptionalice-formal} establishes a revenue guarantee for Bob when Alice chooses the described mechanism, it does not rule out the possibility that there are mechanisms for Alice that give her a higher revenue at the cost of giving Bob a significantly lower revenue when he is best-responding. Establishing a lower bound on Bob's revenue at exact Stackelberg equilibrium is left as an open question.

The key lemma in our proof of \cref{thm:oneoptionalice-formal} is \cref{lemma:fixedprice-informal}, which is a structural result about Bob's best response mechanism. It says that when Alice uses a single lottery mechanism, then Bob has a best response that is a fixed price (take-it-or-leave-it) mechanism.

\begin{lemma}[Formal version of \cref{lemma:fixedprice-informal}]
\label{lemma:fixedprice-formal}
Let $A$ be a single lottery mechanism for Alice with probability $z \in [0, 1]$ and price $p \geq 0$. For all mechanisms $B$ for Bob, there exists $q \geq 0$ for which:
\[
\rev_{ \mathsf{B}, \paren*{ A, B } }\paren*{ D } \leq \rev_{ \mathsf{B}, \paren*{ A, \fp_q } }\paren*{ D } .
\]
Moreover, if $\max_{ \hat{p} \leq p } \rev_{ \mathsf{B}, \paren*{ A, \fp_{ \hat{p} } } }\paren*{ D } < \rev_{ \mathsf{B}, \paren*{ A, \fp_{ \mathtt{v} } } }\paren*{ D }$ and there exists $v \geq p$ that pays $ < pz$ to Alice, then the inequality is strict.
\end{lemma}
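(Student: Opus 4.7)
The plan is to follow the three-region framework sketched in the introduction. First I would characterize the buyer's behavior against any proper pricing mechanism $B$ chosen by Bob. Writing $U_B(v) = \max_x\{xv - B(x)\}$ for the buyer's utility from Bob alone, the expected utility from ``Alice first, then Bob if the lottery loses'' is $z(v-p) + (1-z)U_B(v)$ when $v \geq p$, while the expected utility from ``Bob first, then Alice if Bob's allocation fails'' is $\max_x\bigl\{xv - B(x) + (1-x)\max(z(v-p),0)\bigr\}$. The convexity of $U_B$ and a single-crossing comparison of these two expressions in $v$ yield a threshold $s \geq p$ such that types $v < p$ never engage with Alice, types $v \in [p,s)$ visit Bob before Alice, and types $v \geq s$ visit Alice before Bob. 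This threshold is endogenous to Bob's mechanism.

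Next, I would treat $s$ as a free parameter and optimize over all Bob mechanisms $B$ that induce this threshold. Bob's revenue decomposes as
\[
\rev_{\mathsf{B},\paren*{A,B}}\paren*{D} = \int_0^{s} f(v)\,\tau_B(v)\diff v + (1-z)\int_s^{\infty} f(v)\,\widetilde{\tau}_B(v)\diff v,
\]
where $\tau_B(v)$ and $\widetilde{\tau}_B(v)$ denote Bob's expected payment from a type-$v$ buyer in the lower and upper regions respectively. For types $v \geq s$, who arrive at Bob having already lost Alice's lottery and thus carry no outside option, Bob's problem is a standard Myerson monopoly problem on $D$ truncated to $[s,\infty)$, whose revenue is bounded by that of a posted-price mechanism. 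For types $v < s$, who visit Bob first with outside option $(1 - \mathtt{z}_B(v))\cdot\max(z(v-p),0)$ from visiting Alice afterward, Bob faces a 1.5-dimensional screening problem with an endogenous outside option. I would absorb this outside option into a distorted value distribution via the auxiliary-distribution construction (\cref{lemma:oneseller}) and then apply the bottom-proper reduction (\cref{lemma:fixedprice-reduction}). The regularity or DMR hypothesis on $D$ is exactly what ensures that the virtual values of the distorted distribution are monotone enough for this reduction to yield an optimal allocation on $[0, s)$ supported on $\{0, x\}$ for a single $x$, which in turn is bounded by the revenue of a posted-price mechanism.

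Combining the two regions, Bob's revenue under any $s$-consistent mechanism is at most a convex combination of the revenues of two posted-price mechanisms, which is itself dominated by the better of the two. Optimizing over $s$ then produces some single posted price $\fp_q$ that dominates $B$, establishing the main inequality. For the strict-inequality clause, the hypothesis $\max_{\hat p \leq p}\rev_{\mathsf{B},\paren*{A,\fp_{\hat p}}}\paren*{D} < \rev_{\mathsf{B},\paren*{A,\fp_{\mathtt v}}}\paren*{D}$ forces the dominating $q$ to strictly exceed $p$; combined with the existence of a type $v \geq p$ that pays strictly less than $pz$ to Alice under $B$, this slack can be rerouted through Bob at the higher price $\mathtt v$, making at least one step in the reduction strict. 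The main obstacle is the endogeneity of $s$: because $s$ is jointly determined by $B$, it cannot be treated as an exogenous parameter. The workaround is to introduce $s$ with a self-consistency constraint and reduce Bob's problem to a family of distorted monopoly problems, and it is precisely at this step that the regularity or DMR assumption on $D$ is essential for controlling the virtual values of the distorted distribution.
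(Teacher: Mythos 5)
Your proposal follows the paper's proof essentially step for step: you establish the three-region threshold structure, pass to the auxiliary distribution $D_s$ (\cref{lemma:oneseller}), reduce an arbitrary $B$ to a bottom-proper one (\cref{lemma:fixedprice-reduction}), invoke regularity or DMR to control the virtual values of the distorted distribution, and bound the resulting revenue by a convex combination of two posted-price revenues before taking the better of the two. The only inaccuracy is cosmetic: for types above $s$ the auxiliary distribution has density scaled by $1-z$ rather than literally truncating $D$ to $[s,\infty)$, but the virtual-value sign argument you would run is the same, and the rest of your outline, including the handling of the strict-inequality clause, matches the paper's argument.
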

Note that if the Myerson price $\mathtt{v} = \argmax_v \Gamma_D(v)$ of the distribution $D$ is at most $p$, Bob can just use a fixed price mechanism with price $\mathtt{v}$ and get the optimal monopolist revenue.\footnote{Since $\mathtt{v} \leq p$, no buyer would visit Alice's mechanism before Bob's.} Thus, it suffices to show \cref{lemma:fixedprice-formal} in the case $\mathtt{v} > p$ and we make this assumption for the rest of this section. Before proving \cref{lemma:fixedprice-formal}, we show how it can be used to prove \cref{thm:oneoptionalice-formal}.
\begin{proof}[Proof of \cref{thm:oneoptionalice-formal}]
As the theorem is trivial otherwise, we can assume without loss of generality that $\Gamma_D\paren*{ \mathtt{v} } > 0$ implying that $0 < p < \mathtt{v}$. For all $q \geq 0$, consider the revenue obtained by Bob when he responds with the mechanism $\fp_q$. As formalized in \cref{lemma:revfixedprice} later, if $q \leq p$ then responding with $\fp_q$ makes all buyers purchase from Bob, implying that he gets revenue $\Gamma_D(q)$. On the other hand, if $q > p$, then Alice is offering a cheaper price and the buyer only purchases from Bob if they do not get the item from Alice, which happens with probability $1 - z = \frac{1}{2}$, implying that Bob's revenue is $\frac{ \Gamma_D(q) }{2}$. Combining, we get:
\[
\rev_{ \mathsf{B}, \paren*{ A, \fp_q } }\paren*{ D } = \begin{cases}
\Gamma_D(q) , &\text{~if~} q \leq p \\
\frac{ \Gamma_D(q) }{2} , &\text{~if~} q > p 
\end{cases} .
\]
As $\Gamma_D\paren*{ q } \leq \Gamma_D\paren*{ p }$ for all $0 \leq q \leq p$, Bob's revenue in the first case is at most $\Gamma_D\paren*{ p } = \frac{ \Gamma_D\paren*{ \mathtt{v} } }{2}$. Using this and the definition of $\mathtt{v}$, observe that setting $q = \mathtt{v}$ maximizes Bob's revenue. From \cref{lemma:fixedprice-formal}, observe that setting $B = \fp_{ \mathtt{v} }$ maximizes $\rev_{ \mathsf{B}, \paren*{ A, B } }\paren*{ D }$. We prove the theorem for this mechanism $B$. That $\rev_{ \mathsf{B}, \paren*{ A, B } }\paren*{ D } \geq \frac{ \Gamma_D\paren*{ \mathtt{v} } }{2}$ is immediate from the foregoing equation, and we only show that $\rev_{ \mathsf{A}, \paren*{ A, B } }\paren*{ D } \geq \frac{ \Gamma_D\paren*{ \mathtt{v} } }{4}$.

For this, we show that any buyer with valuation $v \geq p$ pays $p/2$ to Alice. When $v < \mathtt{v}$, this is because the buyers only purchase from Alice and pay her $pz = p/2$. When $\mathtt{v} \leq v$, this is because the buyer purchase from Alice first before purchasing from Bob (as $p < \mathtt{v}$ implies that she is offering the item for cheaper) and she will get revenue $pz = p/2$. Thus, we have: 
\[
\rev_{ \mathsf{A}, \paren*{ A, B } }\paren*{ D } \geq \frac{p}{2} \cdot \paren*{ 1 - F\paren*{ p } } = \frac{ \Gamma_D\paren*{ p } }{ 2 } = \frac{ \Gamma_D\paren*{ \mathtt{v} } }{ 4 } .
\]

For the ``moreover'' part, note that if $p$ is decreased marginally, then the ``moreover'' part of \cref{lemma:fixedprice-formal} implies that any best-response $B$ for Bob satisfies that any buyer with type $v \geq p$ pays at least $p/2$ to Alice, and the result follows. 
\end{proof}

We also note that the factor of $4$ obtained above is tight over the choice of single-lottery mechanism for Alice.  This is true even if we don't restrict Bob to using a fixed-price mechanism.

\begin{theorem}
\label{thm:1/4tight}
Let the distribution $D$ be a point mass at $1$ and let $A$ be a single lottery mechanism for Alice with probability $z \in [0, 1]$ and price $p \geq 0$. For all mechanisms $B$ for Bob maximizing $\rev_{ \mathsf{B}, \paren*{ A, B } }\paren*{ D }$, it holds that $\rev_{ \mathsf{A}, \paren*{ A, B } }\paren*{ D } \leq \frac{1}{4}$.
\end{theorem}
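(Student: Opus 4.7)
The plan is to exploit the extreme simplicity of the single-type setting to reduce Bob's problem to a single-option menu and then directly compute $(\rev_{\mathsf{A}}, \rev_{\mathsf{B}})$ as a function of the two parameters $p, z$ chosen by Alice. First, if $p > 1$ the value-$1$ buyer earns negative utility from Alice's lottery under any visit order, so never participates with Alice, yielding $\rev_{\mathsf{A}} = 0$; I therefore assume $p \leq 1$ in the remainder.

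By \cref{obs:taxation} Bob's mechanism is a proper pricing function $B$, and since the buyer is a single type only the entry $(x_B^*, B(x_B^*))$ the buyer selects matters; so without loss of generality Bob's mechanism is a single non-null option $(y, q)$ with $q \leq y$ (so the buyer is willing to participate). Writing the buyer's expected utility under the two visit orders,
\begin{align*}
U_{\text{A-first}} &= z(1-p) + (1-z)(y-q), \\
U_{\text{B-first}} &= (y - q) + (1-y)\, z(1-p),
\end{align*}
a direct subtraction gives $U_{\text{A-first}} - U_{\text{B-first}} = z(q - py)$. Hence the buyer strictly prefers Alice first iff $q > py$, and the tie-breaking rule sends the equality case to Bob first (the Bob-first strategy has the strictly larger Bob-allocation $y$ versus $(1-z)y$).

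I then compute Bob's best response in each regime. In the Bob-first regime ($q \leq py$) Bob collects $q$, maximized at $y = 1$, $q = p$ for a revenue of $p$ and Alice revenue $0$ (since $y = 1$ forecloses Alice entirely). In the Alice-first regime ($q > py$) Bob only collects after Alice's lottery fails, so his revenue is $(1-z) q$, maximized (when $p<1$) at $y=1$, $q=1$ for a revenue of $1-z$ and Alice revenue $pz$. Thus Bob's best-response revenue equals $\max(p, 1-z)$, and at any best response Alice earns either $0$ or $pz$. Taking the worst-case best response for the purposes of upper-bounding Alice (which matters only at the tie $p = 1-z$) gives $\rev_{\mathsf{A}} \leq pz$ when $p + z \leq 1$ and $\rev_{\mathsf{A}} = 0$ otherwise; either way, AM-GM yields $pz \leq \paren*{ \tfrac{p+z}{2} }^{\!2} \leq \tfrac{1}{4}$.

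The main subtlety is the careful handling of the tie-breaking rule and the verification that no exotic multi-option or randomized menu for Bob can beat the single-option analysis; with a single buyer type this reduction is essentially tautological (any Bob mechanism is indistinguishable from the single entry the buyer selects), but it deserves an explicit sentence in the proof.
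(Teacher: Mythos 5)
Your proof is correct, and its skeleton matches the paper's: both hinge on the dichotomy of whether the value-$1$ buyer visits Alice first or Bob first, and in the Alice-first case both conclude $p \leq 1-z$ and then $\rev_{\mathsf{A}} \leq pz \leq z(1-z) \leq \tfrac14$. The route you take to get there is more computational and more self-contained. You first reduce Bob's menu to the single entry $(y,q)$ the buyer actually selects (a legitimate step, though as you note it needs a sentence of justification, since replacing $B$ by that single option must be checked to preserve the buyer's visit order), and then you explicitly compute Bob's best-response revenue as $\max(p, 1-z)$ together with the maximizers $(y,q)=(1,p)$ and $(1,1)$. The paper avoids both of these steps: it only needs the one-sided observation that $\fp_p$ already guarantees Bob revenue $p$, so $\max_B \rev_{\mathsf{B},(A,B)} \geq p$ comes for free, and it handles the Bob-first case by invoking \cref{lemma:typesoftypes:d} (which says a $\vba$-type accepts a bang-per-buck of at most $p$ from Bob) rather than re-deriving that fact from the identity $U_{\text{A-first}} - U_{\text{B-first}} = z(q-py)$, which is exactly the single-type specialization of that lemma. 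Your version does slightly more work and extracts more information (Bob's exact best-response value) than is needed for the bound, while the paper's argument is leaner because it leans on the general lemmas already established for \cref{lemma:fixedprice-formal}. Both are valid; one minor presentational remark on your write-up is that the phrase ``worst-case best response for the purposes of upper-bounding Alice'' reads awkwardly since the claim is a uniform bound over \emph{all} best responses $B$ --- what you really do is show that every best response lands in one of two revenue profiles, $(0,p)$ or $(pz,1-z)$, and the bound holds for each.
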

\begin{proof}
Observe that, if Bob uses a fixed price mechanism with price $p$, he ensures that the buyer buys from him and he earns a revenue of $p$. This means that $\max_B \rev_{ \mathsf{B}, \paren*{ A, B } }\paren*{ D } \geq p$. Let $B$ be an arbitrary maximizer of Bob's revenue. If the buyer goes to Alice first when Bob uses the mechanism $B$, he only purchases from Bob with probability at most $1 - z$. This means that $1 - z \geq \rev_{ \mathsf{B}, \paren*{ A, B } }\paren*{ D } \geq p$. As Alice's revenue is upper bounded by $pz$, this means that $\rev_{ \mathsf{A}, \paren*{ A, B } }\paren*{ D } \leq pz \leq z - z^2 \leq \frac{1}{4}$, as desired. On the other hand, if the buyer goes to Bob first when he uses the mechanism $B$, then as formalized in \cref{lemma:typesoftypes:d} below, Bob's bang-per-buck price must be at most $p$. This means that the only way we have $\rev_{ \mathsf{B}, \paren*{ A, B } }\paren*{ D } \geq p$ is if the buyer purchases from Bob with probability $1$. However, if this happens, the buyer never purchases from Alice, and we have $\rev_{ \mathsf{A}, \paren*{ A, B } }\paren*{ D } = 0$ finishing the proof.
\end{proof}

\subsection{Best-Responding to a Single Lottery Mechanism: \texorpdfstring{\cref{lemma:fixedprice-formal}}{}}
\label{sec:keylemmaproof}

We now turn to proving our key technical result, \cref{lemma:fixedprice-formal}, which says that Bob's best response to a single lottery mechanism is a fixed price mechanism.  Omitted proof details and auxilliary lemmas appear in \cref{app:stackelberg}.  Throughout this section, we fix a single lottery mechanism $A$ for Alice with allocation probability $z$ and price $p$.  
We write $a = pz$ for the ex ante expected payment from this lottery. We assume that $z < 1$, as if $z = 1$, then Alice is using a fixed price mechanism and hence Bob's best response is to undercut Alice with a posted price, as in the Bertrand setting. We also assume that $z, p > 0$ as otherwise the lemma is trivial. Now, for any arbitrary mechanism $B$ for Bob, note that the buyer with type $v$ can either go to Alice first or to Bob first. Thus, we have the following two options:
\begin{enumerate}
\item \label{item:ba} \textbf{Purchasing from Bob first:} When a buyer purchases from Bob first, he picks the option corresponding to $x$ from Bob, for some $x \in [0, 1]$, and if he is not allocated the item (this happens with probability $1 - x$), may or may not purchase from Alice, depending on whether or not it improves his utility. This gives the buyer a utility of:
\begin{equation}
\label{eq:utilba}
\utilba\paren*{ B, x, v } = x v - B\paren*{ x } + \paren*{ 1 - x } \cdot \max\paren*{ 0, z v - a } .
\end{equation}
\item \label{item:ab} \textbf{Purchasing from Alice first:} When a buyer purchases from Alice first, he gets the item with probability $z$ at a price of $p$. In case he is not allocated (this happens with probability $1 - z$), he will pick the option corresponding to $x$ from Bob, for some $x \in [0, 1]$, giving him a total utility of:
\begin{equation}
\label{eq:utilab}
\utilab\paren*{ B, x, v } = z v - a + \paren*{ 1 - z } \cdot \paren*{ x v - B\paren*{ x } } .
\end{equation}
\end{enumerate}
Of course the buyer will pick the option that maximizes his utility. In other words, he will pick the option that maximizes:
\begin{equation}
\label{eq:util}
\util\paren*{ B, x, v } = \max\paren*{ \utilba\paren*{ B, x, v }, \utilab\paren*{ B, x, v } } .
\end{equation}
The buyer breaks ties in favor of buying from Bob first, then in favor of larger values of $x$.
Having fixed the buyer's choice, we can partition the set of potential buyer valuations into two sets, $\vba\paren*{ B }$, $\vab\paren*{ B }$, based on whether the a buyer with that valuation buys from Bob or Alice first, respectively.
An important property about the set $\vab\paren*{ B }$ is that it if a type $v \in \vab\paren*{ B }$, then all higher types are also in $\vab\paren*{ B }$.
\begin{lemma}[Follows from \cref{lemma:typesoftypes:d,lemma:typesoftypes:e}]
\label{lemma:vab}
For all $0 \leq v \leq v'$, $v \in \vab\paren*{ B } \implies v' \in \vab\paren*{ B }$. 
\end{lemma}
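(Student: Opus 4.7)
The plan is to reduce membership in $\vab(B)$ to the sign of a single scalar function of $v$, and then show that function is monotone. Define
\[
V_{BA}(v) = \max_{x \in [0,1]} \utilba(B, x, v), \quad V_{AB}(v) = \max_{x \in [0,1]} \utilab(B, x, v), \quad \Delta(v) = V_{AB}(v) - V_{BA}(v).
\]
By the buyer's tie-breaking rule (first in favor of BA), $v \in \vab(B)$ iff $\Delta(v) > 0$, so it suffices to establish that $\Delta$ is non-decreasing on $[0, \infty)$.

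First I would dispose of the range $v < p = a/z$. There $zv < a$, so the $\max$ in \eqref{eq:utilba} is realized by $0$, and one checks directly that for any $x$, $\utilba(B,x,v) - \utilab(B,x,v) = a - zv > 0$ uniformly, so $V_{BA}(v) > V_{AB}(v)$ and $v \notin \vab(B)$; the implication is then vacuous on this range (this is where \cref{lemma:typesoftypes:d} and \cref{lemma:typesoftypes:e} pin down that bottom types never go to Alice first). For $v \geq p$, expanding \eqref{eq:utilba} and \eqref{eq:utilab} gives
\begin{align*}
\utilba(B, x, v) &= v\bigl(z + x(1-z)\bigr) + a(x - 1) - B(x), \\
\utilab(B, x, v) &= v\bigl(z + x(1-z)\bigr) - a - (1-z) B(x).
\end{align*}
Holding $x$ fixed, both have the same partial derivative $z + x(1-z)$ in $v$, so by the envelope theorem applied to the convex, continuous, compactly supported pricing function $B$ (cf.\ \cref{obs:taxation}),
\[
V_{BA}'(v) = z + x^*_{BA}(v)(1-z), \qquad V_{AB}'(v) = z + x^*_{AB}(v)(1-z) \quad \text{a.e.,}
\]
where $x^*_{BA}(v)$ and $x^*_{AB}(v)$ are any maximizers of the respective objectives.

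The crux is then showing $x^*_{AB}(v) \geq x^*_{BA}(v)$ for $v \geq p$. First-order conditions on the two objectives above give $v(1-z) + a \in \partial B(x^*_{BA}(v))$ and $v \in \partial B(x^*_{AB}(v))$. The hypothesis $v \geq p = a/z$ is equivalent to $v \geq v(1-z) + a$, and since $\partial B$ is non-decreasing by convexity of $B$, we conclude $x^*_{AB}(v) \geq x^*_{BA}(v)$. Substituting, $\Delta'(v) = (1-z)\bigl(x^*_{AB}(v) - x^*_{BA}(v)\bigr) \geq 0$ almost everywhere on $[p, \infty)$, so $\Delta$ is non-decreasing there; together with continuity at $v = p$ (which follows from the Maximum Theorem) this yields the claim. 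The main technical obstacle is that $B$ need not be smooth, so I would carry out the above entirely in terms of subgradients and left/right derivatives of $B$ rather than classical derivatives; once that bookkeeping is set up, the inequality $x^*_{AB} \geq x^*_{BA}$ and the envelope identity both go through, and the monotonicity of $\Delta$ delivers the lemma.
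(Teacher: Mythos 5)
Your proposal is correct (up to one small computational slip) but takes a genuinely different route from the paper. The paper proves this via the ``bang-per-buck'' characterization in \cref{lemma:typesoftypes:d} --- a type $v$ lies in $\vab(B)$ precisely when the chosen option's bang-per-buck price $B(\mathtt{x}_B(v))/\mathtt{x}_B(v)$ exceeds $p$ --- combined with the monotonicity of bang-per-buck from \cref{lemma:typesoftypes:e}, and the fact that $\mathtt{x}_B(\cdot)$ is positive on $\vab(B)$ and monotone. Your argument instead expresses $\Delta(v) = V_{AB}(v) - V_{BA}(v)$ through the conjugate $U_B(t) := \max_x (tx - B(x))$: for $v \geq p$ one has $\Delta(v) = (1-z)\,U_B(v) - U_B\big(v(1-z)+a\big)$, and since $U_B$ is convex and $v \geq v(1-z)+a$ exactly when $v \geq p$, the a.e.\ derivative $\Delta'(v) = (1-z)\big(x^*(v) - x^*(v(1-z)+a)\big)$ is non-negative by monotone comparative statics of $\arg\max_x (tx - B(x))$. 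This is a compact derivation once the convex-analysis bookkeeping (subgradients, absolute continuity of $U_B$, tie-breaking at the boundary of the maximizer set) is carefully in place; the paper's bang-per-buck route avoids the envelope theorem in favor of tracking the chosen option directly, at the cost of a more hands-on case analysis that also supplies several facts reused elsewhere. One correction in the $v < p$ step: the pointwise identity $\utilba(B,x,v) - \utilab(B,x,v) = a - zv$ does not hold for arbitrary $x$; the correct expression is $z\big(xv - B(x)\big) + a - zv$. Your conclusion still stands, since taking the maximum over $x$ on each side gives $V_{BA}(v) - V_{AB}(v) = z\,U_B(v) + a - zv \geq a - zv > 0$ because $U_B(v) \geq 0$, but the equality as written is wrong and should be replaced by this inequality.
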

In other words, buyer behavior is monotone: low types buy from Bob first and high types buy from Alice first.
Define $\abstart\paren*{ B } = \inf\paren*{ \vab\paren*{ B } }$ with the convention that $\abstart\paren*{ B } = \infty$ if $\vab\paren*{ B } = \emptyset$. That is, $\abstart\paren*{ B }$ is the threshold type that separates those who select Bob first from those that select Alice first. Note from \cref{lemma:vab} that $v > \abstart\paren*{ B }$ implies $v \in \vab\paren*{ B }$. Also note from our tie-breaking that if $v \leq p$ then the buyer never purchases from Alice first, so we must have $\abstart\paren*{ B } \geq p$. Finally, as shown \cref{lemma:typesoftypes:h}, our tie-breaking also implies that $\abstart\paren*{ B } \in \vba\paren*{ B }$ as long as $\abstart\paren*{ B } < \infty$.

\subsection{An Auxiliary Distribution}
\label{sec:auxdist}

To characterize Bob's best-response mechanism, we next define an auxiliary distribution that effectively reduces the duopoly setting we consider to a monopolist setting. 
We use $\mathtt{x}_B(v)$ to denote the allocation chosen from Bob's mechanism by a buyer of type $v$, given Alice's single-lottery mechanism. We adopt the convention $\mathtt{x}_B\paren*{ \infty } = 1$.
Recall that, in contrast, $\mathtt{z}_B(v)$ denotes the expected allocation that would be chosen from Bob's mechanism if Bob were a monopolist.

Let $v$ be the buyer's type and consider the buyer's optimization problem in the following cases:
\begin{itemize}
\item \textbf{When $v < p$:} As we established that $\abstart\paren*{ B } \geq p$, we have that these buyers lie in $\vba\paren*{ B }$ and will choose the option $\mathtt{x}_B(v)$ to maximize \cref{eq:utilba}. When $v < p$, this means that they will choose $\mathtt{x}_B(v)$ to be the maximizer of $x v - B\paren*{ x }$. Observe that this is exactly $\mathtt{z}_B(v)$, the option that would be chosen by the buyer if Bob were the only seller.
\item \textbf{When $p \leq v \leq \abstart\paren*{ B }$:} As explained above, these types also lie in $\vba\paren*{ B }$ and will choose the option $\mathtt{x}_B(v)$ to maximize \cref{eq:utilba}. As $v \geq p$, this means that they will choose $\mathtt{x}_B(v)$ to be the maximizer of $x \cdot \paren*{ a + \paren*{ 1 - z } \cdot v } - B\paren*{ x }$. Observe that this is exactly $\mathtt{z}_B\paren*{ a + \paren*{ 1 - z } \cdot v }$, the option that would be chosen by a buyer with a modified valuation $a + \paren*{ 1 - z } \cdot v$ if Bob were the only seller.
\item \textbf{When $v > \abstart\paren*{ B }$:} The types all lie in $\vab\paren*{ B }$ and will choose $\mathtt{x}_B(v)$ to maximize \cref{eq:utilab}. As $z < 1$, this means that they will choose $\mathtt{x}_B(v)$ to be the maximizer of $x v - B\paren*{ x }$. Observe that this is exactly $\mathtt{z}_B(v)$, the option chosen by the buyer if Bob were the only seller.  However, the revenue obtained by Bob has shrunk by a factor of $1 - z$: the buyer only comes to Bob if he did not get the item from Alice, which happens with probability $1 - z$. 
\end{itemize}

Combining these cases, we get that if we consider a different distribution where the density of each type $v \in [p, \abstart\paren*{ B }]$ is shifted to $a + \paren*{ 1 - z } \cdot v$ and the density of each type $v > \abstart\paren*{ B }$ is scaled down by a factor $1 - z$, then the revenue obtained by a monopolist seller with this distribution is the same as the revenue Bob makes in our duopoly. Next, we make this formal by defining, for all $s \geq p$ (think of $s$ as $\abstart\paren*{ B }$), an auxiliary distribution $D_s$ with the following probability density (recalling that $z < 1$):
\begin{equation}
\label{eq:fs}
f_s\paren*{ v } = \begin{cases}
f\paren*{ v } , &\text{~if~} v \leq p \\
\frac{1}{ 1 - z } \cdot f\paren*{ \frac{ v - a }{ 1 - z } } , &\text{~if~} p < v < a + \paren*{ 1 - z } \cdot s \\
0 , &\text{~if~} a + \paren*{ 1 - z } \cdot s \leq v \leq s \\
\paren*{ 1 - z } \cdot f\paren*{ v } , &\text{~if~} s < v < \infty \\
\end{cases} .
\end{equation}
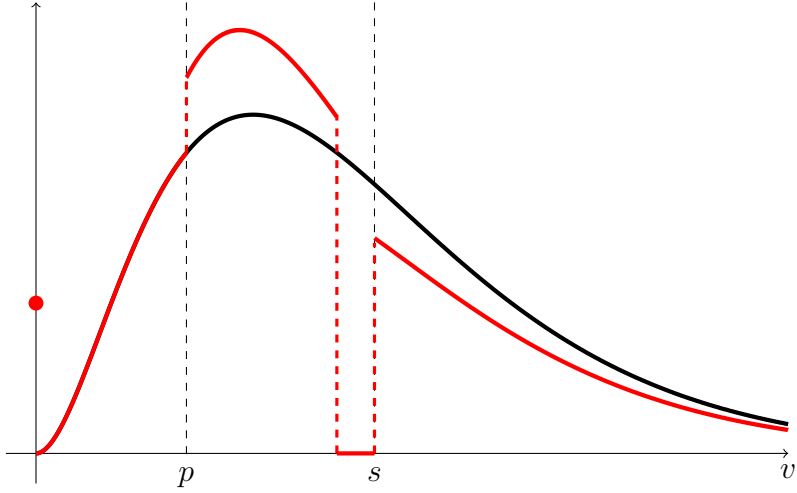
\begin{figure}
\centering
\begin{tikzpicture}[domain=0:10,samples=1000]
    \draw[->] (-0.4,0) -- (10,0) node[below] {$v$};
    \draw[->] (0,-0.4) -- (0,6);
    \draw[dashed] (2,6) -- (2,-0.04) node[below] {$p$};
    \draw[dashed] (4.5,6) -- (4.5,-0.04) node[below] {$s$};
    \draw[black, ultra thick] plot (\x,{pow(2, 4-\x) * pow(\x/2,2)});
    \draw[red,dashed,very thick] (2,4) -- (2,5);
    \draw[red, ultra thick] plot[domain=0:2] (\x,{(pow(2, 4-\x) * pow(\x/2,2))});
    \draw[red,dashed,very thick] (4,0) -- (4,4.5);
    \draw[red, ultra thick] plot[domain=2:4] (\x,{(\x <= 4) * (1.25 * pow(2, 4.5 -1.25 * \x) * pow(\x/1.6 - 0.25,2))});
    \draw[red,dashed,very thick] (4.5,0) -- (4.5,2.8);
    \draw[red, ultra thick] plot[domain=4:4.5] (\x,{0});
    \draw[red, ultra thick] plot[domain=4.5:10] (\x,{(0.8 * pow(2, 4-\x) * pow(\x/2,2))});
    \node[red, fill, circle, inner sep =2pt] at (0,2) {};
\end{tikzpicture}
\caption{Constructing the distribution $D_s$ ({\color{red}{red}}) from the distribution $D$ (black). The density below $p$ is unchanged, the density in the interval $\bracket*{ p, s }$ is squished into a smaller interval $\bracket*{ p, a + \paren*{ 1 - z } \cdot s }$, and the density above $s$ is scaled down by a factor of $1 - z$ with the remaining probability moved to an atom at $0$.}
\label{fig:ds}
\end{figure}
In order for this to be a well defined distribution, we additionally add a point mass of $z \cdot \paren*{ 1 - F(s) }$ at $0$, which corresponds to buyer types $v > s$ who visit Alice first and (with probability $z$) acquire an item from her and hence never visit Bob's mechanism.  As mentioned above, the distribution $D_s$ has the following property, whose proof we defer.

\begin{lemma}
\label{lemma:oneseller}
Let $B$ be a mechanism for Bob and $s = \abstart\paren*{ B }$. For all $v \geq 0$, we have:
\begin{equation}
\label{eq:oneseller}
\mathtt{x}_B(v) = 
\begin{cases}
\mathtt{z}_B(v) , &\text{~if~} 0 \leq v < p \\
\mathtt{z}_B\paren*{ a + \paren*{ 1 - z } \cdot v } , &\text{~if~} p \leq v \leq s \\
\mathtt{z}_B(v) , &\text{~if~} s < v 
\end{cases} .
\end{equation}
Additionally, we have:
\[
\rev_B\paren*{ D_s } = \rev_{ \mathsf{B}, \paren*{ A, B } }\paren*{ D } .
\]
\end{lemma}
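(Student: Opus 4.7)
}

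The plan is to derive the identity for $\mathtt{x}_B(v)$ directly from the three-way case split already worked out in the paragraphs preceding \cref{eq:fs}, and then to compute $\rev_{\mathsf{B},(A,B)}(D)$ by integrating Bob's per-type payment against $f$ and matching the result to $\int B(\mathtt{z}_B(v))\,dD_s(v)$ via a change of variables. Throughout, I will use $a = pz$, $s = \abstart(B)$, and the fact (from the preamble) that $s \geq p$ and that every $v > s$ lies in $\vab(B)$.

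For the first claim, I would handle each of the three regimes separately. If $0 \leq v < p$, then since $s \geq p$ we have $v \in \vba(B)$, so by definition $\mathtt{x}_B(v)$ is the maximizer of $\utilba(B,x,v)$ in \cref{eq:utilba}; because $v < p$ implies $zv - a < 0$, the $\max(0,\cdot)$ term vanishes, and what remains is $xv - B(x)$, which is precisely the monopolist problem solved by $\mathtt{z}_B(v)$. If $p \leq v \leq s$, then again $v \in \vba(B)$, but now $zv - a \geq 0$, so \cref{eq:utilba} reduces (after collecting $x$-terms) to $x\bigl(a + (1-z)v\bigr) - B(x) + zv - a$; only the first two terms depend on $x$, so the maximizer is $\mathtt{z}_B(a + (1-z)v)$. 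If $v > s$, then $v \in \vab(B)$ and $\mathtt{x}_B(v)$ maximizes \cref{eq:utilab}; since $z < 1$, this amounts to maximizing $xv - B(x)$ and yields $\mathtt{z}_B(v)$. The boundary cases $v \in \{p,s\}$ go in with the middle range by the tie-breaking convention (in particular \cref{lemma:typesoftypes:h}).

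For the revenue identity, I would write the expected payment from a type-$v$ buyer to Bob as follows. For $v \leq s$ the buyer visits Bob first, so he pays $B(\mathtt{x}_B(v))$ with certainty; for $v > s$ he visits Alice first and pays Bob only with probability $1-z$, contributing $(1-z)B(\mathtt{z}_B(v))$. Substituting the expressions for $\mathtt{x}_B(v)$ from the first part gives
\begin{align*}
\rev_{\mathsf{B},(A,B)}(D) &= \int_0^p f(v)\, B(\mathtt{z}_B(v))\,dv + \int_p^s f(v)\, B\bigl(\mathtt{z}_B(a+(1-z)v)\bigr)\,dv \\
&\quad + (1-z)\int_s^\infty f(v)\, B(\mathtt{z}_B(v))\,dv.
\end{align*}
In the middle integral, the change of variables $u = a + (1-z)v$ sends $[p,s]$ to $[p,\, a+(1-z)s]$ and turns $f(v)\,dv$ into $\tfrac{1}{1-z}f\!\left(\tfrac{u-a}{1-z}\right)du$. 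Compared with \cref{eq:fs}, this is exactly the contribution of $f_s$ on $(p, a+(1-z)s)$; the first and third integrals match $f_s$ on $[0,p]$ and $(s,\infty)$ directly; and the gap $[a+(1-z)s,s]$ carries zero density under $f_s$. The point mass $z(1-F(s))$ that $D_s$ places at $0$ contributes $z(1-F(s))\cdot B(\mathtt{z}_B(0)) = 0$, since properness of $B$ forces $\mathtt{z}_B(0) = 0$ and $B(0) = 0$. Summing, the right-hand side equals $\int B(\mathtt{z}_B(v))\,dD_s(v) = \rev_B(D_s)$.

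The only place I expect any subtlety is the change of variables in the middle integral and the bookkeeping of the ``gap'' $[a+(1-z)s, s]$ that $f_s$ simply kills: one has to check that the integrand in the original formula, evaluated at $v \in [p,s]$, is really absorbed into the shifted/squished segment $[p, a+(1-z)s]$ of $f_s$ rather than into the unchanged tail above $s$. The boundary conventions (tie-breaking toward Bob and the use of $\mathtt{z}_B(\infty) = 1$) take care of the measure-zero boundary types and ensure the three regions partition $[0,\infty)$ cleanly.
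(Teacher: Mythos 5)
Your proposal is correct and takes essentially the same route as the paper: the allocation identity is read off from the three-way case split developed in Section~\ref{sec:auxdist} (with tie-breaking handling the boundaries), and the revenue identity is established by splitting the integral into the same three regions and reparametrizing $u = a+(1-z)v$ on the middle one to match \cref{eq:fs}. The only cosmetic difference is direction (you go from $\int f\cdot r$ to $\int f_s\cdot B\circ\mathtt{z}_B$, the paper goes the other way) and that you explicitly note the atom at $0$ contributes nothing, which the paper leaves implicit.
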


\subsection{Special Case of \texorpdfstring{\cref{lemma:fixedprice-formal}}{}: Bottom Proper Mechanisms}

We now define a special class of mechanisms that we call \emph{bottom proper}, and prove \cref{lemma:fixedprice-formal} under the assumption that Bob's mechanism $B$ is bottom proper. Roughly speaking, a mechanism $B$ is bottom proper if there is some lottery in the mechanism $B$ such that any buyer with a type $v \in \vba\paren*{ B }$ either purchases that specific lottery or does not purchase at all. Formally, we define:
\begin{definition}
\label{def:bottomproper}
A mechanism $B$ for Bob is bottom proper if there exists $x \in [0, 1]$ such that, defining $\hat{p} = 0$ if $x = 0$ and $\hat{p} = \frac{ B\paren*{ x } }{ x }$ otherwise, we have $\hat{p} \leq p$ and for all $v \in \vba\paren*{ B }$, we have $\mathtt{x}_B(v) = x \cdot \1\paren*{ v \geq \hat{p} }$.
\end{definition}
We now show a lemma that will be helpful in showing \cref{lemma:fixedprice-formal} for the case of bottom proper mechanisms.
\begin{lemma}
\label{lemma:fixedprice-bottomproper}
Let $A$ be a single lottery mechanism for Alice with probability $z \in [0, 1]$ and price $p \geq 0$. For any bottom proper mechanism $B$, there exists $q \geq 0$ for which:
\[
\rev_{ \mathsf{B}, \paren*{ A, B } }\paren*{ D } \leq \rev_{ \mathsf{B}, \paren*{ A, \fp_q } }\paren*{ D } .
\]
Moreover, letting $x$ and $\hat{p}$ be as promised by \cref{def:bottomproper}, we have that if $\rev_{ \mathsf{B}, \paren*{ A, \fp_{ \hat{p} } } }\paren*{ D } < \rev_{ \mathsf{B}, \paren*{ A, \fp_{ \mathtt{v} } } }\paren*{ D }$, then the inequality is strict unless $x = 0$.
\end{lemma}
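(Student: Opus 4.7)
The plan is to show that for bottom proper $B$, the revenue $\rev_{\mathsf{B}, (A, B)}(D)$ is upper-bounded by a convex combination of the two fixed-price revenues $\rev_{\mathsf{B}, (A, \fp_{\hat{p}})}(D)$ and $\rev_{\mathsf{B}, (A, \fp_{\max(s, \mathtt{v})})}(D)$, where $s = \abstart(B)$; the desired $q$ is then whichever of $\hat{p}$ and $\max(s,\mathtt{v})$ yields the larger revenue. The degenerate cases $s = \infty$ (no buyer visits Alice first, so $\rev_{\mathsf{B}, (A, B)}(D) = \hat{p} x (1 - F(\hat{p})) \leq \Gamma_D(\hat{p}) = \rev_{\mathsf{B}, (A, \fp_{\hat{p}})}(D)$) and $x = 0$ (the low-type contribution vanishes and the high-type piece is trivially bounded by $(1-z)\Gamma_D(\max(s,\mathtt{v}))$) are handled directly, so I will assume $s$ finite and $x > 0$.

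Split $\rev_{\mathsf{B}, (A, B)}(D)$ into contributions from low types $v \in [\hat{p}, s]$ (who visit Bob first and take lottery $x$ at price $\hat{p} x$) and high types $v > s$ (who visit Alice first and reach Bob with probability $1-z$). The low contribution equals $\hat{p} x (F(s) - F(\hat{p}))$ exactly, leaving only $(1-z) \int_s^\infty B(\mathtt{z}_B(v)) f(v) dv$ to bound. For this I apply a Myerson-style revenue-utility decomposition on $[s, \infty)$. The crucial observation is that the presence of lottery $x$ at price $\hat{p} x$ in Bob's menu forces both (i) the utility $U(s)$ of type $s$ from Bob's mechanism to satisfy $U(s) \geq x(s - \hat{p})$ (by the deviation to lottery $x$), and (ii) the monopolist allocation to satisfy $\mathtt{z}_B(v) \geq x$ for all $v > s$ (by monotonicity of $\mathtt{z}_B$ together with $\mathtt{z}_B(a+(1-z)s) = x$, which follows from bottom properness and \cref{lemma:oneseller}). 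Integration by parts and writing $\mathtt{z}_B(v) = x + (1-x) \tilde{\mathtt{z}}(v)$ with $\tilde{\mathtt{z}}$ monotone non-decreasing and $[0,1]$-valued then yield
\[
\int_s^\infty B(\mathtt{z}_B(v)) f(v) dv \leq x \Gamma_D(s) + (1-x) \int_s^\infty \tilde{\mathtt{z}}(v) \varphi_D(v) f(v) dv - x (s - \hat{p})(1 - F(s)) .
\]
Under either regularity or DMR, $\varphi_D(v) f(v)$ has a unique sign change at $\mathtt{v}$, so the $\tilde{\mathtt{z}}$-integral is maximized by the threshold allocation at $\max(s, \mathtt{v})$, giving $\Gamma_D(\max(s, \mathtt{v}))$.

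Plugging the bound back into $\rev_{\mathsf{B}, (A, B)}(D)$ and collecting the $x$-coefficient terms produces the key algebraic telescoping
\[
\hat{p} x (F(s) - F(\hat{p})) + (1-z) x \Gamma_D(s) - (1-z) x (s - \hat{p})(1 - F(s)) = x \hat{p} \bigl[ (1 - F(\hat{p})) - z(1 - F(s)) \bigr] \leq x \Gamma_D(\hat{p}) .
\]
Combining this with the remaining $(1-x)(1-z) \Gamma_D(\max(s, \mathtt{v}))$ term, and using the observations from the proof of \cref{thm:oneoptionalice-formal} that $\rev_{\mathsf{B}, (A, \fp_{\hat{p}})}(D) = \Gamma_D(\hat{p})$ (Bob undercuts Alice since $\hat{p} \leq p$) and $\rev_{\mathsf{B}, (A, \fp_{\max(s, \mathtt{v})})}(D) = (1-z) \Gamma_D(\max(s, \mathtt{v}))$ (Alice undercuts Bob since $\max(s, \mathtt{v}) \geq \mathtt{v} > p$), we obtain $\rev_{\mathsf{B}, (A, B)}(D) \leq x \cdot \rev_{\mathsf{B}, (A, \fp_{\hat{p}})}(D) + (1-x) \cdot \rev_{\mathsf{B}, (A, \fp_{\max(s, \mathtt{v})})}(D)$, which proves the first claim by choosing $q$ as the maximizer of the two. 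For the moreover part, if $x > 0$ and $\rev_{\mathsf{B}, (A, \fp_{\hat{p}})}(D) < \rev_{\mathsf{B}, (A, \fp_{\mathtt{v}})}(D)$, then unimodality of $\Gamma_D$ (from regularity/DMR) gives $\rev_{\mathsf{B}, (A, \fp_{\max(s, \mathtt{v})})}(D) \leq \rev_{\mathsf{B}, (A, \fp_{\mathtt{v}})}(D)$, and the strict deficit at $\hat{p}$ together with $x > 0$ makes the whole convex combination strictly smaller than $\rev_{\mathsf{B}, (A, \fp_{\mathtt{v}})}(D)$, witnessing strict inequality at $q = \mathtt{v}$.

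The main obstacle I expect is handling the self-consistency constraint: a single pricing function $B$ must simultaneously realize lottery $x$ at price $\hat{p} x$ for the low types and serve as an effectively unrestricted post-Alice monopoly mechanism for the high types. This coupling manifests as the two lower bounds $U(s) \geq x(s - \hat{p})$ and $\mathtt{z}_B(v) \geq x$ for $v > s$, and it is precisely these two coupled bounds that cause the $x$-weighted pieces of the low and high contributions to telescope into $x \Gamma_D(\hat{p})$ in the algebraic simplification, bringing the upper bound into the clean convex-combination form needed for the conclusion.
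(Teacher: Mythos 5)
Your proposal is correct, and it arrives at the same convex-combination bound
\[
\rev_{\mathsf{B},(A,B)}(D) \le x\cdot\Gamma_D(\hat p) + (1-x)\cdot(1-z)\cdot\Gamma_D\paren*{\max(s,\mathtt{v})}
\]
that the paper establishes, but by a genuinely different route. The paper first invokes the auxiliary distribution $D_s$ and \cref{lemma:oneseller} to rewrite $\rev_{\mathsf{B},(A,B)}(D)$ as the monopolist revenue $\rev_B(D_s)$, applies \cref{prop:revvv} to $D_s$, and then monotonically adjusts $\mathtt{z}_B$ on the range where the $D_s$-virtual-value factor has a determinate sign (using regularity/DMR at the step ``$v<\mathtt{v}$''); the bound is finally translated back via $\Gamma_s(\hat p) \le \Gamma_D(\hat p)$ and $\Gamma_s(\max(s,\mathtt{v})) = (1-z)\Gamma_D(\max(s,\mathtt{v}))$. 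You instead never pass through $D_s$: you split Bob's revenue into the exact low-type piece $\hat p\,x\,(F(s)-F(\hat p))$ and the $(1-z)$-scaled high-type piece, apply the envelope/payment identity to the original distribution $D$ on $[s,\infty)$ only, and exploit the two menu-induced constraints $U(s)\ge x(s-\hat p)$ and $\mathtt{z}_B(v)\ge x$ that stem from lottery $x$ being in Bob's menu. The reparametrization $\mathtt{z}_B = x + (1-x)\tilde{\mathtt{z}}$ and the bound $\int_s^\infty\tilde{\mathtt{z}}(v)\varphi_D(v)f(v)\,dv \le \Gamma_D(\max(s,\mathtt{v}))$ use regularity/DMR in the same essential way as the paper (unique sign change of $\varphi_D f$ at $\mathtt{v}$). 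Your telescoping identity checks out arithmetically: the three $x$-weighted terms collapse to $x\hat p\bigl[(1-F(\hat p)) - z(1-F(s))\bigr] \le x\,\Gamma_D(\hat p)$. The degenerate cases $s=\infty$ and $x=0$ are handled correctly, and the ``moreover'' argument is the same as the paper's. The trade-off: your proof is more self-contained for this lemma and avoids the auxiliary-distribution lemmas in \cref{sec:auxdistapp}, but the paper's $D_s$ machinery is reused in \cref{lemma:fixedprice-reduction}, so the paper's choice is more modular across the full argument.
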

\begin{proof}
Fix an arbitrary single lottery mechanism $A$ for Alice with probability $z$ and price $p$ and let $a = pz$. Also fix a bottom proper mechanism $B$ for Bob and let $s = \abstart\paren*{ B }$ and $x$ and $\hat{p} \leq p$ be the value promised by \cref{def:bottomproper}. Using our auxiliary distribution, specifically \cref{lemma:oneseller}, to consider an equivalent setting with only seller, \cref{prop:revvv} implies
\[
\rev_{ \mathsf{B}, \paren*{ A, B } }\paren*{ D } = \rev_B\paren*{ D_s } = \int_0^{ \infty } \mathtt{z}_B(v) \cdot \paren*{ v \cdot f_s(v) - \paren*{ 1 - F_s(v) } } \diff v . 
\]
To analyze this integral, first consider the values $0 \leq v < \hat{p}$. As $\hat{p} \leq p$, for these value, the buyer purchases the same option from Bob regardless of whether or not Alice is present and we have $\mathtt{z}_B(v) = \mathtt{x}_B(v)$ (see \cref{lemma:oneseller}) From the definition of bottom proper, we get that $\mathtt{z}_B(v) = 0$ implying that these terms vanish. We get:
\[
\rev_{ \mathsf{B}, \paren*{ A, B } }\paren*{ D } = \int_{ \hat{p} }^{ \infty } \mathtt{z}_B(v) \cdot \paren*{ v \cdot f_s(v) - \paren*{ 1 - F_s(v) } } \diff v . 
\]
Next, we consider the values $\hat{p} \leq v \leq a + \paren*{ 1 - z } \cdot s$. Proceeding similarly as above, for these values, we get from \cref{lemma:oneseller} that $\mathtt{z}_B(v) = \mathtt{x}_B(v')$ for some $\hat{p} \leq v' \leq s$. It follows that $v' \in \vba\paren*{ B }$ and the definition of bottom proper (\cref{def:bottomproper}) implies that $\mathtt{z}_B(v) = x$. This means that:
\begin{multline*}
\rev_{ \mathsf{B}, \paren*{ A, B } }\paren*{ D } = x \cdot \int_{ \hat{p} }^{ a + \paren*{ 1 - z } \cdot s } \paren*{ v \cdot f_s(v) - \paren*{ 1 - F_s(v) } } \diff v \\
+ \int_{ a + \paren*{ 1 - z } \cdot s }^{ \infty } \mathtt{z}_B(v) \cdot \paren*{ v \cdot f_s(v) - \paren*{ 1 - F_s(v) } } \diff v . 
\end{multline*}
Let $\mathtt{v} = \argmax_v \Gamma_D(v)$ be the Myerson price for the distribution $D$. We now consider the values $a + \paren*{ 1 - z } \cdot s < v < \max\paren*{ s, \mathtt{v} }$ and show that the value of $v \cdot f_s(v) - \paren*{ 1 - F_s(v) }$ for these values of $v$ is non-positive. Indeed, for $a + \paren*{ 1 - z } \cdot s < v \leq s$, this is because the distribution $D_s$ has no probability mass on these points (see \cref{eq:fs}) and for $s < v < \max\paren*{ s, \mathtt{v} }$, this is because, for these values, we have that both $f_s\paren*{ \cdot }$ and $1 - F\paren*{ \cdot }$ are just $f\paren*{ \cdot }$ and $1 - F\paren*{ \cdot }$ scaled down by a factor of $1 - z$ and we are done as $v < \mathtt{v}$. Having shown non-positivity, note that reducing $\mathtt{z}_B(v)$ will only increase the value of the expression. In particular, because of monotonicity of $\mathtt{z}_B\paren*{ \cdot }$ (see \cref{prop:revvv}), we can reduce it to $x$ and we get:
\begin{multline*}
\rev_{ \mathsf{B}, \paren*{ A, B } }\paren*{ D } \leq x \cdot \int_{ \hat{p} }^{ \max\paren*{ s, \mathtt{v} } } \paren*{ v \cdot f_s(v) - \paren*{ 1 - F_s(v) } } \diff v \\
+ \int_{ \max\paren*{ s, \mathtt{v} } }^{ \infty } \mathtt{z}_B(v) \cdot \paren*{ v \cdot f_s(v) - \paren*{ 1 - F_s(v) } } \diff v . 
\end{multline*}
Finally, for the values $v > \max\paren*{ s, \mathtt{v} }$, a similar argument as above shows that the factor $v \cdot f_s(v) - \paren*{ 1 - F_s(v) }$ is non-negative and increasing $\mathtt{z}_B(v)$ to $1$ will only increase the value of the expression. Using this and fact that the anti-derivative of this factor is just the negative of the revenue curve $\Gamma_s$ for the distribution $D_s$, we get:
\[
\rev_{ \mathsf{B}, \paren*{ A, B } }\paren*{ D } \leq x \cdot \paren*{ \Gamma_s\paren*{ \hat{p} } - \Gamma_s\paren*{ \max\paren*{ s, \mathtt{v} } } } + \paren*{ 1 - z } \cdot \Gamma_D\paren*{ \max\paren*{ s, \mathtt{v} } } .
\]
To finish, we 
note that a direct analysis of the revenue curve for distribution $D_s$ and the revenue obtained by any fixed-price mechanism (see \cref{lemma:gammas,lemma:revfixedprice}) 
implies that $\Gamma_s\paren*{ \hat{p} } \leq \Gamma_D\paren*{ \hat{p} }$, where the latter is just Bob's revenue when he uses the mechanism $\fp_{ \hat{p} }$. Also, $\Gamma_s\paren*{ \max\paren*{ s, \mathtt{v} } } = \paren*{ 1 - z } \cdot \Gamma_D\paren*{ \max\paren*{ s, \mathtt{v} } }$ is Bob's revenue when he uses the mechanism $\fp_{ \max\paren*{ s, \mathtt{v} } }$. This means that:
\[
\rev_{ \mathsf{B}, \paren*{ A, B } }\paren*{ D } \leq x \cdot \rev_{ \mathsf{B}, \paren*{ A, \fp_{ \hat{p} } } }\paren*{ D } + \paren*{ 1 - x } \cdot \rev_{ \mathsf{B}, \paren*{ A, \fp_{ \max\paren*{ s, \mathtt{v} } } } }\paren*{ D } .
\]
Thus, one of $q = \hat{p}$ or $q = \max\paren*{ s, \mathtt{v} }$ satisfies the required conditions of the lemma. For the ``moreover'' part, note that $\Gamma_s\paren*{ \max\paren*{ s, \mathtt{v} } } = \paren*{ 1 - z } \cdot \Gamma_D\paren*{ \max\paren*{ s, \mathtt{v} } }$ implies from \cref{lemma:revfixedprice} that $\Gamma_s\paren*{ \max\paren*{ s, \mathtt{v} } } \leq \rev_{ \mathsf{B}, \paren*{ A, \fp_{ \mathtt{v} } } }\paren*{ D }$. Thus, if we have $\rev_{ \mathsf{B}, \paren*{ A, \fp_{ \hat{p} } } }\paren*{ D } < \rev_{ \mathsf{B}, \paren*{ A, \fp_{ \mathtt{v} } } }\paren*{ D }$, then unless $x = 0$, setting $q = \mathtt{v}$ would give a strict inequality.
%
%
\end{proof}

\subsection{Completing the Proof of \texorpdfstring{\cref{lemma:fixedprice-formal}}{}}
\label{sec:keylemma.proof.end}

To finish the proof of \cref{lemma:fixedprice-formal}, we now show that for any mechanism $B$ for Bob, there exists a bottom proper mechanism $B'$ that generates at least as much revenue.
\begin{lemma}
\label{lemma:fixedprice-reduction}
Let $A$ be a single lottery mechanism for Alice with probability $z \in [0, 1]$ and price $p < \mathtt{v}$. For all mechanisms $B$ for Bob, there exists a bottom proper mechanism $B'$ for Bob for which:
\[
\rev_{ \mathsf{B}, \paren*{ A, B } }\paren*{ D } \leq \rev_{ \mathsf{B}, \paren*{ A, B' } }\paren*{ D } .
\]
Moreover, if there exists $v \in \vba\paren*{ B }$ with $\mathtt{x}_B(v) > 0$, then $B'$ satisfies \cref{def:bottomproper} with $x = \mathtt{x}_B(s)$, where $s = \abstart\paren*{ B }$.
\end{lemma}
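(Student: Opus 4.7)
Plan: My plan is to modify $B$ into a bottom proper $B'$ by ironing the ``bottom'' of $B$'s allocation rule (as seen through the auxiliary distribution $D_s$ of \cref{lemma:oneseller}) into a two-valued step, while keeping $B$'s top intact. Concretely, I will set $B'(y) = \hat p \cdot y$ for $y \in [0, x^\star]$, with $x^\star := \mathtt{x}_B(s) = \mathtt{z}_B(\alpha)$, $s := \abstart(B)$, $\alpha := a+(1-z)s$, and $\hat p \leq p$ carefully chosen, and $B'(y) = B(y) + (\hat p x^\star - B(x^\star))$ on $(x^\star, 1]$. The trivial case $x^\star = 0$ is handled first: by monotonicity of $\mathtt{x}_B$ on $\vba(B)$, $\mathtt{x}_B \equiv 0$ there, so $B$ itself is bottom proper with $x = 0$ and $B' = B$ works.

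For $x^\star > 0$, combining \cref{lemma:oneseller,prop:revvv} lets me rewrite $\rev_{\mathsf{B}, (A, B)}(D) = \rev_B(D_s)$ as a Myerson virtual-value integral $\int_0^\infty \mathtt{z}_B(v)(v f_s(v) - (1-F_s(v)))\,dv$, split into the bottom $[0, \alpha]$, the gap $(\alpha, s)$ (on which $f_s \equiv 0$), and the top $[s, \infty)$. Integration by parts on the bottom, using $\mathtt{z}_B(\alpha) = x^\star$ and $\Gamma_s(v) := v(1-F_s(v))$, gives $-x^\star \Gamma_s(\alpha) + \int_0^\alpha \Gamma_s(v)\,d\mathtt{z}_B(v)$. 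Since $d\mathtt{z}_B$ is a nonnegative measure of total mass at most $x^\star$ on $[0, \alpha]$, the second term is at most $x^\star \cdot \max_{v \in [0, \alpha]} \Gamma_s$, achieved by the step $x^\star \cdot \1(v \geq \hat p)$ with $\hat p$ at the argmax.

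The crux is placing $\hat p$ in $[0, p]$, as required by bottom-properness. A direct computation gives $\phi_s(v) = a + (1-z)\phi_D(u) + z(1-z)(1-F(s))/f(u)$ for $v \in (p, \alpha]$, where $u = (v-a)/(1-z)$; an analysis of the sign of $-\Gamma_s'(v) = \phi_s(v) f_s(v)$ on this interval under either regularity or DMR of $D$ (in the same spirit as the argument used in \cref{lemma:fixedprice-bottomproper}) shows the argmax of $\Gamma_s$ over $[0, \alpha]$ lies in $[0, p]$. With such $\hat p$ in hand, I define $B'$ as above. Convexity of $B'$ follows from convexity of $B$ (its right-derivative at $x^\star$ exceeds the chord slope $B(x^\star)/x^\star \geq \hat p$). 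A case analysis of the buyer's optimization problem under Alice's single lottery $A$ shows every $v \in \vba(B')$ chooses $\mathtt{x}_{B'}(v) \in \{0, x^\star\}$ (transitioning at $v = \hat p$), confirming bottom-properness with the claimed $x = \mathtt{x}_B(s)$. The revenue comparison $\rev_{\mathsf{B}, (A, B')}(D) \geq \rev_{\mathsf{B}, (A, B)}(D)$ then follows by applying the same virtual-value bookkeeping to $B'$'s auxiliary distribution: the bottom meets the ironed upper bound, and the retained (and upward-translated) top options ensure the top and gap contributions weakly dominate those of $B$.

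\textbf{Main obstacle.} The main technical step is the virtual-value analysis on $(p, \alpha]$ under the DMR hypothesis---regularity makes the argument comparatively clean---requiring manipulations of $\phi_D \cdot f$ rather than $\phi_D$ itself. A secondary subtlety is that $\abstart(B')$ may differ from $\abstart(B) = s$, which means the revenue comparison must reconcile the two auxiliary distributions $D_s$ and $D_{s'}$ without losing the improvement gained from ironing.
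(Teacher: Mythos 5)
There is a genuine gap in your choice of $\hat p$. You set $\hat p$ to be the argmax of $\Gamma_s$ over $[0,\alpha]$ (with $\alpha = a + (1-z)s$), and claim that a virtual-value analysis under regularity/DMR places this argmax in $[0,p]$. That claim is false. Regularity or DMR controls the \emph{monotonicity} of the virtual density $v f_s(v) - (1 - F_s(v))$, but says nothing about where it crosses zero. Concretely, for $v \in (p, \alpha)$ with $u = (v-a)/(1-z)$, the virtual density of $D_s$ equals $\varphi_D(u) f(u) + \frac{a}{1-z} f(u) + z(1-F(s))$. At $v = p^+$ (where $u = p$), the leading term $\varphi_D(p) f(p) = p f(p) - (1-F(p))$ is typically strictly negative since $p < \mathtt{v}$, and it can dominate the two positive correction terms. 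For example with $D$ uniform on $[0,1]$, $p = 1/4$, $z = 1/2$, $s \approx 0.52$, the virtual density is still negative just above $p$ and first crosses zero around $v \approx 0.252 > p$; with $D$ exponential and $p$ small the argmax can exceed $p$ by a large margin. So the $\hat p$ you propose can exceed $p$, in which case your $B'$ is not bottom proper and the ironing upper bound is not attained by any admissible step.

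The paper avoids this entirely by choosing $\hat p$ differently: it takes $\hat p$ to be the bang-per-buck price $B(\mathtt{x}_B(s))/\mathtt{x}_B(s)$ that $B$ itself charges the threshold type $s$. By \cref{lemma:typesoftypes:d} this is automatically $\leq p$, with no virtual-value argument needed. This $\hat p$ is generally not the argmax of $\Gamma_s$, so a plain integration-by-parts bound would not close; instead the paper's \cref{lemma:nobottom} exploits the Myerson payment identity $\int_0^{v^*}\mathtt{z}_M(v)\diff v = \mathtt{z}_M(v^*)(v^* - q)$ with $q = \hat p$, which pins the allocation rule on $[0,v^*]$ to have exactly the area consistent with the price $\hat p$ charged at $v^*$. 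Combined with the monotone virtual-density conditions, this shows the step at $q=\hat p$ weakly dominates among all monotone rules with the same endpoint and the same integral; that self-consistency constraint is precisely the ingredient your IBP bound drops. Secondly, the paper defines $B'(x) = \max(B(x), \hat p x)$ rather than shifting the top by a constant; with this $\hat p$ we have $B'(\mathtt{x}_B(s)) = B(\mathtt{x}_B(s))$, so the options taken by types in $\vab(B)$ are priced identically in $B'$ and $\abstart(B') = \abstart(B)$, eliminating the reconciliation between $D_s$ and $D_{s'}$ that you flag as a ``secondary subtlety.'' In your construction, the constant shift of the top does move $\abstart$, and you would need a genuinely new argument to compare revenues across two different auxiliary distributions.
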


Before proving \cref{lemma:fixedprice-reduction}, we show that \cref{lemma:fixedprice-bottomproper,lemma:fixedprice-reduction} together imply \cref{lemma:fixedprice-formal}.
\begin{proof}[Proof of \cref{lemma:fixedprice-formal} assuming \cref{lemma:fixedprice-bottomproper,lemma:fixedprice-reduction}]
As the rest is straightforward, we only show the ``moreover'' part. Assume that $\max_{ \hat{p} \leq p } \rev_{ \mathsf{B}, \paren*{ A, \fp_{ \hat{p} } } }\paren*{ D } < \rev_{ \mathsf{B}, \paren*{ A, \fp_{ \mathtt{v} } } }\paren*{ D }$ and that there exists $v \geq p$ that pays $ < pz$ to Alice. From the latter conclude that $v \in \vba\paren*{ B }$ with $\mathtt{x}_B(v) > 0$. From the monotonicity of $\mathtt{x}_B\paren*{ \cdot }$ (see \cref{lemma:typesoftypes:fg}), we conclude that $\mathtt{x}_B(s) > 0$, where $s = \abstart\paren*{ B }$. Using \cref{lemma:fixedprice-reduction}, this means that there exists a bottom proper mechanism $B'$ for Bob that satisfies \cref{def:bottomproper} with $x = \mathtt{x}_B(s) > 0$. We are now done using \cref{lemma:fixedprice-bottomproper}.
\end{proof}

We now prove \cref{lemma:fixedprice-reduction}.
\begin{proof}
Fix an arbitrary single lottery mechanism $A$ for Alice with probability $z$ and price $p$ and let $a = pz$. Also fix an arbitrary mechanism $B$ for Bob and let $s = \abstart\paren*{ B }$ for convenience. If $\mathtt{x}_B(v) = 0$ for all $v \in \vba\paren*{ B }$, \cref{def:bottomproper} with $x = 0$ says that $B$ is bottom proper and the lemma follows, so we assume otherwise. This assumption lets us define $\hat{p} = \sup_{ v \in \vba\paren*{ B } : \mathtt{x}_B(v) > 0 } \frac{ B\paren*{ \mathtt{x}_B(v) } }{ \mathtt{x}_B(v) }$. That is, $\hat{p}$ is the supremum of the ``bang-per-buck'' price paid by the buyer to Bob when he purchases from Bob first.
Ideally, as buyers with higher valuations would be willing to pay a higher bang-per-buck price, we should be able to say that the supremum in $\hat{p}$ is attained at the highest valuation $v \in \vba\paren*{ B }$, namely $s$. However, this requires some care in the corner case where $s = \infty$ and we verify this formally in \cref{lemma:typesoftypes:e,lemma:typesoftypes:i}, which allows us to conclude that 
$\hat{p} = \frac{ B\paren*{ \mathtt{x}_B(s) } }{ \mathtt{x}_B(s) }$.

Next note that since types who purchase from Bob first only take options with a lower bang-per-buck price than Alice's (single) lottery, we have that $\hat{p} \leq p$. Define the mechanism $B'$ as:
\begin{equation}
\label{eq:botproper}
B'\paren*{ x } = \max\paren*{ B\paren*{ x }, \hat{p} \cdot x } .
\end{equation}
We claim that $B'$ is the required mechanism. We do this in three steps:

\medskip\noindent
\textbf{Step 1: $B$ and $B'$ have the same threshold type $s$.}  We show first that $\abstart\paren*{ B' } = s$. For this, note that all prices in $B'$ are at least as high as the corresponding prices in $B$, which implies that the buyer's obtained utility can only be lower. In particular, if the buyer picks an option $x$ from Bob when his mechanism is $B$, and $x$ satisfies $B'(x) = B(x)$, then the buyer will pick the same option when Bob's mechanism is $B'$. From this, note that a buyer with type $s$ and any buyer with type larger than $s$ (which must be in $\vab\paren*{ B }$ and pay a bang-per-buck price larger than $p \geq \hat{p}$ to Bob) buys the same option from Bob in both $B$ and $B'$. It follows that $\vab\paren*{ B } = \vab\paren*{ B' }$ which implies that $\abstart\paren*{ B' } = s$.

\medskip\noindent
\textbf{Step 2: $B'$ is bottom proper.} We show that \cref{def:bottomproper} holds with $x = \mathtt{x}_B(s)$. For this, note that if $v \in \vba\paren*{ B' }$ is such that $v < \hat{p}$, then the buyer will never purchase from Bob as Bob's bang-per-buck price from \cref{eq:botproper} is always larger than $v$. It follows that $\mathtt{x}_{ B' }(v) = 0$. Next, consider $v \in \vba\paren*{ B' }$ such that $v \geq \hat{p}$. For such a type $v$, \cref{def:bottomproper} requires us to show that $\mathtt{x}_{ B' }(v) = \mathtt{x}_B(s)$. As a buyer with type $s$ buys the same option from Bob in both $B$ and $B'$, we have $\mathtt{x}_B(s) = \mathtt{x}_{ B' }(s)$ and it suffices to show that $\mathtt{x}_{ B' }(v) = \mathtt{x}_{ B' }(s)$. 

For this, note that from the definition of $\mathtt{x}_{ B' }(v)$ that a buyer with type $v$ gets at least as much utility buying the option $\mathtt{x}_{ B' }(v)$ from Bob as he gets when buying the option $\mathtt{x}_{ B' }(s)$. However, the definition of $B'$ says that Bob's bang-per-buck price for the option $\mathtt{x}_{ B' }(v)$ is at least $\hat{p}$, which is his bang-per-buck price with option $\mathtt{x}_{ B' }(s)$. Thus, the only way this is possible is if the probability of sale is also larger, {\em i.e.}, we have $\mathtt{x}_{ B' }(v) \geq \mathtt{x}_{ B' }(s)$. However, the fact that $\vab\paren*{ B } = \vab\paren*{ B' }$ implies that $v \leq s$. The monotonicity of the probability of sale (see \cref{lemma:typesoftypes:fg}) therefore implies that $\mathtt{x}_{ B' }(v) \leq \mathtt{x}_{ B' }(s)$, and the result follows.

\begin{figure}
\centering
\begin{tikzpicture}[domain=0:10,samples=1000]
    \draw[->] (-0.4,0) -- (10,0) node[below] {$v$};
    \draw[->] (0,-0.4) -- (0,4);
    \draw[dashed] (2,4) -- (2,-0.04) node[below] {$p$};
    \draw[dashed] (4.5,4) -- (4.5,-0.04) node[below] {$s$};
    \draw[black, ultra thick, domain=0:4.5] plot [smooth, tension=0.5] coordinates { (0,0) (0.5, 0.3) (1,0.5) (1.5, 0.8) (2,0.9) (2.5, 1) (3, 1.1) (3.5, 1.2) (4, 1.5) (4.5,1.578)};
    \draw[black, ultra thick, domain=4.5:10] plot (\x,{(4 - 51.24/( pow(0.1 + \x, 2)))});
    \draw[red, ultra thick, domain=0:10] plot (\x,{ and(\x > 1.7, \x <= 4.5) * (1.578) + (\x > 4.5) * (4 - 51.24/( pow(0.1 + \x, 2)))});
\end{tikzpicture}
\caption{Allocation rule of a general mechanism (black) and the corresponding bottom proper mechanism constructed in \cref{lemma:fixedprice-reduction} ({\color{red}{red}}). The allocation below $s$ is changed to a step function with the same area under the curve and the same probability of allocation at $s$.}
\label{fig:bottomproper}
\end{figure}
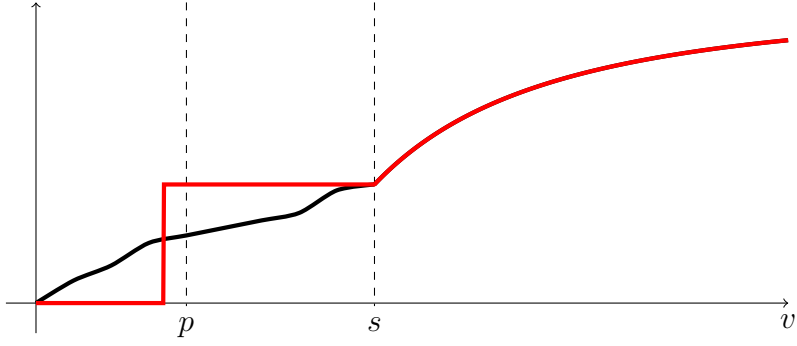

\medskip\noindent
\textbf{Step 3: $B'$ generates weakly higher revenue for Bob.} 
It remains to show that Bob's revenue when he uses $B'$ is at least that when he uses $B$. For this, note that $\abstart\paren*{ B } = \abstart\paren*{ B' }$, which means from \cref{lemma:oneseller} that we can simply compare the revenue of these mechanisms when Bob is the only seller and the distribution is modified to $D_s$. For this, we use \cref{lemma:nobottom}. To apply \cref{lemma:nobottom}, we have to show that \begin{inparaenum}[(1)] \item \label{item:lemmacondition:1} For all $0 \leq v \leq \hat{p}$, we have that $v \cdot f_s\paren*{ v } - \paren*{ 1 - F_s\paren*{ v } } \leq \hat{p} \cdot f_s\paren*{ \hat{p} } - \paren*{ 1 - F_s\paren*{ \hat{p} } }$. \item \label{item:lemmacondition:2} For all $\hat{p} \leq v < a + \paren*{ 1 - z } \cdot s$, we have that $\hat{p} \cdot f_s\paren*{ \hat{p} } - \paren*{ 1 - F_s\paren*{ \hat{p} } } \leq v \cdot f_s\paren*{ v } - \paren*{ 1 - F_s\paren*{ v } }$. \end{inparaenum} We show both of these together, by showing that for all $0 \leq v' \leq p$ and all $v' \leq v'' < a + \paren*{ 1 - z } \cdot s$, we have $v' \cdot f_s\paren*{ v' } - \paren*{ 1 - F_s\paren*{ v' } } \leq v'' \cdot f_s\paren*{ v'' } - \paren*{ 1 - F_s\paren*{ v'' } }$. This is straightforward if $f_s\paren*{ v' } \leq f_s\paren*{ v'' }$ so we assume otherwise. Let $w = v''$ if $v'' \leq p$ and $\frac{ v'' - a }{ 1 - z }$ otherwise, and note that $v' \leq w$. In case $D$ is regular, we have
\begin{align*}
v' \cdot f_s\paren*{ v' } - \paren*{ 1 - F_s\paren*{ v' } } &= v' \cdot f\paren*{ v' } - \paren*{ 1 - F\paren*{ v' } } + z \cdot \paren*{ 1 - F(s) } \tag{\cref{eq:fs,lemma:fs}} \\
&\leq f\paren*{ v' } \cdot \min\paren*{ 0, w - \frac{ 1 - F\paren*{ w } }{ f\paren*{ w } } } + z \cdot \paren*{ 1 - F(s) } \tag{As $D$ is regular and $v' \leq p < \mathtt{v}$} \\
&\leq f\paren*{ w } \cdot \min\paren*{ 0, w - \frac{ 1 - F\paren*{ w } }{ f\paren*{ w } } } + z \cdot \paren*{ 1 - F(s) } \tag{As $f\paren*{ v' } = f_s\paren*{ v' } > f_s\paren*{ v'' } \geq f\paren*{ w }$ by \cref{eq:fs}} \\
&\leq w \cdot f\paren*{ w } - \paren*{ 1 - F\paren*{ w } } + z \cdot \paren*{ 1 - F(s) } \\
&\leq v'' \cdot f_s\paren*{ v'' } - \paren*{ 1 - F_s\paren*{ v'' } } \tag{\cref{eq:fs,lemma:fs}} .
\end{align*}

On the other hand, if $D$ is DMR, we have:
\begin{align*}
v' \cdot f_s\paren*{ v' } - \paren*{ 1 - F_s\paren*{ v' } } &= v' \cdot f\paren*{ v' } - \paren*{ 1 - F\paren*{ v' } } + z \cdot \paren*{ 1 - F(s) } \tag{\cref{eq:fs,lemma:fs}} \\
&\leq w \cdot f\paren*{ w } - \paren*{ 1 - F\paren*{ w } } + z \cdot \paren*{ 1 - F(s) } \tag{As $D$ is DMR and $v' \leq w$} \\
&\leq v'' \cdot f_s\paren*{ v'' } - \paren*{ 1 - F_s\paren*{ v'' } } \tag{\cref{eq:fs,lemma:fs}} .
\end{align*}
\end{proof}


\subsection{Upper Bounds on the Best Stackelberg Equilibrium for Alice}
\label{sec:stackelberg.lowerbound}

Recall that \cref{thm:oneoptionalice-formal} implies that the best Stackelberg equilibirum for Alice gives her revenue at least $\frac{ \Gamma_D\paren*{ \mathtt{v} } }{4}$. We now show that there is a distribution $D$ for which the best Stackelberg equilibrium for Alice gives her revenue at most $\frac{ \Gamma_D\paren*{ \mathtt{v} } }{ \mathrm{e} }$. One such distribution $D$ is a point mass at $1$.
\begin{theorem}
\label{thm:tight2-onebob}
Let the distribution $D$ be a point mass at $1$. Then for any mechanism $A$ for Alice and any best response mechanism $B$ for Bob, it holds that $\rev_{ \mathsf{A}, \paren*{ A, B } }\paren*{ D } \leq \frac{1}{ \mathrm{e} }$.
\end{theorem}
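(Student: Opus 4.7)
The plan is to identify Alice's mechanism with her monopolist allocation $\mathtt{z}_A(r) = \arg\max_x \paren*{xr - A(x)}$ for $r \in [0, 1]$, which by \cref{obs:taxation} is non-decreasing with $\mathtt{z}_A(0) = 0$, and then bound Alice's revenue by case-splitting on the form of Bob's best response. I would write $u_A(v) = \int_0^v \mathtt{z}_A(s) \diff s$, $\tilde{u}_A = u_A(1)$, and $r_0 = \sup\set*{r : \mathtt{z}_A(r) = 0}$.

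I would first observe that for any fixed price $r \in (r_0, 1]$ posted by Bob, convexity of $u_A$ forces the buyer to strictly prefer visiting Alice first, so Alice earns $A(\mathtt{z}_A(r)) = \int_0^r s \diff \mathtt{z}_A(s)$ and Bob earns $r(1 - \mathtt{z}_A(r))$. Therefore, if Bob's best response is a fixed price $\fp_{r^*}$ with $r^* > r_0$ and $M^* = r^*(1 - \mathtt{z}_A(r^*))$, the optimality of $r^*$ forces $\mathtt{z}_A(r) \geq \max\paren*{0, 1 - M^*/r}$ for all $r$. Placing $\mathtt{z}_A$ at this lower envelope, which maximises $\int_0^{r^*} s \diff \mathtt{z}_A(s)$ (note $\mathtt{z}_A(r^*) = 1 - M^*/r^*$ is pinned), gives
\[
\rev_{ \mathsf{A}, \paren*{ A, B } }\paren*{ D } \;\leq\; M^* \ln\paren*{r^*/M^*} \;\leq\; r^*/\mathrm{e} \;\leq\; 1/\mathrm{e} ,
\]
where the middle inequality is the maximum of $M \mapsto M \ln\paren*{r^*/M}$ over $M \in (0, r^*]$, attained at $M = r^*/\mathrm{e}$.

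Next I would consider the case where Bob's best response is a lottery $(y_0, p y_0)$ with $y_0 < 1$ that routes the buyer to Bob first. Parameterising by $v = 1 - y_0(1 - p)$, the buyer's incentive constraint $(1 - y_0)\tilde{u}_A \geq u_A(v)$ binds at Bob's optimum, yielding $1 - y_0 = u_A(v)/\tilde{u}_A$, Bob's revenue $v - u_A(v)/\tilde{u}_A$, and Alice's revenue $\paren*{u_A(v)/\tilde{u}_A}\paren*{\mathtt{z}_A(1) - \tilde{u}_A}$. At Bob's optimal $v^*$, where $\mathtt{z}_A(v^*) = \tilde{u}_A$, the fact that this must weakly dominate Bob's fixed-price revenue $v^*(1 - \tilde{u}_A)$ at $r = v^*$ forces $u_A(v^*)/\tilde{u}_A \leq v^* \tilde{u}_A$, so by AM-GM and $v^* \leq 1$,
\[
\rev_{ \mathsf{A}, \paren*{ A, B } }\paren*{ D } \;\leq\; v^* \tilde{u}_A \paren*{\mathtt{z}_A(1) - \tilde{u}_A} \;\leq\; \mathtt{z}_A(1)^2/4 \;\leq\; 1/4 < 1/\mathrm{e} .
\]

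The remaining possibilities are a fixed price $\fp_{r_0}$ with the buyer routed to Bob first via tie-breaking (so Alice earns zero), and Bob lotteries that leave the buyer at Alice first, which are weakly dominated by the fixed price $\fp_v$ with $v = 1 - y_0(1-p)$ (since $y_0 p \leq v$ forces Bob's lottery revenue to be at most his fixed-price revenue while $A(\mathtt{z}_A(v))$ is identical in both), so Bob would not play such a lottery at his best response. The hardest part will be the careful treatment of the buyer's tie-breaking rule and possible discontinuities of $\mathtt{z}_A$ at $v^*$, but once these are handled the three cases combine to give $\rev_{ \mathsf{A}, \paren*{ A, B } }\paren*{ D } \leq \max\paren*{1/\mathrm{e}, 1/4, 0} = 1/\mathrm{e}$, as required.
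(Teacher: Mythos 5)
Your proposal is correct and recovers the bound, and it is worth noting how it relates to the paper's proof. The paper's key Lemma~\ref{lemma:tight-onebob} bounds every subgradient of the lower convex envelope $\tilde{A}$ of Alice's pricing function by $R/(1-z)$, where $R$ bounds Bob's best posted-price revenue; your constraint $\mathtt{z}_A(r) \geq \max\paren*{0, 1-M^*/r}$ is the exact dual of this (the allocation rule is the generalized inverse of $\tilde{A}'$), so the central technical insight coincides. The paper case-splits on whether the buyer visits Alice or Bob first, while you case-split on the form of Bob's best response; these correspond, and your Case~1 computation $M^*\ln(r^*/M^*) \leq r^*/\mathrm{e}$ is literally the paper's Alice-first bound $-B(x_B)\paren*{1-\mathtt{z}_A}\ln\paren*{1-\mathtt{z}_A} \leq 1/\mathrm{e}$ written in allocation-rule variables. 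The one genuinely different step is your Case~2 (Bob routes the buyer to himself via a lottery), where you compare Bob's lottery revenue to his posted-price revenue at $v^*$ and close with AM--GM to get $\leq 1/4$, a strictly stronger and simpler conclusion than the paper's chain of inequalities leading to $-B(x_B)\ln B(x_B) \leq 1/\mathrm{e}$; this also reveals that the $1/\mathrm{e}$ bound is tight only in the Alice-first regime. Two things to make explicit if you write this out: (i) because the buyer has a single type, only one option from Bob's menu is ever selected, so it is without loss that Bob's best response is a single lottery or a posted price — this justifies your case enumeration and deserves a sentence; and (ii) in Case~2 you assume the maximizer $v^*$ of $v - u_A(v)/\tilde{u}_A$ satisfies $\mathtt{z}_A(v^*) = \tilde{u}_A$ exactly, and when $\mathtt{z}_A$ jumps past $\tilde{u}_A$ you need the appropriate one-sided value and must recheck the comparison against $\fp_{v^*}$, which is precisely the discontinuity bookkeeping you flag.
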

A key lemma in the proof of \cref{thm:tight2-onebob} is a bound on the subgradients of (the lower convex envelope of) an arbitrary mechanism $A$ for Alice in terms of the revenue Bob can obtain from a fixed price mechanism. Formally, we have:
\begin{lemma}
\label{lemma:tight-onebob}
Let the distribution $D$ be a point mass at $1$ and $A$ be a mechanism for Alice. Let $R \geq \max_{ b \geq 0 } \rev_{ \mathsf{B}, \paren*{ A, \fp_b } }\paren*{ D }$ and $0 \leq z < 1 - R$ be arbitrary. Denoting by $\tilde{A}$ the lower convex envelope of $A$, we have for all subgradients $v_z$ of $\tilde{A}$ at $z$ that $v_z \leq \frac{ R }{ 1 - z }$. 
\end{lemma}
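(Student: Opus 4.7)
The plan is to exhibit a fixed-price mechanism for Bob whose induced revenue lower bound forces the desired inequality on $v_z$. Since $D$ is a point mass at $1$, a buyer facing Bob's price $b \leq 1$ together with Alice's menu obtains utility $1 - b$ by going to Bob first, and utility $1 - b + \max_x\paren*{bx - \tilde{A}(x)}$ by going Alice first then Bob. Because $\tilde{A}(0) = 0$, this inner maximum is always nonnegative, so Alice-first is weakly preferred. Using the stated tie-breaking rule (maximize Bob's allocation, then Alice's), the buyer strictly prefers Alice-first only when $\max_x\paren*{bx - \tilde{A}(x)} > 0$, and in that event picks the \emph{smallest} maximizer $x_{\min}$ of $bx - \tilde{A}(x)$ in order to maximize the residual Bob-allocation $1-x$; in the tie case, the buyer instead goes to Bob first.

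The main step is to set $b = v_z$, assuming for now that $v_z \leq 1$. The subgradient condition $v_z \in \partial \tilde{A}(z)$ ensures that $z$ is a maximizer of $v_z x - \tilde{A}(x)$, and monotonicity of subgradients gives that the smallest maximizer satisfies $x_{\min} \leq z$. Applying the subgradient inequality at $0$ (using $\tilde{A}(0) = 0$) yields $v_z z - \tilde{A}(z) \geq 0$. If this is strict, Bob's revenue at $\fp_{v_z}$ is $(1 - x_{\min}) v_z \geq (1-z) v_z$; if it is exactly $0$, the tie-break sends the buyer to Bob first and Bob's revenue is $v_z \geq (1-z) v_z$. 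Either way $R \geq (1-z) v_z$, giving $v_z \leq R/(1-z)$ as desired.

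The remaining obstacle is the case $v_z > 1$, where $b = v_z$ would exceed the buyer's value and Bob would earn nothing. I will show this case is precluded by the hypothesis $z < 1 - R$. If $v_z > 1$, monotonicity of subgradients forces $z > x_A$, where $x_A$ is the smallest point at which $1$ is a subgradient of $\tilde{A}$ (its existence in the relevant range follows from the subgradient structure of the continuous convex function $\tilde{A}$, with the degenerate case of subgradient $\geq 1$ at $0$ handled separately below). The subgradient inequality at $0$ then gives $\tilde{A}(x_A) \leq x_A$. Applying the same buyer-behavior analysis to $b = 1$: if $x_A - \tilde{A}(x_A) > 0$, Bob's revenue at $\fp_1$ is $1 - x_A$, so $R \geq 1 - x_A > 1 - z$, contradicting $z < 1 - R$; and if $x_A - \tilde{A}(x_A) = 0$ (or if $\tilde{A}(x) \geq x$ throughout), tie-breaking sends the buyer to Bob first at price $1$ for revenue $1$, forcing $R \geq 1$ and making the hypothesis $z < 1-R$ vacuous.

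The main delicate point throughout is the tie-breaking analysis: showing that the buyer's Alice-choice $x^{*}$ is at most $z$ (rather than merely equal to $z$) relies on exploiting the stated preference for larger Bob-allocations, which selects the smallest among the maximizers of $bx - \tilde{A}(x)$. Once this is secured, both the strict and tie subcases collapse to the same lower bound $(1-z)v_z$ on Bob's revenue, and the lemma follows.
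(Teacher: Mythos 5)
Your argument follows the same core idea as the paper's proof (deploy a fixed price for Bob near the subgradient slope $v_z$ and use the revenue cap $R$ plus convexity to control the buyer's allocation from Alice), but it has a gap in the tie-breaking step. You argue that when Bob posts $b = v_z$, the buyer picks the smallest maximizer $x_{\min}$ of $bx - \tilde{A}(x)$, and conclude $x_{\min} \leq z$ from the subgradient structure. But the buyer actually faces $A$, not $\tilde{A}$: by the tie-breaking rule they pick the smallest maximizer of $bx - A(x)$. The two optimization problems have the same optimal value (conjugate duality), but the argmax sets need not coincide: the maximizers of $bx - A(x)$ are exactly those $x$ in the face $[x_{\min}, x_{\max}]$ of $\tilde{A}$ at which $A(x) = \tilde{A}(x)$, and you have not shown this set contains a point $\leq z$. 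In particular, if $A > \tilde{A}$ on $[x_{\min}, z]$, the buyer's chosen option $x^*$ would exceed $z$ and Bob's revenue would fall short of $(1-z)v_z$, breaking the inequality you need. The gap is fillable --- the lower convex envelope touches $A$ at the left endpoint of each affine face, so $A(x_{\min}) = \tilde{A}(x_{\min})$ and hence $x^* = x_{\min} \leq z$ --- but you neither state nor prove this fact.

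The paper sidesteps the issue entirely by a small but decisive change: Bob's price $v$ is chosen to lie \emph{strictly} between $R/(1-z)$ and $\min(1, v_z)$, rather than equal to $v_z$. With $v < v_z$ strictly, one deduces $\mathtt{z}_A(v) > z$ from the revenue cap and then derives a contradiction from the chain
$v(\mathtt{z}_A(v)-z) < v_z(\mathtt{z}_A(v)-z) \leq \tilde{A}(\mathtt{z}_A(v)) - \tilde{A}(z) \leq A(\mathtt{z}_A(v)) - \tilde{A}(z) \leq v(\mathtt{z}_A(v)-z)$,
where the last step uses only that $\tilde{A}$ lies above the affine support line through $(\mathtt{z}_A(v), A(\mathtt{z}_A(v)))$ with slope $v$. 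This works regardless of which maximizer the buyer selects, needs no tie-breaking analysis, and, because the price is capped at $\min(1, v_z)$, also absorbs your separate $v_z > 1$ case for free. You should either adopt that strict-perturbation trick or explicitly supply the envelope-touching lemma before the argument can be considered complete.
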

\begin{proof}
From the bounds on $z$, we have that $\frac{ R }{ 1 - z } < 1$. We proceed by contradiction. Suppose that we have a subgradient $\frac{ R }{ 1 - z } < v_z$. This means that there exists $\frac{ R }{ 1 - z } < v < \min\paren*{ 1, v_z }$. Consider what happens when Alice uses mechanism $A$ and Bob uses mechanism $\fp_v$. In this case, observe that the buyer goes to Alice first and purchases with probability $\mathtt{z}_A\paren*{ v }$, where $\mathtt{z}_A\paren*{ \cdot }$ is as defined in \cref{sec:prelim}. This means that $\rev_{ \mathsf{B}, \paren*{ A, \fp_v } }\paren*{ D } = v \cdot \paren*{ 1 - \mathtt{z}_A\paren*{ v } } \leq R$. It follows that $z < \mathtt{z}_A\paren*{ v }$. We get:
\begin{align*}
v \cdot \paren*{ \mathtt{z}_A\paren*{ v } - z } &< v_z \cdot \paren*{ \mathtt{z}_A\paren*{ v } - z } \tag{As $v < v_z$} \\
&\leq \tilde{A}\paren*{ \mathtt{z}_A\paren*{ v } } - \tilde{A}\paren*{ z } \tag{As $v_z$ is a subgradient of $\tilde{A}$ at $z$} \\
&\leq A\paren*{ \mathtt{z}_A\paren*{ v } } - \tilde{A}\paren*{ z } \\
&\leq v \cdot \paren*{ \mathtt{z}_A\paren*{ v } - z } ,
\end{align*}
a contradiction. Here, the last step is because the fact that $\tilde{A}$ is a lower convex envelope of $A$ implies that it pointwise at least any affine function that lower bounds $A$. The definition of $\mathtt{z}_A\paren*{ v }$ says that for all $z \in [0, 1]$, we have $A\paren*{ z } \geq A\paren*{ \mathtt{z}_A\paren*{ v } } + v \cdot \paren*{ z - \mathtt{z}_A\paren*{ v } }$ for all $z \in [0, 1]$, implying that the right hand side is one such affine function. 
\end{proof}

We are now ready to prove \cref{thm:tight2-onebob}
\begin{proof}[Proof of \cref{thm:tight2-onebob}]
Fix an arbitrary mechanism $A: [0, 1] \to \mathbb{R}$ for Alice and an arbitrary best response $B$ for Bob. As the valuation of the buyer is fixed to be $1$, we divide the proof into two cases based on whehter the buyer purchases from Alice first or from Bob first.
\begin{itemize}
\item \textbf{When the buyer buys from Alice first.} In this case, let $x_A$ and $x_B$ be the options chosen by the buyer from Alice and Bob respectively. Consider what happens if Bob deviates to a fixed price mechanism $\fp_q$ with $q = 1 - x_B + B\paren*{ x_B }$. This value of $q$ is designed so that the utility from Bob is preserved implying that, when the mechanism is $\fp_q$, the buyer still goes to Alice first and buys from Alice with the same probability. This means we must have $B\paren*{ x_B } \leq x_B = 1$ as otherwise, $B$ will not be a best response for Bob. 

Having shown that $B\paren*{ x_B } \leq x_B = 1$, observe that the buyer purchases from Alice with probability $\mathtt{z}_A\paren*{ B\paren*{ x_B } }$, where $\mathtt{z}_A\paren*{ \cdot }$ is as defined in \cref{sec:prelim}. This means that Alice's revenue is $A\paren*{ \mathtt{z}_A\paren*{ B\paren*{ x_B } } }$, which we claim equals $\tilde{A}\paren*{ \mathtt{z}_A\paren*{ B\paren*{ x_B } } }$, where $\tilde{A}$ denotes the lower convex envelope of $A$. To see why, note from the definition of $\mathtt{z}_A\paren*{ B\paren*{ x_B } }$ that $$A\paren*{ \mathtt{z}_A\paren*{ B\paren*{ x_B } } } + B\paren*{ x_B } \cdot \paren*{ z - \mathtt{z}_A\paren*{ B\paren*{ x_B } } } \leq A\paren*{ z } \text{ for all } z \in [0, 1].$$ As the left hand side is an affine function of $z$, it has to be lower than the lower convex envelope at every point $z \in [0, 1]$ and $A\paren*{ \mathtt{z}_A\paren*{ B\paren*{ x_B } } } = \tilde{A}\paren*{ \mathtt{z}_A\paren*{ B\paren*{ x_B } } }$ follows. 

Next, observe that $\tilde{A}$ is non-negative throughout and satisfies $\tilde{A}(0) = 0$. Also, recall that a convex function is differentiable except at countably many points and let $\tilde{A}'$ denote the derivative of $\tilde{A}$. From \cref{lemma:tight-onebob}, we have that Alice's revenue is:
\begin{multline*}
\tilde{A}\paren*{ \mathtt{z}_A\paren*{ B\paren*{ x_B } } } = \int_0^{ \mathtt{z}_A\paren*{ B\paren*{ x_B } } } \tilde{A}'\paren*{ z } \diff z \leq \int_0^{ \mathtt{z}_A\paren*{ B\paren*{ x_B } } } \frac{ B\paren*{ x_B } \cdot \paren*{ 1 - \mathtt{z}_A\paren*{ B\paren*{ x_B } } } }{ 1 - z } \diff z \\
\leq - \paren*{ 1 - \mathtt{z}_A\paren*{ B\paren*{ x_B } } } \cdot \ln\paren*{ 1 - \mathtt{z}_A\paren*{ B\paren*{ x_B } } } \leq \frac{1}{ \mathrm{e} } .
\end{multline*}

\item \textbf{When the buyer buys from Bob first.} As above, let $x_A$ and $x_B$ be the options chosen by the buyer from Alice and Bob respectively. We must have $B\paren*{ x_B } \leq 1$ for the buyer to get non-negative utility and can assume that $x_B < 1$ as otherwise the theorem is trivial. As above, let $\tilde{A}$ be the lower convex envelope of $A$. We first show that $A\paren*{ x_A } = \tilde{A}\paren*{ x_A }$. For this, note from our choice of $x_A$ and $x_B$ that for all $z \in [0, 1]$, we have:
\[
x_B - B\paren*{ x_B } + \paren*{ 1 - x_B } \cdot \paren*{ z - A\paren*{ z } } \leq x_B - B\paren*{ x_B } + \paren*{ 1 - x_B } \cdot \paren*{ x_A - A\paren*{ x_A } } .
\]
Using $x_B < 1$, this simplifies to $A\paren*{ z } \geq A\paren*{ x_A } + z - x_A$. As the right hand side is an affine function of $z$, we have by the definition of lower convex envelope that $\tilde{A}\paren*{ z } \geq A\paren*{ x_A } + z - x_A$ for all $z \in [0, 1]$. It follows that $\tilde{A}\paren*{ x_A } = A\paren*{ x_A }$. Again using our choice of $x_A$ and $x_B$, we get that the buyer will not strictly prefer to purchase from Alice first with these options, which is possible only if $x_A \cdot B\paren*{ x_B } \leq x_B \cdot A\paren*{ x_A }$. We get:
\begin{align*}
\rev_{ \mathsf{A}, \paren*{ A, B } }\paren*{ D } &= \paren*{ 1 - x_B } \cdot A\paren*{ x_A } \\
&= \tilde{A}\paren*{ x_A } - x_B \cdot A\paren*{ x_A } \\
&\leq \tilde{A}\paren*{ x_A } - x_A \cdot B\paren*{ x_B } \tag{As $x_A \cdot B\paren*{ x_B } \leq x_B \cdot A\paren*{ x_A }$} \\
&\leq \tilde{A}\paren*{ 1 - B\paren*{ x_B } } - \paren*{ 1 - B\paren*{ x_B } } \cdot \paren*{ 1 - x_A } \tag{As $\tilde{A}\paren*{ z } \geq \tilde{A}\paren*{ x_A } + z - x_A$ for all $z \in [0, 1]$ and $\tilde{A}\paren*{ x_A } = A\paren*{ x_A }$} \\
&\leq \tilde{A}\paren*{ 1 - B\paren*{ x_B } } \tag{As $x_A, B\paren*{ x_B } \leq 1$} .
\end{align*}
Observing that $\tilde{A}(0) = 0$ and letting $\tilde{A}'$ denote the derivative of $\tilde{A}$, we have from \cref{lemma:tight-onebob} that:
\[
\rev_{ \mathsf{A}, \paren*{ A, B } }\paren*{ D } \leq \int_0^{ 1 - B\paren*{ x_B } } \tilde{A}'\paren*{ z } \diff z \leq \int_0^{ 1 - B\paren*{ x_B } } \frac{ B\paren*{ x_B } }{ 1 - z } \diff z = - B\paren*{ x_B } \cdot \ln\paren*{ B\paren*{ x_B } } \leq \frac{1}{ \mathrm{e} } .
\]
\end{itemize}
\end{proof}

\subsection{When Bob Responds with a Fixed Price Mechanism}
\label{sec:oneoptionbob}

To finish this section we show that, regardless of the distribution $D$, if Bob is restricted to respond with a fixed price mechanism, the bound of $\frac{1}{ \mathrm{e} }$ in \cref{thm:tight2-onebob} is tight.  That is, there is a mechanism for Alice such that, if Bob chooses his revenue-maximizing fixed-price mechanism in response, then Alice's resulting revenue is within a factor of $1/e$ of the optimal monopolist's revenue.

\begin{theorem}\label{thm:alicerev-onebob}
    Fix any value distribution $D$ and let $\mathtt{v} = \argmax_v \Gamma_D(v)$ be the corresponding revenue-maximizing price.  Then there exists a mechanism $A$ for Alice such that, for any fixed-price mechanism $\fp_q$ for Bob with $q \in \argmax_q\{ \rev_{ \mathsf{B}, \paren*{A, \fp_q}}\paren*{ D } \}$, we have $\rev_{ \mathsf{A}, \paren*{ A, \fp_q } }\paren*{ D } \geq \Gamma_D(\mathtt{v})/e$.
\end{theorem}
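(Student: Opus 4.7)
The plan is to explicitly construct a lottery menu $A$ for Alice whose shape is derived from the tight integration step in the upper-bound argument of \cref{thm:tight2-onebob}. Specifically, I will take $A(z) = -c\ln(1-z)$ for a scalar $c$ (roughly $\mathtt{v}/\mathrm{e}$), possibly capped at a threshold $z_0 < 1$ or extended with a convex segment to handle Bob's tie-breaking at the boundary. The intuition is that \cref{lemma:tight-onebob} shows the subgradient of any envelope $\tilde{A}$ is bounded above by $R/(1-z)$; my construction achieves this bound with equality, so that the resulting $-x \ln x \le 1/\mathrm{e}$ inequality from the upper-bound proof becomes tight for the specific value $x = 1/\mathrm{e}$.

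Given this candidate $A$ and Bob's posted price $\fp_q$, I would first characterize buyer behavior by the three options of \cref{sec:keylemmaproof}: buyer picks Option A (Bob only), Option B (Alice first, then Bob if the lottery fails), or Option C (Alice only). With $A(z) = -c\ln(1-z)$, the Option~B maximizer is $z^*(q) = 1 - c/q$ whenever $q \ge c$ (and $z^* = 0$ otherwise), so Bob's per-buyer revenue in the Option~B region equals $q(1-z^*) = c$ and Alice's per-buyer revenue is $A(z^*) = c\ln(q/c)$; Option~C buyers with $c \le v < q$ contribute $c\ln(v/c)$ to Alice. Assembling these gives
\[
\rev_{\mathsf{B},(A,\fp_q)}(D) = \begin{cases} \Gamma_D(q) & q \le c, \\ c(1-F(q)) & q > c, \end{cases}
\]
from which one can pin down Bob's argmax $q^*$. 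At any such $q^*$ with $q^* > c$, Alice's revenue is $c \ln(q^*/c)(1-F(q^*)) + \int_c^{q^*} c\ln(v/c)\, f(v)\, dv$, which can be bounded below by choosing $c$ so that $c\ln(\mathtt{v}/c) = \mathtt{v}/\mathrm{e}$ (attained by $c = \mathtt{v}/\mathrm{e}$, using that $-x\ln x$ is maximized at $x = 1/\mathrm{e}$). Combined with $1 - F(q^*) \ge 1 - F(\mathtt{v})$ for an appropriate choice of $q^*$, the product is at least $(\mathtt{v}/\mathrm{e})(1-F(\mathtt{v})) = \Gamma_D(\mathtt{v})/\mathrm{e}$.

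The hard part is the tie-breaking at Bob's boundary optimum $q^* = c$: for the naive logarithmic menu, $\rev_{\mathsf{B},(A,\fp_q)}(D)$ is flat on an interval of $q$'s including $q = c$, and at $q = c$ one has $z^*(c) = 0$, so Alice's revenue collapses to zero. Breaking this tie requires a targeted perturbation of the menu---for instance, replacing $A$ by $A_\epsilon(z) = -(c+\epsilon)\ln(1-z) - \epsilon z$, which makes $(1-z)A'(z)$ strictly increasing in $z$ and hence Bob's per-buyer revenue strictly increasing in $q$---or else truncating $A$ at a well-chosen cap $z_0 = 1 - 1/\mathrm{e}$ and showing that Bob's unique best response is pushed to the right location by the structure of $\Gamma_D$. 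Extending the argument beyond regular $D$ (since the theorem applies to \emph{any} distribution) is a secondary obstacle, which I would address by replacing the monotonicity arguments on $\Gamma_D(q)$ with direct comparisons to $\Gamma_D(\mathtt{v})$ using the definition $\mathtt{v} = \argmax_v \Gamma_D(v)$.
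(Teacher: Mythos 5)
Your construction diverges from the paper's in a way that is fatal for general distributions. You propose $A(z) = -c\ln(1-z)$, which has slope $A'(z) = c/(1-z)$. The paper's menu (equation~\eqref{eq:alice-onebob-main}) instead has slope $A'(z) = \Gamma_D^{-1}\bigl(\frac{\mathcal{M}}{\mathrm{e}(1-z)}\bigr)$, a $\Gamma_D$-tailored quantity. The two coincide exactly when $D$ is a point mass (where $\Gamma_D^{-1}$ is the identity on its range), which is presumably where your intuition comes from, but they differ for every other distribution. This matters because the whole argument hinges on engineering the right indifference set for Bob. Your menu gives $1 - \mathtt{z}_A(q) = c/q$, so Bob's revenue at price $q > c$ is $q(1-\mathtt{z}_A(q))(1-F(q)) = c(1-F(q))$. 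For an atomless $D$ with positive density near $c$, this is \emph{strictly decreasing} in $q$; it is not flat on an interval, contrary to your claim. Together with $\Gamma_D(q)$ for $q < c$, which is generally increasing there, Bob's best response is $q^* = c$ (or lower), not some $q^* > c$. But at $q^* \le c$ every buyer type $v$ has $\min(v, q^*) \le c$, so $\mathtt{z}_A(\min(v,q^*)) = 0$ and Alice earns nothing. The paper's construction avoids this precisely because setting $1-\mathtt{z}_A(q) = \frac{\mathcal{M}}{\mathrm{e}\,\Gamma_D(q)}$ makes Bob's revenue equal to $\Gamma_D(q)\cdot\frac{\mathcal{M}}{\mathrm{e}\,\Gamma_D(q)} = \mathcal{M}/\mathrm{e}$, exactly constant on $[\Gamma_D^{-1}(\mathcal{M}/\mathrm{e}), \mathtt{v}]$, so that the monopoly price $\mathtt{v}$ is a best response.

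Your proposed perturbation $A_\epsilon(z) = -(c+\epsilon)\ln(1-z) - \epsilon z$ does not repair this. Making $(1-z)A'(z)$ strictly increasing pushes Bob's \emph{per-buyer} revenue $q(1-\mathtt{z}_A(q))$ to be increasing in $q$, but Bob's \emph{total} revenue is $q(1-\mathtt{z}_A(q))(1-F(q))$ and the $(1-F(q))$ factor can decay much faster than the per-buyer gain grows; for essentially any light-tailed $D$ the argmax stays near $c$, not near $\mathtt{v}$. Truncating at $z_0 = 1-1/\mathrm{e}$ has the same problem: Bob's revenue on $[c, c\mathrm{e}]$ is still $c(1-F(q))$, maximized at the left endpoint. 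The step you need is to build the declining hazard of $\Gamma_D$ directly into the slope of $A$, which is what $\Gamma_D^{-1}$ accomplishes; no distribution-independent logarithmic menu does this. Once you have the correct menu, the remaining computations (allocation rule as in \cref{lemma:za-onebob}, Bob's revenue as in \cref{lemma:bobrev-onebob}, and then the Fubini-style revenue calculation for Alice in \cref{lemma:alicerev-onebob}) are straightforward; but the menu itself is where the distribution dependence must enter, and your proposal misses it.
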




For this setting, as before, the buyer can either go to Alice first or go to Bob first. However, as Bob is selling using a fixed price mechanism, whenever a buyer purchases from Bob, he is sure to get the item and will not go to Alice. Thus, without loss of generality, we can assume that the buyer goes to Alice first, picks the option corresponding to $x$ from Alice, for some $x \in [0, 1]$, and if he is not allocated the item (this happens with probability $1 - x$), may or may not purchase from Bob depending on whether or not it improves his utility. 

Given this, our idea for Alice is to select a mechanism against which Bob will choose to act like a monopolist and set the Myerson reserve price $\texttt{v} = \argmax_v \Gamma_D(v)$.  To do so, Alice constructs a mechanism that makes Bob indifferent between setting price $\texttt{v}$ and any (sufficiently close) lower price.  Writing $\mathcal{M} = \Gamma_D(\texttt{v})$ for the optimal revenue, we show that this is accomplished by the following mechanism for Alice: 
\begin{equation}
\label{eq:alice-onebob-main}
A\paren*{ x } = \begin{cases}
\int_0^x \Gamma_D^{ -1 }\paren*{ \frac{ \mathcal{M} }{ \mathrm{e} \cdot \paren*{ 1 - z } } } \diff z , &\text{~if~} x \leq 1 - \frac{1}{ \mathrm{e} } \\
\infty , &\text{~otherwise~} 
\end{cases} .
\end{equation}
Here, $\mathrm{e}$ is the base of the natural logarithm and $\infty$ represents a value too large for the buyer to pay. Note that $x \leq 1 - \frac{1}{ \mathrm{e} }$ implies that the argument $\Gamma_D^{ -1 }$ lies in its domain and thus $A\paren*{ x }$ is well defined. 

In Lemma~\ref{lemma:bobrev-onebob}, we show that against this mechanism, Bob is indifferent between setting any price between $\Gamma^{-1}(\mathcal{M}/e)$ and $\Gamma^{-1}(\mathcal{M}) = \texttt{v}$, and obtains strictly lower revenue outside that range.  In other words, Bob has no incentive to undercut the monopolist price in order to try and attract more buyer types away from Alice.

The final step is to calculate the revenue of this mechanism $A$ when Bob uses posted price $\texttt{v}$.  To do this, we first characterize the allocation rule generated by Alice's menu in Lemma~\ref{lemma:za-onebob}, then in Lemma~\ref{lemma:alicerev-onebob} we directly calculate revenue given Bob's choice of mechanism to conclude that $\rev_{ \mathsf{A}, \paren*{ A, \fp_b } }\paren*{ D } = \frac{ \mathcal{M} }{ \mathrm{e} }$, completing the proof of Theorem~\ref{thm:alicerev-onebob}.

\section{Nash Equilibrium with Fixed Buyer Type}
\label{sec:fixedtype}

In this section, we consider pure Nash equilibria and prove \cref{thm:nonash-informal}. As before, we fix a distribution $D$ from which the buyer's valuations are drawn.  For any such distribution $D$, a pure Nash equilibrium always exists for Alice and Bob: they could each choose a fixed price mechanism with price $0$, in which case neither seller has a deviation that would generate positive revenue.  We show that this is essentially the only Nash equilibrium.

\begin{theorem}
\label{thm:nonash}
Let the distribution $D$ be a point mass at $1$. Suppose that $A$ and $B$ are mechanisms for Alice and Bob respectively such that:
\begin{align*}
\rev_{ \mathsf{A}, \paren*{ A, B } }\paren*{ D } &= \max_{ A' } \rev_{ \mathsf{A}, \paren*{ A', B } }\paren*{ D } , \\
\rev_{ \mathsf{B}, \paren*{ A, B } }\paren*{ D } &= \max_{ B' } \rev_{ \mathsf{B}, \paren*{ A, B' } }\paren*{ D } .
\end{align*}
Then, it holds that $\rev_{ \mathsf{A}, \paren*{ A, B } }\paren*{ D } = \rev_{ \mathsf{B}, \paren*{ A, B } }\paren*{ D } = 0$.
\end{theorem}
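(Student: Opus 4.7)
The plan is to assume for contradiction that $(A, B)$ is a pure Nash equilibrium with $\rev_{\mathsf{A}, (A, B)}(D) + \rev_{\mathsf{B}, (A, B)}(D) > 0$ and exhibit a strictly profitable deviation for one of the sellers. Since the buyer's value is the point mass at $1$, his equilibrium play is deterministic: let $S_1 \in \{A, B\}$ denote the seller he visits first, $S_2$ the other, and let $x_1, x_2 \in [0, 1]$ be the options he picks from them, so that $R_{S_1} := S_1(x_1)$ and $R_{S_2} := (1 - x_1) S_2(x_2)$. The proof will proceed by case analysis on $(x_1, x_2)$.

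\textbf{Case 1} ($x_1 = 1$): here $R_{S_2} = 0$ and $R_{S_1} = S_1(1) > 0$. I have $S_2$ deviate to $\fp_\epsilon$ for any $\epsilon \in (0, S_1(1))$. The buyer can now secure utility $1 - \epsilon > 1 - S_1(1)$ by accepting $S_2$'s offer alone, strictly exceeding his original utility, so his new optimal strategy must pay $S_2$ at least $\epsilon > 0$, a strict improvement over $R_{S_2} = 0$. \textbf{Case 2} ($x_1 < 1$ and $x_2 < 1$): setting $U_{S_2} := \max_x(x - S_2(x)) = x_2 - S_2(x_2)$, I have $S_2$ deviate to $\fp_{1 - U_{S_2}}$. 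This preserves $U_{S_2}$, so the buyer's decision to visit $S_1$ first and his choice of $x_1$ are unchanged; but conditional on reaching $S_2$ the buyer now picks $x = 1$ paying $1 - U_{S_2}$ instead of $x_2 - U_{S_2}$, strictly increasing $S_2$'s revenue from $(1 - x_1)(x_2 - U_{S_2})$ to $(1 - x_1)(1 - U_{S_2})$ since $x_2 < 1$.

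\textbf{Case 3} ($x_1 < 1$ and $x_2 = 1$): set $p := S_2(1) = 1 - U_{S_2}$. Optimality of $x_1$ against the effective value $p$ (with $x = 0$ feasible) gives $R_{S_1} \leq x_1 p$, hence $R_{S_1} < p$ provided $p > 0$ (the $p = 0$ case gives $R_{S_2} = 0$, reducing to Case 1 applied to $S_1$). I split further. If $S_2$ is equivalent to the pure fixed-price $\fp_p$ (\emph{Sub-case 3a}), then $S_1$ deviates to $\fp_{q'}$ for $q' \in (R_{S_1}, p)$: both mechanisms are fixed-price, the buyer takes $S_1$'s cheaper offer, and $S_1$'s new revenue is $q' > R_{S_1}$. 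Otherwise (\emph{Sub-case 3b}) $S_2$'s menu contains an option with bang-per-buck strictly below $p$, and the plan is to show that either $S_1$ or $S_2$ has a strictly profitable fixed-price deviation. Specifically, $S_1$'s deviation to $\fp_{q'}$ for $q'$ slightly above $p$ drives the buyer to visit $S_2$ first (to exploit the cheap lottery, using that $v_{S_2}(q') := \max_x(xq' - S_2(x)) > 0$) and then $S_1$ with residual probability $1 - x_{S_2}^{**}(q')$, paying $q'$, yielding $S_1$ revenue $(1 - x_{S_2}^{**}(q'))\, q'$; symmetrically $S_2$ can consider raising his own effective fixed-price above $p$ to extract more from the buyer. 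A Bertrand-style argument using $R_{S_1} + R_{S_2} \leq p$ together with the cheap-lottery structure of $S_2$'s menu shows that at least one of these deviations strictly improves that seller's revenue, contradicting Nash.

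The main obstacle is \emph{Sub-case 3b}, where exhibiting the strictly profitable deviation requires a careful limit analysis of $x_{S_2}^{**}(q')$ as $q'$ ranges over $(p, 1)$, combined with the joint best-response conditions of both sellers. Throughout the argument, the tie-breaking rule (favoring Bob's allocation, then Alice's) must be tracked carefully to ensure the buyer's behavior under each deviation matches the intended scenario.
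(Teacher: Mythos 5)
Your Cases 1, 2, and 3a are essentially sound (modulo some tie-breaking bookkeeping in Case 2, where you assert that the buyer's visit order and choice of $x_1$ are unchanged after $S_2$ deviates to $\fp_{1-U_{S_2}}$; this is true but needs a short argument that the utility of visiting $S_2$ first weakly decreases while the utility of visiting $S_1$ first is unchanged). However, Sub-case 3b is a genuine gap, not a detail. You assert that a ``Bertrand-style argument using $R_{S_1}+R_{S_2}\le p$ together with the cheap-lottery structure of $S_2$'s menu'' produces a profitable deviation for $S_1$ or $S_2$, but you do not exhibit it, and you acknowledge the obstacle yourself. Raising $S_1$'s price above $p$ sends the buyer to $S_2$ first; what the buyer then picks from $S_2$, and hence both sellers' revenues, depends on the detailed shape of $S_2$'s menu, and it is not clear that a limit analysis of $x_{S_2}^{**}(q')$ alone forces a contradiction with both Nash conditions. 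This is the hard case, and the proposal does not close it.

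The paper avoids your Sub-case 3b altogether by proving a structural fact you only half-use. It first establishes (Claim~\ref{claim:nonash-2}) that in any putative equilibrium with positive revenue one has $0 < A(x_A) < x_A < 1$, $0 < B(x_B) < x_B$, and, crucially, that \emph{both} $x_A$ and $x_B$ are the buyer's monopolist-optimal picks, i.e.\ $x_A - A(x_A) = \max_z(z - A(z))$ and $x_B - B(x_B) = \max_z(z - B(z))$. You use the $S_2$-side of this implicitly when defining $U_{S_2}$, but never the $S_1$-side, and that is exactly the missing ingredient. Given it, the deviation is always by the \emph{second}-visited seller and is uniform in $x_B$ (so it covers your $x_2 = 1$ case with no sub-cases): Bob switches to the single lottery with probability $x_B$ at price $B(x_B) + \delta$ with $\delta = (x_B - B(x_B))/2 > 0$. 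If the buyer now goes to Bob first, Bob's revenue is $B(x_B)+\delta > B(x_B) \ge (1-x_A)B(x_B) = \rev_{\mathsf{B}}$. If the buyer goes to Alice first and picks $x'_A \le x_A$, Bob's revenue is at least $(1-x_A)(B(x_B)+\delta) > \rev_{\mathsf{B}}$ (here $x_A < 1$ is needed). So the buyer must pick some $x'_A > x_A$; but then $x'_A - A(x'_A) \le x_A - A(x_A)$ together with the buyer's preference for $x'_A$ forces $(x'_A - x_A)(x_B - B'(x_B)) \le 0$, impossible since both factors are strictly positive. In short: deviating the first-visited seller (your Case 3) is the harder direction, and the right move is to deviate the second-visited seller and use the $S_1$-side monopolist-optimality fact to shut down the buyer's escape route.
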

\begin{proof}
Fix $A$ and $B$ as in the theorem statement and suppose for the sake of contradiction that at least one of $\rev_{ \mathsf{A}, \paren*{ A, B } }\paren*{ D }$ and $\rev_{ \mathsf{B}, \paren*{ A, B } }\paren*{ D }$ is positive. Facing these mechanisms, the buyer (with valuation $1$) can either go to Alice first or to Bob first. If the buyer goes to Alice first and picks the options corresponding to $x_A$ and $x_B$ from Alice and Bob respectively, he will get a utility of:
\[
\utilab\paren*{ A, B, x_A, x_B } = x_A - A\paren*{ x_A } + \paren*{ 1 - x_A } \cdot \paren*{ x_B - B\paren*{ x_B } } .
\]
Similarly, if the buyer goes to Bob first and picks the options corresponding to $x_A$ and $x_B$ from Alice and Bob respectively, he will get a utility of:
\[
\utilba\paren*{ A, B, x_A, x_B } = x_B - B\paren*{ x_B } + \paren*{ 1 - x_B } \cdot \paren*{ x_A - A\paren*{ x_A } } .
\]
Of course, the buyer will pick whichever option that maximizes his utility. 
Assume without loss of generality that the buyer goes to Alice first. Let $x_A$ and $x_B$ be the options chosen from Alice's and Bob's menu respectively. Thus, we have that $\rev_{ \mathsf{A}, \paren*{ A, B } }\paren*{ D } = A\paren*{ x_A }$ and $\rev_{ \mathsf{B}, \paren*{ A, B } }\paren*{ D } = \paren*{ 1 - x_A } \cdot B\paren*{ x_B }$. We first claim that since one of the sellers is getting positive expected revenue, both must be getting positive expected revenue.
\begin{claim}
\label{claim:nonash-1}
We have $\rev_{ \mathsf{A}, \paren*{ A, B } }\paren*{ D }, \rev_{ \mathsf{B}, \paren*{ A, B } }\paren*{ D } > 0$.
\end{claim}
\begin{proof}
We show the former as the proof for the latter is analogous. Suppose for the sake of contradiction that $\rev_{ \mathsf{A}, \paren*{ A, B } }\paren*{ D } = \max_{ A' } \rev_{ \mathsf{A}, \paren*{ A', B } }\paren*{ D } = 0$ and for all $\delta > 0$, consider the mechanism $A'_{ \delta }$ for Alice where the only lottery offered to the buyer is to purchase the item with probability $1$ at a price of $\delta$. Due to the foregoing equation, we have that, when Alice uses the mechanism $A'_{ \delta }$, the buyer must go to Bob first and pick an option corresponding to probability $1$ from him before going to Alice. This means that:
\[
\utilab\paren*{ A'_{ \delta }, B, 1, x_B } = 1 - \delta \leq 1 - B\paren*{ 1 } = \utilba\paren*{ A'_{ \delta }, B, 1, 1 } . 
\]
As $\delta > 0$ was arbitrary, we get that $B\paren*{ 1 } = 0$. It follows from the choice of $x_A$ and $x_B$ that $\utilab\paren*{ A, B, x_A, x_B } \geq \utilba\paren*{ A, B, x_A, 1 } = 1$. However, this means that:
\[
1 \leq x_A - A\paren*{ x_A } + \paren*{ 1 - x_A } \cdot \paren*{ x_B - B\paren*{ x_B } } \leq x_A + \paren*{ 1 - x_A } \cdot x_B \leq 1 .
\]
Thus, there must be equality throughout implying that $\rev_{ \mathsf{A}, \paren*{ A, B } }\paren*{ D } = \rev_{ \mathsf{B}, \paren*{ A, B } }\paren*{ D } = 0$, a contradiction.
\end{proof}

Next we claim that the chosen mechanisms must be such that the buyer is choosing, from each menu, whichever lottery would maximize their utility \emph{not accounting for the presence of the other seller}.  For Bob this is an immediate implication of our assumption that the buyer visits Bob second.  For Alice, this occurs because otherwise there is slack in Alice's choice of mechanism that could be leveraged for increased revenue for either Alice or Bob. Formally, we show that:

\begin{claim}
\label{claim:nonash-2}
We have $0 < A\paren*{ x_A } < x_A < 1$ and $0 < B\paren*{ x_B } < x_B$. We also have:
\[
x_A - A\paren*{ x_A } = \max_{ z \in [0, 1] } \paren*{ z - A\paren*{ z } } \hspace{1cm}\text{and}\hspace{1cm} x_B - B\paren*{ x_B } = \max_{ z \in [0, 1] } \paren*{ z - B\paren*{ z } } .
\]
\end{claim}

Assuming \cref{claim:nonash-2} for now, we are ready to complete the proof of \cref{thm:nonash}. Let $\delta = \frac{ x_B - B\paren*{ x_B } }{ 2 }$ and note from \cref{claim:nonash-2} that $\delta > 0$ and $B\paren*{ x_B } + \delta < x_B$. Consider the mechanism $B'$ for Bob where the only lottery offered to the buyer is to purchase the item with probability $x_B$ at a price of $B\paren*{ x_B } + \delta$. If the buyer goes to Bob first when faced with the mechanisms $A$ and $B'$, Bob gets a revenue of $B\paren*{ x_B } + \delta > B\paren*{ x_B } \geq \rev_{ \mathsf{B}, \paren*{ A, B } }\paren*{ D }$, a contradiction. Similarly, if the buyer goes to Alice first and picks an option $x'_A \leq x_A$, Bob gets a revenue of at least $\paren*{ 1 - x_A } \cdot \paren*{ B\paren*{ x_B } + \delta } > \rev_{ \mathsf{B}, \paren*{ A, B } }\paren*{ D }$ as $x_A < 1$ from \cref{claim:nonash-2}, a contradiction. Thus, the buyer must go to Alice first and pick some option, say $x'_A > x_A$, before going to Bob and picking the option $x_B$. This means that $\utilab\paren*{ A, B', x_A, x_B } \leq \utilab\paren*{ A, B', x'_A, x_B }$ which gives:
\[
x_A - A\paren*{ x_A } + \paren*{ 1 - x_A } \cdot \paren*{ x_B - B'\paren*{ x_B } } \leq x'_A - A\paren*{ x'_A } + \paren*{ 1 - x'_A } \cdot \paren*{ x_B - B'\paren*{ x_B } } .
\]
Using \cref{claim:nonash-2}, we have $x'_A - A\paren*{ x'_A } \leq x_A - A\paren*{ x_A }$. Plugging in, we get $\paren*{ x'_A - x_A } \cdot \paren*{ x_B - B'\paren*{ x_B } } \leq 0$. As $B'\paren*{ x_B } = B\paren*{ x_B } + \delta < x_B$ and $x'_A > x_A$, we have a contradiction.

\end{proof}

\section{Conclusions and Open Problems}
\label{sec:conclusion}

In this paper we studied a game played between two competing mechanism designers with the ability to randomize allocations.  When competition is perfect and each seller is selling an identical good, there are instances where only the trivial pure Nash equilibrium exists.  However, in the Stackelberg setting where one seller can commit to her choice of mechanism before the other responds, it is possible to approximate the revenue raised by a monopolist, for any buyer value distribution that satisfies either the regularity condition or the decreasing marginal revenue condition.

Many questions and future directions are left open.  On a technical level, there is a gap left between our upper and lower bounds on the leader's Stackelberg payoff; the approximation factor versus the monopolist revenue lies in $\bracket*{ \mathrm{e}, 4 }$.  It appears that resolving this gap will require the use of multi-lottery mechanisms: we show that the factor $4$ is tight if Alice is restricted to using a single-lottery mechanism, and the factor $\mathrm{e}$ is tight if Bob is required to use a posted-price mechanism.  Beyond resolving the exact approximation factor, it would be interesting to resolve the structure of the exact optimal mechanisms chosen at Stackelberg equilibrium, and/or derive conditions under which simple mechanism structures are optimal.

There are also numerous extensions to our model that would be interesting to explore.  Sellers could have multiple items to sell to a buyer with combinatorial preferences, the buyers may not be risk neutral, or there could be multiple buyers interacting with the competing sellers.  The latter could involve either synchronous or asynchronous timing of interactions.  Also related to timing, we assume in our model that the buyer visits the mechanisms in sequence, resolving one before choosing whether to visit the other.  One could alternatively consider a setting where a buyer can participate in multiple mechanisms simultaneously, partially resolving one and then switching to another.  In such an environment, could sellers benefit by designing multi-round protocols that (at equilibrium) account for the possibility of a buyer leaving partway for a competitor and returning later?

\bibliographystyle{alpha}
\bibliography{refs}

\newcommand{\etalchar}[1]{$^{#1}$}
\begin{thebibliography}{EFF{\etalchar{+}}17b}

\bibitem[BCWZ17]{BrustleCWZ17}
Johannes Brustle, Yang Cai, Fa~Wu, and Mingfei Zhao.
\newblock Approximating gains from trade in two-sided markets via simple
  mechanisms.
\newblock In {\em Proceedings of the 2017 {ACM} Conference on Economics and
  Computation, {EC} '17, Cambridge, MA, USA, June 26-30, 2017}, pages 589--590,
  2017.

\bibitem[BLN13]{babaioff2013bertrand}
Moshe Babaioff, Brendan Lucier, and Noam Nisan.
\newblock Bertrand networks.
\newblock {\em arXiv preprint arXiv:1304.6806}, 2013.

\bibitem[BNPL14]{babaioff2014price}
Moshe Babaioff, Noam Nisan, and Renato Paes~Leme.
\newblock Price competition in online combinatorial markets.
\newblock In {\em Proceedings of the 23rd international conference on World
  wide web}, pages 711--722, 2014.

\bibitem[BS99]{burguet1999imperfect}
Roberto Burguet and J{\'o}zsef S{\'a}kovics.
\newblock Imperfect competition in auction designs.
\newblock {\em International Economic Review}, 40(1):231--247, 1999.

\bibitem[BW19]{BeyhaghiW19}
Hedyeh Beyhaghi and S.~Matthew Weinberg.
\newblock Optimal (and benchmark-optimal) competition complexity for additive
  buyers over independent items.
\newblock In {\em Proceedings of the 51st ACM Symposium on Theory of Computing
  Conference (STOC)}, 2019.

\bibitem[CBL24]{cai2024bundling}
Linda Cai, Moshe Babaioff, and Brendan Lucier.
\newblock Bundling in oligopoly: Revenue maximization with single-item
  competitors.
\newblock In {\em Proceedings of the 25th ACM Conference on Economics and
  Computation}, pages 465--465, 2024.

\bibitem[CDW16]{CaiDW16}
Yang Cai, Nikhil Devanur, and S.~Matthew Weinberg.
\newblock A duality based unified approach to bayesian mechanism design.
\newblock In {\em Proceedings of the 48th ACM Conference on Theory of
  Computation(STOC)}, 2016.

\bibitem[Cou20]{cournot2020researches}
Augustin Cournot.
\newblock Researches into the mathematical principles of the theory of wealth.
\newblock In {\em Forerunners of Realizable Values Accounting in Financial
  Reporting}, pages 3--13. Routledge, 2020.

\bibitem[CR89]{champsaur1989multiproduct}
Paul Champsaur and Jean-Charles Rochet.
\newblock Multiproduct duopolists.
\newblock {\em Econometrica: Journal of the Econometric Society}, pages
  533--557, 1989.

\bibitem[CR08]{chawla2008bertrand}
Shuchi Chawla and Tim Roughgarden.
\newblock Bertrand competition in networks.
\newblock In {\em International Symposium on Algorithmic Game Theory}, pages
  70--82. Springer, 2008.

\bibitem[CS21]{CaiS21}
Linda Cai and Raghuvansh~R. Saxena.
\newblock 99{\%} revenue with constant enhanced competition.
\newblock In P{\'{e}}ter Bir{\'{o}}, Shuchi Chawla, and Federico Echenique,
  editors, {\em {EC} '21: The 22nd {ACM} Conference on Economics and
  Computation, Budapest, Hungary, July 18-23, 2021}, pages 224--241. {ACM},
  2021.

\bibitem[CY09]{celik2009optimal}
Gorkem Celik and Okan Yilankaya.
\newblock Optimal auctions with simultaneous and costly participation.
\newblock {\em The BE Journal of Theoretical Economics}, 9(1), 2009.

\bibitem[CZ17]{CaiZ17}
Yang Cai and Mingfei Zhao.
\newblock Simple mechanisms for subadditive buyers via duality.
\newblock In {\em Proceedings of the 49th Annual {ACM} {SIGACT} Symposium on
  Theory of Computing, {STOC} 2017, Montreal, QC, Canada, June 19-23, 2017},
  pages 170--183, 2017.

\bibitem[DDT14]{DaskalakisDT14}
Constantinos Daskalakis, Alan Deckelbaum, and Christos Tzamos.
\newblock {The Complexity of Optimal Mechanism Design}.
\newblock In {\em the 25th ACM-SIAM Symposium on Discrete Algorithms (SODA)},
  2014.

\bibitem[DDT17]{DaskalakisDT17}
Constantinos Daskalakis, Alan Deckelbaum, and Christos Tzamos.
\newblock Strong duality for a multiple-good monopolist.
\newblock {\em Econometrica}, 85(3):735--767, 2017.

\bibitem[DW17]{DevanurW17}
Nikhil~R. Devanur and S.~Matthew Weinberg.
\newblock The optimal mechanism for selling to a budget constrained buyer: The
  general case.
\newblock In {\em Proceedings of the 2017 {ACM} Conference on Economics and
  Computation, {EC} '17, Cambridge, MA, USA, June 26-30, 2017}, pages 39--40,
  2017.

\bibitem[EFF{\etalchar{+}}17a]{EdenFFTW17b}
Alon Eden, Michal Feldman, Ophir Friedler, Inbal Talgam{-}Cohen, and S.~Matthew
  Weinberg.
\newblock The competition complexity of auctions: {A} bulow-klemperer result
  for multi-dimensional bidders.
\newblock In {\em Proceedings of the 2017 {ACM} Conference on Economics and
  Computation, {EC} '17, Cambridge, MA, USA, June 26-30, 2017}, page 343, 2017.

\bibitem[EFF{\etalchar{+}}17b]{EdenFFTW17a}
Alon Eden, Michal Feldman, Ophir Friedler, Inbal Talgam{-}Cohen, and S.~Matthew
  Weinberg.
\newblock A simple and approximately optimal mechanism for a buyer with
  complements: Abstract.
\newblock In {\em Proceedings of the 2017 {ACM} Conference on Economics and
  Computation, {EC} '17, Cambridge, MA, USA, June 26-30, 2017}, page 323, 2017.

\bibitem[Elk07]{elkind2007designing}
Edith Elkind.
\newblock Designing and learning optimal finite support auctions.
\newblock In {\em Proceedings of the eighteenth annual ACM-SIAM symposium on
  Discrete algorithms}, pages 736--745, 2007.

\bibitem[FGKK16]{fiat2016fedex}
Amos Fiat, Kira Goldner, Anna~R Karlin, and Elias Koutsoupias.
\newblock The fedex problem.
\newblock In {\em Proceedings of the 2016 ACM Conference on Economics and
  Computation}, pages 21--22, 2016.

\bibitem[FLLT18]{FuLLT18}
Hu~Fu, Christopher Liaw, Pinyan Lu, and Zhihao~Gavin Tang.
\newblock The value of information concealment.
\newblock In {\em Proceedings of the Twenty-Ninth Annual {ACM-SIAM} Symposium
  on Discrete Algorithms, {SODA} 2018, New Orleans, LA, USA, January 7-10,
  2018}, pages 2533--2544, 2018.

\bibitem[GILL24]{gonczarowski2024revenue}
Yannai~A Gonczarowski, Nicole Immorlica, Yingkai Li, and Brendan Lucier.
\newblock Revenue maximization for buyers with costly participation.
\newblock In {\em Proceedings of the 2024 Annual ACM-SIAM Symposium on Discrete
  Algorithms (SODA)}, pages 41--73. SIAM, 2024.

\bibitem[Gue81]{guesnerie1981taxation}
Roger Guesnerie.
\newblock {\em On taxation and incentives: Further reflections on the limits to
  redistribution}.
\newblock Inst. f{\"u}r Ges.-und Wirtschaftswiss., Wirtschaftstheoretische
  Abt., Univ., 1981.

\bibitem[Ham79]{hammond1979straightforward}
Peter~J Hammond.
\newblock Straightforward individual incentive compatibility in large
  economies.
\newblock {\em The Review of Economic Studies}, 46(2):263--282, 1979.

\bibitem[Hic35]{hicks1935annual}
John~R Hicks.
\newblock Annual survey of economic theory: the theory of monopoly.
\newblock {\em Econometrica: Journal of the Econometric Society}, pages 1--20,
  1935.

\bibitem[HN13]{HartN13}
Sergiu Hart and Noam Nisan.
\newblock The menu-size complexity of auctions.
\newblock In {\em the 14th ACM Conference on Electronic Commerce (EC)}, 2013.

\bibitem[HR15]{HartR15}
Sergiu Hart and Philip~J. Reny.
\newblock {Maximizing Revenue with Multiple Goods: Nonmonotonicity and Other
  Observations}.
\newblock {\em Theoretical Economics}, 10(3):893--922, 2015.

\bibitem[Jeh01]{jehle2001advanced}
Geoffrey~Alexander Jehle.
\newblock {\em Advanced microeconomic theory}.
\newblock Pearson Education India, 2001.

\bibitem[LP18]{LiuP18}
Siqi Liu and Christos{-}Alexandros Psomas.
\newblock On the competition complexity of dynamic mechanism design.
\newblock In {\em Proceedings of the Twenty-Ninth Annual {ACM-SIAM} Symposium
  on Discrete Algorithms, {SODA} 2018, New Orleans, LA, USA, January 7-10,
  2018}, pages 2008--2025, 2018.

\bibitem[McA93]{mcafee1993mechanism}
R~Preston McAfee.
\newblock Mechanism design by competing sellers.
\newblock {\em Econometrica: Journal of the econometric society}, pages
  1281--1312, 1993.

\bibitem[MR78]{mussa1978monopoly}
Michael Mussa and Sherwin Rosen.
\newblock Monopoly and product quality.
\newblock {\em Journal of Economic theory}, 18(2):301--317, 1978.

\bibitem[MV07]{ManelliV07}
A.~M. Manelli and D.~R. Vincent.
\newblock {Multidimensional Mechanism Design: Revenue Maximization and the
  Multiple-Good Monopoly}.
\newblock {\em Journal of Economic Theory}, 137(1):153--185, 2007.

\bibitem[Mye81]{Mye81}
Roger~B Myerson.
\newblock Optimal auction design.
\newblock {\em Mathematics of operations research}, 1981.

\bibitem[Pav11]{Pavlov11}
Gregory Pavlov.
\newblock Optimal mechanism for selling two goods.
\newblock {\em The B.E. Journal of Theoretical Economics}, 11(3), 2011.

\bibitem[PS97]{peters1997competition}
Michael Peters and Sergei Severinov.
\newblock Competition among sellers who offer auctions instead of prices.
\newblock {\em Journal of Economic Theory}, 75(1):141--179, 1997.

\bibitem[Tha04]{Thanassoulis04}
John Thanassoulis.
\newblock Haggling over substitutes.
\newblock {\em Journal of Economic Theory}, 117:217--245, 2004.

\bibitem[Viv99]{vives1999oligopoly}
X~Vives.
\newblock {\em Oligopoly Pricing: Old Ideas and New Tools}.
\newblock MIT Press, 1999.

\bibitem[VSPSH53]{von1953theory}
Heinrich Von~Stackelberg, Alan~T Peacock, Erich Schneider, and TW~Hutchison.
\newblock The theory of the market economy.
\newblock {\em Economica}, 20(80):384, 1953.

\end{thebibliography}

\appendix


\section{Missing Proofs: Taxation Principle}
\label{sec:prelim-app}

In this section we provide a proof of \Cref{obs:taxation}.  Recall the statement:

\medskip

\noindent
\textbf{Observation 2.1} (Taxation Principle).  \emph{For any mechanism $\mathcal{M}$ for either seller there is a strategically equivalent proper pricing mechanism. In the corresponding pricing rule, $M(x)$ is the minimum expected payment over any distribution of buyer messages that results in an allocation with probability at least $x$ (or $M(x) = \infty$ if no such message exists).}

\medskip

To see why \cref{obs:taxation} holds intuitively, note that it simply replaces each mechanism with an equivalent pricing rule assuming the mechanism was run by a monopolist.  What \cref{obs:taxation} claims is that, in our duopolist setting, the original joint mechanism is also equivalent to the resulting joint pricing rules. The reason is that since the buyer participates in any mechanism to completion before engaging with the other mechanism, only the outcome of each mechanism ({\em i.e.}, its distribution over allocation and payment) can impact buyer behavior in the other.  
Moreover, as the buyer is risk-neutral with quasi-linear utility, only the resulting probability of allocation and \emph{expected} payment can influence buyer behavior. By definition, the constructed pricing rules do not change allocation probabilities or expected payments, so \cref{obs:taxation} follows.

More formally, suppose the two sellers select mechanisms $\mathcal{M}_1, \mathcal{M}_2$.  For any undominated strategy $(i,\sigma_i,\sigma_{-i})$ for the buyer, let $a_i$ and $t_i$ denote the resulting allocation probability and expected payment from mechanism $\mathcal{M}_i$, and let $a_{-i}$ and $t_{-i}$ denote the allocation probability and expected payment from $\mathcal{M}_{-i}$ conditional on not obtaining the item from mechanism $i$.  Then the buyer's expected utility from this strategy is
\[ v a_i - t_i + (1-a_i)( v a_{-i} - t_{-i} ). \]
In particular, the buyer's expected utility is entirely determined by $(a_i, a_{-i}, t_i, t_{-i})$.

Let $M_1$, $M_2$ be the corresponding proper pricing rules described in \cref{obs:taxation} for mechanisms $\mathcal{M}_1, \mathcal{M}_2$.  Consider what would occur if seller $1$ deviated from mechanism $\mathcal{M}_1$ to the pricing mechanism $M_1$.
By construction of $M_1$, any profile of allocation probabilities and expected transfers $(a_i, a_{-i}, t_i, t_{-i})$ is attainable by the buyer from the profile of mechanisms $(M_1, \mathcal{M}_2)$ if and only if they are obtainable via an undominated strategy for the mechanisms $(\mathcal{M}_1, \mathcal{M}_2)$.  Thus the corresponding obtainable expected utilities for the buyer are identical as well.  The correspondence of buyer-optimal behavior thus results in the same buyer utility and transfers to the sellers, and in particular for seller $1$, so it is without loss for seller $1$ to employ pricing mechanism $M_1$ instead of $\mathcal{M}_1$.  By symmetry, the same is true for seller $2$.  Thus each seller has a strategically equivalent proper pricing mechanism, regardless of the choice of the other seller, as claimed.


\section{Missing Proofs from the Stackelberg Setting}
\label{app:stackelberg}

In this appendix, we provide the proofs omitted from the analysis described in \cref{sec:stackelberg}. 

\subsection{Lemmas about Buyer Behavior}

Throughout this section, we fix a single lottery mechanism $A$ for Alice with probability $0 < z < 1$ and price $p > 0$ and let $a = pz$. We start with the following simple lemma showing that a buyer with a large type will choose a large value of $x$ from Bob.
\begin{lemma}
\label{lemma:util}
Let $B$ be a mechanism for Bob and $\epsilon > 0$. For all $v > \max\paren*{ p, \frac{ B(1) }{ \epsilon \cdot \paren*{ 1 - z } } }$ and $x \leq 1 - \epsilon$, we have:
\begin{align*}
\utilba\paren*{ B, x, v } &< \utilba\paren*{ B, 1, v } \\
\utilab\paren*{ B, x, v } &< \utilab\paren*{ B, 1, v } .
\end{align*}
\end{lemma}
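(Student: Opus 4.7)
The plan is to treat the two inequalities separately after a single reduction, and in each case expand the utility expressions and bound the resulting difference using (i) the range condition $v > p$ to eliminate the $\max$ in $\utilba$, (ii) the monotonicity of $B$ implied by properness (\cref{obs:taxation}), and (iii) the quantitative lower bound $v > B(1)/(\epsilon(1-z))$ to dominate the remaining $B(1)$ term.

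First I would observe that since $v > p$ and $a = pz$, we have $zv - a = z(v-p) > 0$, so $\max\{0, zv - a\} = zv - a$. Substituting into \cref{eq:utilba} and subtracting,
\[
\utilba(B, 1, v) - \utilba(B, x, v) = (1-x)\bigl[(1-z)v + a\bigr] + \bigl[B(x) - B(1)\bigr].
\]
Since $B$ is proper and therefore non-decreasing, $B(x) \leq B(1)$, but the loss is at most $B(1)$. It remains to show $(1-x)(1-z)v > B(1)$, which follows from $(1-x) \geq \epsilon$ and $v > B(1)/(\epsilon(1-z))$. Since $a \geq 0$, this makes the whole expression strictly positive.

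For the second inequality, expanding \cref{eq:utilab} gives
\[
\utilab(B, 1, v) - \utilab(B, x, v) = (1-z)\bigl[(1-x)v + B(x) - B(1)\bigr].
\]
As above, $B(x) \geq 0$, so it suffices to show $(1-x)v > B(1)$; this again follows from $(1-x) \geq \epsilon$, $v > B(1)/(\epsilon(1-z))$, and $1/(1-z) \geq 1$.

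I do not expect a substantive obstacle: the lemma is a quantitative version of the intuitive statement that very high types want the largest allocation Bob offers, and the only subtlety is making sure the $\max$ in $\utilba$ is resolved before differencing, which is exactly what the hypothesis $v > p$ buys us. The choice of threshold $B(1)/(\epsilon(1-z))$ is precisely calibrated so that the $(1-x)(1-z)v$ term dominates $B(1)$, so once the algebra above is written down the inequalities drop out.
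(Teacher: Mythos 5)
Your proposal is correct and takes essentially the same route as the paper: expand the two utility differences, use $v>p$ to resolve the $\max$, drop the non-negative terms $B(x)\geq 0$ and $a\geq 0$, and let the threshold $v>B(1)/(\epsilon(1-z))$ supply the strict inequality. One small nit: in the first case you cite monotonicity of $B$ ("$B(x)\leq B(1)$") to justify that the loss from $B(x)-B(1)$ is at most $B(1)$, but the inequality you actually need there is $B(x)\geq 0$ (non-negativity of proper pricing functions), not monotonicity—monotonicity bounds the term the wrong way. You invoke $B(x)\geq 0$ correctly in the second case, so this is cosmetic rather than a gap.
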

\begin{proof}

 For the first one, note using \cref{eq:utilba} that:
\begin{align*}
\utilba\paren*{ B, x, v } - \utilba\paren*{ B, 1, v } &\leq B\paren*{ 1 } - B\paren*{ x } + \paren*{ 1 - x } \cdot \paren*{ z v - a - v } \\
&\leq B\paren*{ 1 } - v \cdot \paren*{ 1 - x } \cdot \paren*{ 1 - z } \\
&< 0 . \tag{As $x \leq 1 - \epsilon$ and $v > \frac{ B(1) }{ \epsilon \cdot \paren*{ 1 - z } }$}
\end{align*}
Similarly, we have using \cref{eq:utilab} that:
\begin{align*}
\utilab\paren*{ B, x, v } - \utilab\paren*{ B, 1, v } &\leq \paren*{ 1 - z } \cdot \paren*{ x v - B\paren*{ x } - v + B\paren*{ 1 } } \\
&\leq \paren*{ 1 - z } \cdot \paren*{ B\paren*{ 1 } - v \cdot \paren*{ 1 - x } } \\
&< 0 . \tag{As $x \leq 1 - \epsilon$ and $v > \frac{ B(1) }{ \epsilon \cdot \paren*{ 1 - z } }$}
\end{align*}
\end{proof}

We now establish some properties about the buyer behavior in our duopolist setting that will be useful throughout \cref{sec:stackelberg}. First, we show that a buyer only buys from Alice first if his value exceeds the bang-per-buck price $p$ offered by Alice. 
\begin{lemma}
\label{lemma:typesoftypes:abc}
For all mechanisms $B$ and all $v \in \vab\paren*{ B }$, we have $\mathtt{x}_B(v) > 0$ and $v > p$. It follows that $p \leq \abstart\paren*{ B }$.
\end{lemma}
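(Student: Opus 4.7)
The plan is to exploit the tie-breaking convention together with the utility expressions in \cref{eq:utilba,eq:utilab}. For $v \in \vab(B)$, the convention that ties are broken in favor of purchasing from Bob first implies the strict inequality $\max_x \utilab(B, x, v) > \max_x \utilba(B, x, v)$, and I will derive both conclusions of the lemma by contradiction from this strict comparison.

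To show $v > p$, suppose instead that $v \leq p$. Then $zv - a = z(v - p) \leq 0$, so the $\max(0, zv - a)$ term in \cref{eq:utilba} vanishes, and both $\utilba$ and $\utilab$ are maximized in $x$ by the common argmax of $xv - B(x)$. Writing $U := \max_x (xv - B(x))$, a direct calculation then gives
\[
\max_x \utilba(B, x, v) - \max_x \utilab(B, x, v) = z\bigl(U + p - v\bigr),
\]
which is non-negative since $U \geq 0$ (take $x = 0$, using $B(0) = 0$) and $p \geq v$. This contradicts the strict inequality above, so $v > p$.

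To show $\mathtt{x}_B(v) > 0$, suppose instead that $\mathtt{x}_B(v) = 0$. Because the buyer breaks ties in favor of larger $x$ within a fixed choice of which seller to visit first, this forces $x = 0$ to be the unique largest maximizer of $xv - B(x)$, so $\max_x (xv - B(x)) = 0$ and hence $\max_x \utilab(B, x, v) = zv - a$. On the other hand, plugging $x = 0$ into \cref{eq:utilba} and invoking the just-established $v > p$ yields $\utilba(B, 0, v) = \max(0, zv - a) = zv - a$, so $\max_x \utilba(B, x, v) \geq zv - a = \max_x \utilab(B, x, v)$, again contradicting the strict inequality. The final claim $p \leq \abstart(B)$ is then immediate: if $\vab(B) = \emptyset$ then $\abstart(B) = \infty$, and otherwise $\abstart(B) = \inf \vab(B) \geq p$ since every element of $\vab(B)$ strictly exceeds $p$. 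The only real subtlety is the bookkeeping around the two layers of tie-breaking (ordering first, then $x$-choice); the arithmetic itself is routine.
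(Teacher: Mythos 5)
Your proof is correct. It takes a mildly different route than the paper's: the paper first establishes $\mathtt{x}_B(v) > 0$ (by arguing that if $v\cdot\mathtt{x}_B(v)\leq B(\mathtt{x}_B(v))$ then the buyer could weakly improve by visiting Bob first, violating the tie-break), and then uses this to conclude $v>p$ from the single inequality $z(v\cdot\mathtt{x}_B(v)-B(\mathtt{x}_B(v)))<\mathtt{x}_B(v)\max(0,zv-a)$. You reverse the logical dependence: you first obtain $v>p$ by a clean case analysis in which assuming $v\leq p$ kills the $\max(0,zv-a)$ term, aligns the maximizers of $\utilba$ and $\utilab$, and reduces the comparison to a sign computation; you then use $v>p$ to evaluate $\utilba(B,0,v)$ and rule out $\mathtt{x}_B(v)=0$. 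Both routes hinge on exactly the same ingredients (the strict preference for Alice first encoded by the tie-break, and the two utility formulas), so neither is more general; yours is perhaps slightly more pedestrian in that it isolates the two claims into independent contradictions, while the paper's is more compressed. Both correctly pass to the trivial conclusion $p\leq\abstart(B)$. One small remark: your phrasing ``the common argmax'' implicitly uses that $1-z>0$ so that $\utilab$ is a strictly increasing affine function of $xv-B(x)$; this is fine since the ambient discussion has already restricted to $z<1$, but it is worth being explicit that this is where that assumption enters.
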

\begin{proof}
Note that $v \cdot \mathtt{x}_B(v) > B\paren*{ \mathtt{x}_B(v) }$ because if not, the buyer can get the same utility by purchasing from Bob first, and our tie-breaking would contradict $v \in \vab\paren*{ B }$. It follows that $\mathtt{x}_B(v) > 0$. To show that $v > p$, note that $v \in \vab\paren*{ B }$ and our tie-breaking implies that $\utilba\paren*{ B, \mathtt{x}_B(v), v } < \utilab\paren*{ B, \mathtt{x}_B(v), v }$. Simplifying using \cref{eq:utilba,eq:utilab}, we get:
\[
z \cdot \paren*{ v \cdot \mathtt{x}_B(v) - B\paren*{ \mathtt{x}_B(v) } } < \mathtt{x}_B(v) \cdot \max\paren*{ 0, z v - a } .
\]
As the left side is positive, we get $z v > a$ and we are done. 
\end{proof}

Next, we show that the option $\mathtt{x}_B\paren*{ v }$ chosen by the buyer is non-decreasing in $v$.
\begin{lemma}
\label{lemma:typesoftypes:fg}
For all mechanisms $B$ and all $0 \leq v \leq v'$, we have $\mathtt{x}_B\paren*{ v } \leq \mathtt{x}_B\paren*{ v' }$. 
\end{lemma}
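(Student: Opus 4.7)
My plan is to do a case analysis based on whether $v$ and $v'$ lie in $\vba\paren*{ B }$ or $\vab\paren*{ B }$. In every case, I will rewrite $\mathtt{x}_B\paren*{ \cdot }$ as the (largest) argmax of an expression of the form $x \cdot w - B\paren*{ x }$ for some ``effective value'' $w = w(v)$ that depends only on $v$ (and the fixed single-lottery mechanism $A$). A standard Topkis-style monotonicity argument, together with the buyer's tiebreaking in favor of larger $x$, then reduces the claim to the pointwise inequality $w(v) \leq w(v')$. By \cref{lemma:vab}, $\vab\paren*{ B }$ is upward-closed, so the configuration $v \in \vab\paren*{ B }$ with $v' \in \vba\paren*{ B }$ (and $v \leq v'$) cannot arise; only three cases remain.

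To identify the effective values I rewrite \cref{eq:utilba,eq:utilab}. Setting $u(v) = \max\paren*{ 0, zv - a }$, we have $\utilba\paren*{ B, x, v } = x \cdot \paren*{ v - u(v) } - B\paren*{ x } + u(v)$, so if $v \in \vba\paren*{ B }$, then $\mathtt{x}_B\paren*{ v }$ is the largest maximizer of $x \cdot w(v) - B\paren*{ x }$ with $w(v) = v - u(v)$; this equals $v$ for $v \leq p$ and $\paren*{ 1 - z } v + a$ for $v \geq p$. On the other hand, if $v \in \vab\paren*{ B }$, then $\utilab\paren*{ B, x, v } = z v - a + \paren*{ 1 - z } \cdot \paren*{ x v - B\paren*{ x } }$, and since $1 - z > 0$, $\mathtt{x}_B\paren*{ v }$ is the largest maximizer of $x v - B\paren*{ x }$; here the effective value is simply $w(v) = v$.

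The key inequality $w(v) \leq w(v')$ is then verified case by case. When $v, v' \in \vba\paren*{ B }$, I observe that $w\paren*{ \cdot }$ is continuous at $p$ (both pieces agree at value $p$) and has positive slope everywhere (slope $1$ on $[0, p]$ and slope $1 - z > 0$ on $[p, \infty)$), so it is non-decreasing. When $v, v' \in \vab\paren*{ B }$, both effective values are the raw valuations, so the inequality is trivial. The delicate case is $v \in \vba\paren*{ B }$ and $v' \in \vab\paren*{ B }$: if $v \leq p$ then $w(v) = v \leq v' = w(v')$; if $v > p$ then $w(v) = \paren*{ 1 - z } v + a \leq \paren*{ 1 - z } v' + a$, and because \cref{lemma:typesoftypes:abc} gives $v' > p$ whenever $v' \in \vab\paren*{ B }$, the inequality $\paren*{ 1 - z } v' + a \leq v'$ (equivalently $a \leq z v'$, i.e., $p \leq v'$) holds, completing the chain.

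Finally, with $w(v) \leq w(v')$ in hand, the monotonicity of the largest maximizer of $x \cdot w - B\paren*{ x }$ follows from the (trivial) supermodularity of this expression in $\paren*{ x, w }$ combined with the largest-$x$ selection rule. The convention $\mathtt{x}_B\paren*{ \infty } = 1$ handles the case $v' = \infty$. I expect the main obstacle to be bookkeeping at the boundary between $\vba\paren*{ B }$ and $\vab\paren*{ B }$: the tiebreaking rule that resolves ties in favor of buying from Bob first, and then in favor of larger $x$, must be tracked carefully so that $\mathtt{x}_B\paren*{ v }$ is indeed the largest-$x$ maximizer of the relevant effective objective on each side, ensuring that the Topkis step applies cleanly across the $\vba$--$\vab$ crossover.
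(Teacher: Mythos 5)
Your approach is clean and almost correct, but there is a circular dependency you should be aware of. You invoke Lemma~\ref{lemma:vab} (upward-closure of $\vab\paren*{B}$) to rule out the configuration $v \in \vab\paren*{B}$, $v' \in \vba\paren*{B}$ with $v \leq v'$ — and indeed your effective-value framework genuinely cannot handle that configuration, because for $v>p$ one has $w(v) = v > (1-z)v' + a = w(v')$ whenever $v' < \frac{v-a}{1-z}$, so the Topkis step would run backwards. The trouble is that in the paper's dependency graph, Lemma~\ref{lemma:vab} is derived from Lemmas~\ref{lemma:typesoftypes:d} and \ref{lemma:typesoftypes:e}, and the ``combining'' half of Lemma~\ref{lemma:typesoftypes:e} explicitly invokes Lemma~\ref{lemma:typesoftypes:fg} (it needs $\mathtt{x}_B\paren*{v'}\geq\mathtt{x}_B\paren*{v}$ to compare bang-per-buck prices). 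So as the paper is organized, $\vab$-upward-closure is \emph{downstream} of the lemma you are trying to prove, and citing it here is circular.

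The paper avoids this by not appealing to $\vab$-upward-closure at all: on $[p,\infty)$ it writes down, for each of the four $\vba/\vab$ configurations of the two types, the two ``cross'' revealed-preference inequalities that mix $\utilba$ and $\utilab$ appropriately (type $v$ prefers its choice to the \emph{opposite-branch} option $\mathtt{x}_B\paren*{v'}$, and vice versa), adds them, and every time the cross terms cancel to $\paren*{1-z}\paren*{v'-v}\paren*{\mathtt{x}_B\paren*{v'}-\mathtt{x}_B\paren*{v}} \geq 0$. This works even in the ``bad'' configuration $v \in \vab\paren*{B}$, $v' \in \vba\paren*{B}$, because the argument does not require the two types to be optimizing the same single-parameter objective; Lemma~\ref{lemma:vab} then drops out as a corollary. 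There are two easy fixes for your write-up: either replace the appeal to Lemma~\ref{lemma:vab} by the direct revealed-preference addition for that one configuration (matching the paper), or give an independent proof of upward-closure — for instance, letting $g(w) = \max_x \paren*{xw - B(x)}$ be the convex conjugate of $B$, the gap $\max_x \utilab - \max_x \utilba = (1-z)g(v) - g((1-z)v+a)$ is non-decreasing in $v$ on $[p,\infty)$ by convexity of $g$, so once a type strictly prefers going to Alice first, all higher types do too. With either fix in place, your effective-value/Topkis argument is correct and is, if anything, a slightly more modular packaging of the same underlying idea.
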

\begin{proof}
To show that $\mathtt{x}_B\paren*{ v }$ is non-decreasing, it suffices to show that it is non-decreasing in the interval $[0, p]$ and also in the interval $\co*{ p, \infty }$. We prove both these parts separately. For the first part, note that $v \leq v' \in [0, p]$ implies from \cref{lemma:typesoftypes:abc} that $v, v' \in \vba\paren*{ B }$. This means that $\utilba\paren*{ B, \mathtt{x}_B\paren*{ v' }, v } \leq \utilba\paren*{ B, \mathtt{x}_B\paren*{ v }, v }$ and $\utilba\paren*{ B, \mathtt{x}_B\paren*{ v }, v' } \leq \utilba\paren*{ B, \mathtt{x}_B\paren*{ v' }, v' }$. Adding and using \cref{eq:utilba}, we get:
\[
\paren*{ \mathtt{x}_B\paren*{ v' } - \mathtt{x}_B\paren*{ v } } \cdot \paren*{ v' - v } \geq 0 .
\]
The result follows. For the second part, for any $v \leq v' \in \co*{ p, \infty }$, consider the case $v \in \vba\paren*{ B }$ and $v' \in \vab\paren*{ B }$ (the other cases are similar). This means that $\utilab\paren*{ B, \mathtt{x}_B\paren*{ v' }, v } \leq \utilba\paren*{ B, \mathtt{x}_B\paren*{ v }, v }$ and $\utilba\paren*{ B, \mathtt{x}_B\paren*{ v }, v' } \leq \utilab\paren*{ B, \mathtt{x}_B\paren*{ v' }, v' }$. Adding and using \cref{eq:utilba,eq:utilab}, we get:
\[
\paren*{ 1 - z } \cdot \paren*{ \mathtt{x}_B\paren*{ v' } - \mathtt{x}_B\paren*{ v } } \cdot \paren*{ v' - v } \geq 0 .
\]
As $z < 1$, the result follows.
\end{proof}

The next lemma shows that the bang-per-buck price paid by the buyer to Bob is at most his valuation, and is at most $p$ if and only the buyer goes to Bob first. 
\begin{lemma}
\label{lemma:typesoftypes:d}
Let $B$ be a mechanism and $v \geq 0$. If we have $v \in \vba\paren*{ B }$ and $\mathtt{x}_B(v) > 0$, then $\frac{ B\paren*{ \mathtt{x}_B(v) } }{ \mathtt{x}_B(v) } \leq \min\paren*{ p, v }$. Else, if $v \in \vab\paren*{ B }$, we have $p < \frac{ B\paren*{ \mathtt{x}_B(v) } }{ \mathtt{x}_B(v) } < v$.
\end{lemma}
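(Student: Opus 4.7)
The plan is to translate the memberships $v \in \vba\paren*{ B }$ and $v \in \vab\paren*{ B }$, together with the optimality of $\mathtt{x}_B(v)$, into bounds on the bang-per-buck $\frac{ B\paren*{ \mathtt{x}_B(v) } }{ \mathtt{x}_B(v) }$ by expanding \cref{eq:utilba,eq:utilab} and rearranging. The algebra is short; the work lies in picking the right comparisons and in using the tie-breaking rule just enough to obtain the strict inequalities in the second case.

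For the first case, fix $v \in \vba\paren*{ B }$ with $\mathtt{x}_B(v) > 0$. I would use two consequences of this membership: (i) $\utilba\paren*{ B, \mathtt{x}_B(v), v } \geq \utilba\paren*{ B, 0, v }$, since $\mathtt{x}_B(v)$ is the buyer's actual choice in the Bob-first branch; and (ii) $\utilba\paren*{ B, \mathtt{x}_B(v), v } \geq \utilab\paren*{ B, \mathtt{x}_B(v), v }$, which follows because $v \in \vba\paren*{ B }$ gives $\max_x \utilba \geq \max_x \utilab \geq \utilab\paren*{ B, \mathtt{x}_B(v), v }$. Expanding (i) via \cref{eq:utilba} simplifies to $\mathtt{x}_B(v) v - B\paren*{ \mathtt{x}_B(v) } \geq \mathtt{x}_B(v) \cdot \max\paren*{ 0, zv - a } \geq 0$, which already yields $\frac{ B\paren*{ \mathtt{x}_B(v) } }{ \mathtt{x}_B(v) } \leq v$ (and hence the case $v \leq p$). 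For $v > p$, writing $U = \mathtt{x}_B(v) v - B\paren*{ \mathtt{x}_B(v) }$ and using $zv - a > 0$, (ii) becomes $U + \paren*{ 1 - \mathtt{x}_B(v) }\paren*{ zv - a } \geq zv - a + \paren*{ 1 - z } U$, which collapses to $U \geq \mathtt{x}_B(v) \paren*{ v - p }$ and therefore $\frac{ B\paren*{ \mathtt{x}_B(v) } }{ \mathtt{x}_B(v) } \leq p$. The two bounds together give $\leq \min\paren*{ p, v }$.

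For the second case, $v \in \vab\paren*{ B }$ and \cref{lemma:typesoftypes:abc} supply $v > p$ and $\mathtt{x}_B(v) > 0$. Since ties are broken in favor of Bob-first, $v \in \vab\paren*{ B }$ reflects a \emph{strict} preference for Alice-first: $\max_x \utilab\paren*{ B, x, v } > \max_x \utilba\paren*{ B, x, v }$. Lower-bounding the right side by $\utilba\paren*{ B, \mathtt{x}_B(v), v }$ and using $\mathtt{x}_B(v) = \argmax_x \utilab\paren*{ B, x, v } = \argmax_x \paren*{ xv - B(x) }$, the same algebraic manipulation as in (ii) above — now strict — gives $\frac{ B\paren*{ \mathtt{x}_B(v) } }{ \mathtt{x}_B(v) } > p$. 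For the other bound, $\mathtt{x}_B(v) v - B\paren*{ \mathtt{x}_B(v) } = \max_x \paren*{ xv - B(x) } \geq 0$; if this were an equality, then $\max_x \utilab = zv - a = \utilba\paren*{ B, 0, v } \leq \max_x \utilba$, contradicting the strict preference for Alice-first. Hence $\frac{ B\paren*{ \mathtt{x}_B(v) } }{ \mathtt{x}_B(v) } < v$.

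The main obstacle is bookkeeping: one must keep careful track of whether the argmax of $\utilab$ and the argmax of $\utilba$ coincide (they do in the Bob-first case, since the buyer's single choice $\mathtt{x}_B(v)$ does double duty, but not necessarily in the Alice-first case, where one side must be bounded by evaluating at a suboptimal point), and one must invoke the tie-breaking convention in exactly those places where the lemma demands strict rather than weak inequalities.
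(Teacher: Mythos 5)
Your proof is correct and follows essentially the same route as the paper's: both expand the utility expressions $\utilba$ and $\utilab$ at the actual choice $\mathtt{x}_B(v)$, compare against the alternative branch and against the zero option, and invoke the tie-breaking convention precisely where the strict inequalities in the $\vab$ case are needed. The only cosmetic difference is that the paper packages the comparison for $v>p$ as a single biconditional ($v\in\vba\paren*{B}$ iff $B\paren*{\mathtt{x}_B(v)}/\mathtt{x}_B(v)\le p$), whereas you treat the two directions separately, but the underlying algebra and use of the tie-break are identical.
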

\begin{proof}
For all $v > p$ such that $\mathtt{x}_B(v) > 0$, note by our tiebreaking that we have $v \in \vba\paren*{ B }$ if and only if $\utilab\paren*{ B, \mathtt{x}_B(v), v } \leq \utilba\paren*{ B, \mathtt{x}_B(v), v }$. Using \cref{eq:utilba,eq:utilab}, we get that $v \in \vba\paren*{ B }$ if and only if $\frac{ B\paren*{ \mathtt{x}_B(v) } }{ \mathtt{x}_B(v) } \leq p$. Next, note that for all $v \leq p$ such $\mathtt{x}_B(v) > 0$, we have $v \in \vba\paren*{ B }$ by \cref{lemma:typesoftypes:abc} which means that $\utilba\paren*{ B, 0, v } \leq \utilba\paren*{ B, \mathtt{x}_B(v), v }$. Using \cref{eq:utilba}, we get that $\frac{ B\paren*{ \mathtt{x}_B(v) } }{ \mathtt{x}_B(v) } \leq v$. Finally, for all $v \in \vab\paren*{ B }$ which implies $v > p$ and $\mathtt{x}_B(v) > 0$ by \cref{lemma:typesoftypes:abc}, we have by our tie breaking that $\utilab\paren*{ B, 0, v } < \utilab\paren*{ B, \mathtt{x}_B(v), v }$ which from \cref{eq:utilba,eq:utilab} gives $\frac{ B\paren*{ \mathtt{x}_B(v) } }{ \mathtt{x}_B(v) } < v$ as $z < 1$. Combining gives the lemma.
\end{proof}

Now, we show that the bang-per-buck price paid by the buyer is also non-decreasing in $v$, by first showing a more general inequality.
\begin{lemma}
\label{lemma:typesoftypes:e}
Let $B$ be a mechanism and $v \geq 0$ satisfy $\mathtt{x}_B(v) > 0$. For all $0 < \delta < v$ and $\mathtt{x}_B(v) - \frac{ \delta \cdot \mathtt{x}_B(v) }{ 2v } < x' \leq 1$, we have $\frac{ B\paren*{ x' } }{ x' } \geq \frac{ B\paren*{ \mathtt{x}_B(v) } }{ \mathtt{x}_B(v) } - \delta$. Combining with \cref{lemma:typesoftypes:fg}, for such $v$, we have for all $v' \geq v$ that $\frac{ B\paren*{ \mathtt{x}_B(v) } }{ \mathtt{x}_B(v) } \leq \frac{ B\paren*{ \mathtt{x}_B\paren*{ v' } } }{ \mathtt{x}_B\paren*{ v' } }$.
\end{lemma}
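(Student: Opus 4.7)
The plan is to exploit the optimality of $\mathtt{x}_B(v)$ for the buyer's utility-maximization problem to obtain a one-sided ``subgradient'' inequality for $B$ at the point $\mathtt{x}_B(v)$, and then turn this additive bound on $B$ into a multiplicative bound on the bang-per-buck $B(x')/x'$ by a short algebraic manipulation. Throughout, write $x = \mathtt{x}_B(v)$ and assume $B(x') < \infty$ (otherwise the inequality is trivial).

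First I would split on whether $v \in \vba(B)$ or $v \in \vab(B)$ to obtain a lower bound of the form $B(x') \geq B(x) + (x' - x) \, m$, where $0 \leq m \leq v$. In the case $v \in \vba(B)$, optimality of $x$ for \cref{eq:utilba} gives $\utilba(B, x, v) \geq \utilba(B, x', v)$, which after rearranging yields $B(x') \geq B(x) + (x' - x)(v - c)$, where $c = \max(0, zv - a)$; note $0 \leq v - c \leq v$. In the case $v \in \vab(B)$, optimality for \cref{eq:utilab} reduces to maximizing $xv - B(x)$ (since $z < 1$), giving $B(x') \geq B(x) + (x'-x) v$. In either case I obtain $B(x) - B(x') \leq (x - x') v$ whenever $x' < x$. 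Along the way, I will also record that $xv - B(x) \geq 0$, which follows in the first case from $\utilba(B, x, v) \geq \utilba(B, 0, v)$ and in the second from $\utilab(B, x, v) \geq \utilab(B, 0, v)$.

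Next, for $x' < x$, I compute
\[
\frac{B(x)}{x} - \frac{B(x')}{x'} \;=\; \frac{x' B(x) - x B(x')}{x x'}\;\leq\;\frac{(x - x')(xv - B(x))}{x x'}\;\leq\;\frac{(x-x')\,v}{x'},
\]
where the first inequality uses $B(x') \geq B(x) - (x - x') v$ and the second uses $xv - B(x) \leq xv$. The hypothesis $x' > x - \tfrac{\delta x}{2v}$ gives both $x - x' < \tfrac{\delta x}{2v}$ and (since $\delta < v$) $x' > x/2$, so the right-hand side is strictly less than $\delta$. For $x' \geq x$, the inequality is immediate from the convexity of $B$ and $B(0) = 0$ (standard fact: $B(\cdot)/\cdot$ is non-decreasing on $[0, \bar{x}]$, and $\infty$ beyond), so $B(x')/x' \geq B(x)/x$. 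This covers all $x'$ in the stated range.

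For the ``combining'' conclusion, I would apply \cref{lemma:typesoftypes:fg} to get $\mathtt{x}_B(v') \geq \mathtt{x}_B(v) > 0$, so setting $x' = \mathtt{x}_B(v')$ lies in the allowed range for \emph{every} $\delta > 0$. Letting $\delta \to 0$ in the first part yields $\frac{B(\mathtt{x}_B(v'))}{\mathtt{x}_B(v')} \geq \frac{B(\mathtt{x}_B(v))}{\mathtt{x}_B(v)}$, as claimed. The main obstacle is the case analysis for deriving $m \leq v$: one has to keep track of the ``outside-option'' term $(1 - x)\max(0, zv - a)$ in the Bob-first utility and verify that the effective marginal value is still bounded by $v$; once that is done, the rest is arithmetic and convexity of $B$.
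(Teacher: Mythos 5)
Your proof is correct and follows essentially the same approach as the paper's: both derive from the buyer's optimality of $\mathtt{x}_B(v)$ a linear lower bound $B(x') \geq B(\mathtt{x}_B(v)) + (x'-\mathtt{x}_B(v))\cdot v'$ for an effective value $v' \leq v$ (the paper packages your $v$, $v-c$, and $a+(1-z)v$ cases into a single $v'$), and then turn this into a bang-per-buck bound by the same arithmetic using the hypothesis on $x'$. The only cosmetic difference is that you split on $x' < \mathtt{x}_B(v)$ versus $x' \geq \mathtt{x}_B(v)$ and invoke convexity of the proper pricing function $B$ for the second case, whereas the paper's single chain of inequalities covers all $x'$ in the stated range uniformly without ever appealing to convexity.
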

\begin{proof}
Observe that $\mathtt{x}_B(v) > 0$ implies that $x' > 0$. By setting $v' = v$ if $v \in \vab\paren*{ B }$ or if $v \leq p$ and $v' = a + \paren*{ 1 - z } \cdot v$ otherwise, note that we have from \cref{eq:utilba,eq:utilab} that:
\[
x' \cdot \paren*{ v' - \frac{ B\paren*{ x' } }{ x' } } \leq \mathtt{x}_B(v) \cdot \paren*{ v' - \frac{ B\paren*{ \mathtt{x}_B(v) } }{ \mathtt{x}_B(v) } } .
\]
Note from the definition of $v'$ and \cref{lemma:typesoftypes:d} that the last factor is non-negative. Using the bounds $v' \leq v$ and $\mathtt{x}_B(v) \leq x' + \frac{ \delta \cdot \mathtt{x}_B(v) }{ 2v }$, this rearranges to:
\[
x' \cdot \paren*{ v' - \frac{ B\paren*{ x' } }{ x' } } \leq x' \cdot \paren*{ v' - \frac{ B\paren*{ \mathtt{x}_B(v) } }{ \mathtt{x}_B(v) } } + \frac{ \delta \cdot \mathtt{x}_B(v) }{ 2 } .
\]
As $\frac{ \mathtt{x}_B(v) }{ 2 } \leq x'$, we can divide by $x'$ to get the lemma.
\end{proof}

Now, we show that a buyer with type $\abstart\paren*{ B }$ always buys from Bob first, unless $\abstart\paren*{ B } = \infty$.
\begin{lemma}
\label{lemma:typesoftypes:h}
For all mechanisms $B$, if $\vab\paren*{ B } \neq \emptyset$ (that is, if $\abstart\paren*{ B } < \infty$), then, we have $\abstart\paren*{ B } \in \vba\paren*{ B }$. 
\end{lemma}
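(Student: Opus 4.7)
The plan is to show that $\vab(B)$ is an open subset of $[0,\infty)$, so its infimum cannot belong to it. Recall from our tie-breaking rule that the buyer prefers to approach Bob first whenever tied, so
\[
\vab(B) = \set*{ v \geq 0 : U_{AB}(v) > U_{BA}(v) } ,
\]
where $U_{AB}(v) = \max_{x \in [0,1]} \utilab(B,x,v)$ and $U_{BA}(v) = \max_{x \in [0,1]} \utilba(B,x,v)$ are the buyer's optimal utilities under each ordering. Both maxima are attained because, by the properness of $B$, the effective action space $[0,\bar{x}]$ is compact and $B$ is continuous on it.

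The first step is to verify that $U_{AB}$ and $U_{BA}$ are continuous in $v$. Using \cref{eq:utilab} we can factor
\[
U_{AB}(v) = zv - a + (1-z)\max_{x \in [0,1]} \paren*{ xv - B(x) } ,
\]
so $U_{AB}$ is a linear function plus a pointwise supremum of affine functions of $v$, hence convex and therefore continuous. Similarly, \cref{eq:utilba} expresses $\utilba(B,x,v)$ as a piecewise-linear continuous function of $v$ for each fixed $x$, and $U_{BA}(v)$ is the pointwise maximum over $x$ of these; the max of an equicontinuous family of functions on a compact parameter set is itself continuous.

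Given continuity, $\vab(B)$ is the preimage of the open set $(0,\infty)$ under the continuous function $v \mapsto U_{AB}(v) - U_{BA}(v)$, hence open in $[0,\infty)$. Since we assumed $\vab(B) \neq \emptyset$, the quantity $s = \abstart(B) = \inf \vab(B)$ is finite, and by \cref{lemma:vab} we have $\vab(B) = (s,\infty)$ or $[s,\infty)$. Openness rules out the latter, so $s \notin \vab(B)$. Every type lies in exactly one of $\vab(B)$ or $\vba(B)$, so $s \in \vba(B)$, as claimed.

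The only subtlety, and the main thing to get right, is the continuity argument for $U_{AB}$ and $U_{BA}$: one must use both the properness of $B$ (which ensures the maximum is attained on a compact subinterval where $B$ is continuous) and the joint continuity of $\utilab$ and $\utilba$ in $(x,v)$, so that Berge's theorem (or a direct $\epsilon/2$ argument) delivers continuity of the value function. Everything else is bookkeeping using the tie-breaking convention and \cref{lemma:vab}.
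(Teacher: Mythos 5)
Your proposal is correct and takes a genuinely different route from the paper. The paper proves the claim directly: it assumes $s = \abstart(B) > p$ (the case $s \leq p$ is handled by \cref{lemma:typesoftypes:abc}), observes that $s - \delta \in \vba(B)$ for every $\delta > 0$, writes out the resulting utility inequality, and takes the limit $\delta \to 0$ to conclude $\utilab(B, \mathtt{x}_B(s), s) \leq \utilba(B, \mathtt{z}_B(s'), s)$ for $s' = a + (1-z)s$, contradicting $s \in \vab(B)$. Your proof abstracts this into a topological statement: the value functions $U_{AB}, U_{BA}$ are continuous (each a convex function of $v$, as a max of affine or piecewise-affine-convex functions over the compact effective action space), so $\vab(B) = \{v : U_{AB}(v) > U_{BA}(v)\}$ is open, and an open upward-closed set cannot contain its finite infimum. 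This is cleaner and more conceptual; the paper's concrete limit manipulation is essentially verifying continuity by hand at the single point $s$. What each buys: your route isolates the core continuity phenomenon once and for all, while the paper's route avoids the need to justify Berge-style value-function continuity with the infinite tails of proper pricing functions.

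One small omission you should fix: the final step ``openness rules out $\vab(B) = [s,\infty)$'' requires $s > 0$, since $[0,\infty)$ itself is open in $[0,\infty)$. This holds because $\abstart(B) \geq p > 0$ (by \cref{lemma:typesoftypes:abc} and the standing assumption $p > 0$), but you should say so explicitly rather than leaving it implicit.
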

\begin{proof}
The high-level reason this is true is that we tiebreak in favor of Bob. Formally, let $s = \abstart\paren*{ B }$ and observe that we can assume $s > p$ as otherwise we are done by \cref{lemma:typesoftypes:abc}. Also, note that $s - \delta \in \vba\paren*{ B }$ for all $\delta > 0$. This means that, for all $\delta > 0$, there exists $x_{ \delta } \in [0, 1]$ such that $\utilab\paren*{ B, \mathtt{x}_B(s), s - \delta } \leq \utilba\paren*{ B, \mathtt{x}_B\paren*{ s - \delta }, s - \delta }$. Using \cref{eq:utilba,eq:utilab}, we get:
\[
\paren*{ 1 - z } \cdot \paren*{ \mathtt{x}_B(s) \cdot s - B\paren*{ x } } - \delta \leq \mathtt{x}_B\paren*{ s - \delta } \cdot \paren*{ a + \paren*{ 1 - z } \cdot s } - B\paren*{ \mathtt{x}_B\paren*{ s - \delta } } .
\]
Define $s' = a + \paren*{ 1 - z } \cdot s$ and observe that the right hand side is upper bounded by $s' \cdot \mathtt{z}_B\paren*{ s' } - B\paren*{ \mathtt{z}_B\paren*{ s' } }$. As this bound holds for all $\delta > 0$, we have $\paren*{ 1 - z } \cdot \paren*{ \mathtt{x}_B(s) \cdot s - B\paren*{ x } } \leq s' \cdot \mathtt{z}_B\paren*{ s' } - B\paren*{ \mathtt{z}_B\paren*{ s' } }$. Adding $z s - a$ on both sides and \cref{eq:utilba,eq:utilab}, we get $\utilab\paren*{ B, \mathtt{x}_B(s), s } \leq \utilba\paren*{ B, \mathtt{z}_B\paren*{ s' }, s }$. This is impossible if $s \in \vab\paren*{ B }$, and we are done.
\end{proof}

Finally, we show a lemma to handle the corner case mentioned in \cref{lemma:fixedprice-reduction}.
\begin{lemma}
\label{lemma:typesoftypes:i}
Let $B$ be a mechanism for Bob with $\vab\paren*{ B } = \emptyset$. Define the mechanism $B'$ as:
\[
B'(x) = \begin{cases}
B(x) , &\text{~if~} x < 1 \\
\sup_{ v : \mathtt{x}_B\paren*{ v } > 0 } \frac{ B\paren*{ \mathtt{x}_B\paren*{ v } } }{ \mathtt{x}_B\paren*{ v } } , &\text{~if~} x = 1
\end{cases}
\]
Then, we have $\vab\paren*{ B' } = \emptyset$ and $B'(1) = \sup_{ v : \mathtt{x}_{ B' }\paren*{ v } > 0 } \frac{ B'\paren*{ \mathtt{x}_{ B' }\paren*{ v } } }{ \mathtt{x}_{ B' }\paren*{ v } }$ and $\rev_{ \mathsf{B}, \paren*{ A, B } }\paren*{ D } \leq \rev_{ \mathsf{B}, \paren*{ A, B' } }\paren*{ D }$.
\end{lemma}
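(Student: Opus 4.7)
The plan is to introduce $\hat p \coloneqq B'(1) = \sup_{v : \mathtt{x}_B(v) > 0}\frac{B(\mathtt{x}_B(v))}{\mathtt{x}_B(v)}$, note $\hat p \leq p$ using $\vab\paren*{B} = \emptyset$ and \cref{lemma:typesoftypes:d}, and then handle the three claims in turn. By the taxation principle (\cref{obs:taxation}) I may assume $B$ is proper; writing $\bar x = \sup\{x : B(x) < \infty\}$, the case $\bar x = 1$ is trivial because then convexity of $B$ through the origin forces $\hat p = B(1)$ and $B' \equiv B$, so I will restrict attention to $\bar x < 1$, where $B(1) = \infty$ and $B'$ is obtained from $B$ by introducing the lone new option $(x=1,\ \text{price} = \hat p)$.

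For $\vab\paren*{B'} = \emptyset$, fix any $v \geq 0$ and observe that since $B'$ and $B$ agree on $[0,1)$ while $B(1) = \infty$, both $\max_x \utilba\paren*{B', x, v}$ and $\max_x \utilab\paren*{B', x, v}$ decompose as a maximum of (i) the analogous quantity under $B$, and (ii) the utility at the new option $x = 1$, which equals $v - \hat p$ for Bob-first and $v - a - (1-z)\hat p$ for Alice-first. In coordinate (i), $\vab\paren*{B} = \emptyset$ gives Bob-first $\geq$ Alice-first; in coordinate (ii), the inequality $v - \hat p \geq v - a - (1-z)\hat p$ rearranges to $z\paren*{p - \hat p} \geq 0$, which holds. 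Since $\max(\cdot,\cdot)$ is monotone in each argument, I conclude $\max_x \utilba\paren*{B', x, v} \geq \max_x \utilab\paren*{B', x, v}$, and tiebreaking in favor of Bob places $v$ in $\vba\paren*{B'}$. For the identity $B'(1) = \sup_{v : \mathtt{x}_{B'}(v) > 0} B'(\mathtt{x}_{B'}(v))/\mathtt{x}_{B'}(v)$, I will apply \cref{lemma:util} to conclude that for sufficiently large $v$, $\mathtt{x}_{B'}(v)$ exceeds $\bar x$; since $B'$ is $\infty$ on $(\bar x, 1)$, this pins $\mathtt{x}_{B'}(v) = 1$ with bang-per-buck exactly $\hat p$. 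Monotonicity of bang-per-buck (\cref{lemma:typesoftypes:e}) then forces the supremum to be realized in the $v \to \infty$ limit, giving the identity.

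For the revenue comparison, note that $\vab\paren*{B} = \vab\paren*{B'} = \emptyset$ means Bob collects $B(\mathtt{x}_B(v))$ from type $v$ under $B$ and $B'(\mathtt{x}_{B'}(v))$ under $B'$. I split on the value of $\mathtt{x}_{B'}(v)$: if $\mathtt{x}_{B'}(v) < 1$, the buyer's maximization restricted to $x < 1$ is identical under the two mechanisms (prices and tiebreaking both coincide), so $\mathtt{x}_{B'}(v) = \mathtt{x}_B(v)$ and Bob's take is unchanged; if $\mathtt{x}_{B'}(v) = 1$, Bob collects $\hat p$, while under $B$ the payment was $B(\mathtt{x}_B(v)) \leq \hat p \cdot \mathtt{x}_B(v) \leq \hat p$ by the defining bound on $\hat p$ and $\mathtt{x}_B(v) \leq 1$. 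In either case the pointwise payment weakly increases, and integrating against the density of $D$ yields $\rev_{\mathsf{B}, \paren*{A, B}}\paren*{D} \leq \rev_{\mathsf{B}, \paren*{A, B'}}\paren*{D}$. I expect the most delicate step to be $\vab\paren*{B'} = \emptyset$: the new option at $x = 1$ raises utilities in \emph{both} Alice-first and Bob-first strategies, and one must verify this cannot flip their comparison. The key driver is $\hat p \leq p$, which is exactly where the hypothesis $\vab\paren*{B} = \emptyset$ enters via \cref{lemma:typesoftypes:d}.
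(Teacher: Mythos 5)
Your proof is correct and takes a genuinely different route from the paper's. The paper first establishes a helper claim (applied to both $B$ and $B'$): whenever $\vab\paren*{\cdot} = \emptyset$, the bang-per-buck supremum is at most $p$ and is attained at a type $v$ exactly when $\mathtt{x}_{(\cdot)}(v) = 1$. It then case-splits on whether that supremum is attained for $B$: if so, it declares $B' = B$; if not, it argues $\mathtt{x}_{B'}(v) \in \set*{\mathtt{x}_B(v), 1}$, deduces every buyer's bang-per-buck under $B'$ is at most $p$, and concludes $\vab\paren*{B'} = \emptyset$ via \cref{lemma:typesoftypes:abc,lemma:typesoftypes:d}. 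You instead split on whether $B(1)$ is finite, and for $\vab\paren*{B'} = \emptyset$ you decompose each of $\max_x \utilba\paren*{B', x, v}$ and $\max_x \utilab\paren*{B', x, v}$ into the part over $x \in [0,1)$ (which coincides with $B$, since $B(1) = \infty$ makes the $x=1$ option irrelevant under $B$) and the value at the new option $x = 1$, and then verify term-by-term domination from $\vab\paren*{B} = \emptyset$ and $\hat{p} \leq p$. This is a legitimate, more self-contained alternative; it also sidesteps the sup-attainment dichotomy, which is delicate when $B$ has constant bang-per-buck over its finite options (there the supremum can be attained even though $\mathtt{x}_B(v) < 1$ for every $v$, a situation that falls cleanly into your $\bar{x} < 1$ branch). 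Your arguments for the second and third conclusions run parallel to the paper's (\cref{lemma:util} for the large-$v$ behavior, then pointwise dominance of Bob's payment integrated against $D$).

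One spot to tighten: in the $\bar{x} = 1$ branch you say convexity of $B$ through the origin forces $\hat{p} = B(1)$, but convexity alone gives only $\hat{p} \leq B(1)$, since the supremum defining $\hat{p}$ ranges over buyer types $v$ with $\mathtt{x}_B(v) > 0$, not over all $x \in (0,1]$. To upgrade to equality you also need $\mathtt{x}_B(v) \to 1$ as $v \to \infty$, which follows from \cref{lemma:util} applied to $B$ (here using $B(1) < \infty$); continuity of the proper pricing function on $[0,1]$ then gives $B\paren*{\mathtt{x}_B(v)}/\mathtt{x}_B(v) \to B(1)$, so $\hat{p} = B(1)$ and $B' \equiv B$ as claimed. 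With that small addition the argument is complete.
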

\begin{proof}
We start with the following helper claim.
\begin{claim}
\label{claim:typesoftypes:ihelper}
Let $B''$ be a mechanism for Bob with $\vab\paren*{ B'' } = \emptyset$ and let $\nabla'' = \sup_{ v : \mathtt{x}_{ B'' }\paren*{ v } > 0 } \frac{ B''\paren*{ \mathtt{x}_{ B'' }\paren*{ v } } }{ \mathtt{x}_{ B'' }\paren*{ v } }$. Then, $\nabla'' \leq p$ and the supremum in $\nabla''$ is attained at $v$ if and only if $\mathtt{x}_{ B'' }\paren*{ v } = 1$.
\end{claim}
\begin{proof}
That we have $\nabla'' \leq p$ is from \cref{lemma:typesoftypes:d}. Observe that the ``if'' direction follows easily from \cref{lemma:typesoftypes:e}. For the ``only if'' direction, assume for contradiction that the supremum is attained at some point $v$ for which $\mathtt{x}_{ B'' }\paren*{ v } < 1$. Then, we have from \cref{lemma:typesoftypes:e} that it also attained at $v'$ for all $v' \geq v$. Moreover, for a large enough $v'$, \cref{lemma:util} says that $\mathtt{x}_{ B'' }\paren*{ v } < \mathtt{x}_{ B'' }\paren*{ v' }$. By our tie-breaking, we must have that $\utilba\paren*{ B'', \mathtt{x}_{ B'' }\paren*{ v' }, v } < \utilba\paren*{ B'', \mathtt{x}_{ B'' }\paren*{ v }, v }$. Using \cref{eq:utilba}, and the fact that the supremum is attained at both $v$ and $v'$, this gives a contradiction. 
\end{proof}

Let $\nabla = \sup_{ v : \mathtt{x}_B\paren*{ v } > 0 } \frac{ B\paren*{ \mathtt{x}_B\paren*{ v } } }{ \mathtt{x}_B\paren*{ v } }$ and $\nabla' = \sup_{ v : \mathtt{x}_{ B' }\paren*{ v } > 0 } \frac{ B'\paren*{ \mathtt{x}_{ B' }\paren*{ v } } }{ \mathtt{x}_{ B' }\paren*{ v } }$ for convenience. If the supremum in $\nabla$ is attained we have $B = B'$ from \cref{claim:typesoftypes:ihelper} and there is nothing to show. Thus, we assume throughout that the supremum in $\nabla$ is never attained. 

For this case, \cref{claim:typesoftypes:ihelper} says that $\nabla \leq p$ and $\mathtt{x}_B\paren*{ v } < 1$ for all $v \geq 0$. As the only option that can possibly change between $B$ and $B'$ is the option corresponding to $x = 1$, it follows that for all $v \geq 0$, we have $\mathtt{x}_{ B' }\paren*{ v } \in \set*{ \mathtt{x}_B\paren*{ v }, 1 }$. Again using the fact that $\mathtt{x}_B\paren*{ v } < 1$ and $B'$ matches $B$ at every point but $x = 1$, it follows that $\mathtt{x}_{ B' }\paren*{ v } > 0$ implies that $\frac{ B'\paren*{ \mathtt{x}_{ B' }\paren*{ v } } }{ \mathtt{x}_{ B' }\paren*{ v } } \leq \nabla \leq p$. From \cref{lemma:typesoftypes:abc,lemma:typesoftypes:d}, this is possible only if $\vab\paren*{ B' } = \emptyset$, showing the first part. For the second part, from \cref{claim:typesoftypes:ihelper} on $B'$ that it is non-trivial only if $\mathtt{x}_{ B' }\paren*{ v } < 1$ for all $v \geq 0$. However, this would mean that $\nabla' = \nabla$ and the second part follows.

For the third part, we show that any buyer pays at least as much to Bob when his mechanism is $B'$ as he pays when his mechanism is $B$. Fix a buyer with valuation $v \geq 0$. As $\vab\paren*{ B' } = \vab\paren*{ B } = \emptyset$, we have the buyer pays $B'\paren*{ \mathtt{x}_{ B' }\paren*{ v } }$ to Bob when his mechanism is $B'$ and $B\paren*{ \mathtt{x}_B\paren*{ v } }$ when his mechanism is $B$. Using the fact that $\mathtt{x}_B\paren*{ v } < 1$ and $\mathtt{x}_{ B' }\paren*{ v } \in \set*{ \mathtt{x}_B\paren*{ v }, 1 }$, we get that $B'\paren*{ \mathtt{x}_{ B' }\paren*{ v } } \in \set*{ B\paren*{ \mathtt{x}_B\paren*{ v } }, \nabla }$. Now, either $\mathtt{x}_B\paren*{ v } = 0$ implying $B\paren*{ \mathtt{x}_B\paren*{ v } } = 0$ and we are done or $\nabla \geq \frac{ B\paren*{ \mathtt{x}_B\paren*{ v } } }{ \mathtt{x}_B\paren*{ v } } \geq B\paren*{ \mathtt{x}_B\paren*{ v } }$, and we are done.

\end{proof}

\subsection{An Auxiliary Distribution}
\label{sec:auxdistapp}

In this section, we show some lemmas about our auxiliary distribution $D_s$ defined in \cref{eq:fs}, for an arbitrary $s \geq p$. In particular, we prove \cref{lemma:oneseller}. Throughout, $f_s$ and $F_s$ and probability density and the cumulative probability functions for $D_s$ and $\Gamma_s$ is the corresponding revenue curve. 

\begin{lemma}
\label{lemma:fs}
Let $s \geq p$. We have:
\[
F_s\paren*{ v } = \begin{cases}
z \cdot \paren*{ 1 - F(s) } + F\paren*{ v } , &\text{~if~} v \leq p \\
z \cdot \paren*{ 1 - F(s) } + F\paren*{ \frac{ v - a }{ 1 - z } } , &\text{~if~} p < v < a + \paren*{ 1 - z } \cdot s \\
z \cdot \paren*{ 1 - F(s) } + F(s) , &\text{~if~} a + \paren*{ 1 - z } \cdot s \leq v < s \\
z + \paren*{ 1 - z } \cdot F\paren*{ v } , &\text{~if~} s \leq v < \infty \\
\end{cases} .
\]
\end{lemma}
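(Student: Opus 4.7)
The plan is to prove this by direct integration of the density $f_s$ given in \cref{eq:fs} together with the point mass of $z\cdot(1-F(s))$ at $0$. Since $D_s$ is specified piecewise on four intervals separated by the breakpoints $p$, $a+(1-z)s$, and $s$, I would just compute $F_s(v)$ on each of the four intervals in turn, using the previous breakpoint value as the initial condition and adding the $0$-atom contribution once (at the very beginning).

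For $v \leq p$, the density equals $f$ and so $F_s(v) = z(1-F(s)) + \int_0^v f(u)\diff u = z(1-F(s)) + F(v)$, giving the first case. For $p < v < a+(1-z)s$, I would use the initial value $F_s(p) = z(1-F(s)) + F(p)$ and integrate $\frac{1}{1-z} f\!\paren*{\frac{u-a}{1-z}}$ from $p$ to $v$. The key observation here is the substitution $w = (u-a)/(1-z)$: because $a = pz$, the lower limit $u=p$ maps to $w=p$, so the integral equals $F\!\paren*{\frac{v-a}{1-z}} - F(p)$, and the $-F(p)$ cancels cleanly, yielding the second case.

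For $a+(1-z)s \leq v < s$, the density is identically $0$, so $F_s$ is constant on this interval and equals its value at the left endpoint, which by plugging $v = a+(1-z)s$ into the second case gives $z(1-F(s)) + F(s)$. Finally for $v \geq s$, I would integrate $(1-z)f(u)$ from $s$ onward, starting from $F_s(s) = z(1-F(s)) + F(s)$, obtaining $z(1-F(s)) + F(s) + (1-z)(F(v) - F(s))$; a one-line simplification collapses this to $z + (1-z)F(v)$, giving the last case.

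There is no real obstacle here — this is essentially a bookkeeping computation. The one place to be careful is the substitution in the second interval, where one must use the identity $a = pz$ to verify that the change-of-variables limits line up so that the density $f_s$ actually integrates to the ``stretched then shifted'' CDF $F\!\paren*{\frac{v-a}{1-z}}$. I would close by noting that $F_s$ is continuous at the three interior breakpoints (as a sanity check that the formula is consistent) and that $F_s(\infty) = z + (1-z) = 1$, confirming $D_s$ is indeed a well-defined probability distribution.
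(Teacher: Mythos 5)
Your proposal is correct and follows essentially the same route as the paper: both compute $F_s$ by integrating the piecewise density and adding the $0$-atom, with the crucial step being the substitution $y \to \frac{y-a}{1-z}$ in the second interval, where $a = pz$ ensures the lower limit $p$ maps to itself. The paper only writes out the second case explicitly (calling the others "direct"), whereas you spell out all four, but the argument is identical.
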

\begin{proof}
We only show the second case as the others are direct from \cref{eq:fs}. For $p \leq v < a + \paren*{ 1 - z } \cdot s$, we have:
\begin{align*}
F_s\paren*{ v } &= z \cdot \paren*{ 1 - F(s) } + F\paren*{ p } + \frac{1}{ 1 - z } \cdot \int_p^v f\paren*{ \frac{ y - a }{ 1 - z } } \diff y \\
&= z \cdot \paren*{ 1 - F(s) } + F\paren*{ p } + \int_p^{ \frac{ v - a }{ 1 - z } } f\paren*{ y } \diff y \tag{Reparametrizing $y \to \frac{ y - a }{ 1 - z }$} \\
&= z \cdot \paren*{ 1 - F(s) } + F\paren*{ \frac{ v - a }{ 1 - z } } .
\end{align*}
\end{proof}

\begin{lemma}
\label{lemma:gammas}
Let $s \geq p$. We have:
\[
\Gamma_s\paren*{ v } = \begin{cases}
v \cdot \paren*{ 1 - z \cdot \paren*{ 1 - F(s) } - F\paren*{ v } } , &\text{~if~} v \leq p \\
v \cdot \paren*{ 1 - z \cdot \paren*{ 1 - F(s) } - F\paren*{ \frac{ v - a }{ 1 - z } } } , &\text{~if~} p < v < a + \paren*{ 1 - z } \cdot s \\
v \cdot \paren*{ 1 - z } \cdot \paren*{ 1 - F(s) } , &\text{~if~} a + \paren*{ 1 - z } \cdot s \leq v < s \\
v \cdot \paren*{ 1 - z } \cdot \paren*{ 1 - F\paren*{ v } } , &\text{~if~} s \leq v < \infty \\
\end{cases} .
\]
\end{lemma}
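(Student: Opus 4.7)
The proof is essentially a direct computation, since by definition the revenue curve of the auxiliary distribution is $\Gamma_s(v) = v \cdot (1 - F_s(v))$. My plan is simply to substitute the piecewise formula for $F_s$ given by \cref{lemma:fs} (the previous lemma in this subsection) into this identity, case by case, and simplify.

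Concretely, I would handle the four ranges in turn. For $v \leq p$, \cref{lemma:fs} gives $F_s(v) = z(1-F(s)) + F(v)$, so $1 - F_s(v) = 1 - z(1-F(s)) - F(v)$ and multiplying by $v$ yields the first case. For $p < v < a + (1-z)s$, \cref{lemma:fs} gives $F_s(v) = z(1-F(s)) + F\!\left(\tfrac{v-a}{1-z}\right)$, producing the second case verbatim. For $a + (1-z)s \leq v < s$, $F_s(v) = z(1-F(s)) + F(s)$, so $1 - F_s(v) = (1-z)(1-F(s))$, giving the third case. Finally, for $s \leq v$, $F_s(v) = z + (1-z) F(v)$, so $1 - F_s(v) = (1-z)(1 - F(v))$, giving the fourth case.

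There is no real obstacle here: the content of the lemma is contained in \cref{lemma:fs}, and the present lemma just rewrites it in the form $v(1 - F_s(v))$ that is convenient for the virtual-welfare/revenue-curve manipulations used in the proof of \cref{lemma:fixedprice-bottomproper}. The only minor point to check is that on the middle ``flat'' interval $[a + (1-z)s, s)$, where $D_s$ puts no density, the formula reduces cleanly to $v \cdot (1-z) \cdot (1-F(s))$; this is immediate from the fact that $1 - z(1-F(s)) - F(s) = (1-z)(1-F(s))$. So the proof I would write is a one-line invocation of \cref{lemma:fs} followed by a case analysis of four short algebraic simplifications.
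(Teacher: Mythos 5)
Your proposal is correct and matches the paper's own proof exactly: the paper dispatches this lemma with the single remark that it is ``straightforward from the formula $\Gamma_s\paren*{v} = v \cdot \paren*{1 - F_s\paren*{v}}$,'' which is precisely the substitution of \cref{lemma:fs} and case-by-case simplification you describe. Your extra verification on the flat interval, that $1 - z\paren*{1-F(s)} - F(s) = \paren*{1-z}\paren*{1-F(s)}$, is the only nontrivial algebraic step and you handle it correctly.
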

\begin{proof}
Straightforward from the formula $\Gamma_s\paren*{ v } = v \cdot \paren*{ 1 - F_s\paren*{ v } }$.
\end{proof}

\begin{corollary}
\label{cor:gammas1}
Let $v \geq s \geq s' \geq p$. We have $\Gamma_s\paren*{ v } = \Gamma_{ s' }\paren*{ v }$.
\end{corollary}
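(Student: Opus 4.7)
The plan is to prove this as an immediate consequence of \cref{lemma:gammas}. Since $v \geq s$, the value $v$ falls into the fourth case of the piecewise formula for $\Gamma_s$, yielding
\[
\Gamma_s(v) = v \cdot (1-z) \cdot (1 - F(v)).
\]
Likewise, because $v \geq s \geq s'$, the same fourth case applies to $\Gamma_{s'}$, giving
\[
\Gamma_{s'}(v) = v \cdot (1-z) \cdot (1 - F(v)).
\]
Comparing the two expressions yields the claimed equality.

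The conceptual content is that the auxiliary distribution $D_s$, as constructed in \cref{eq:fs}, agrees with $D$ up to a scaling factor of $1-z$ on the tail $v > s$, and this tail description is insensitive to the particular threshold $s$ (as long as $v$ lies above it). So the revenue curve at any point $v$ that sits above both thresholds is determined purely by $F(v)$ and the common scaling factor $1-z$, not by the value of the threshold itself. No obstacle is expected here; the lemma is a direct specialization of the piecewise formula from \cref{lemma:gammas} to the highest range.
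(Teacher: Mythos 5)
Your proof is correct and is exactly the intended argument: the paper states this as an immediate corollary of \cref{lemma:gammas} without separate proof, and reading off the fourth case of the piecewise formula for both $\Gamma_s$ and $\Gamma_{s'}$ (valid since $v \geq s \geq s'$) is precisely how one verifies it.
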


\begin{corollary}
\label{cor:gammas2}
Let $s \geq s' \geq p \geq v$. We have $\Gamma_{ s' }\paren*{ v } \leq \Gamma_s\paren*{ v }$.
\end{corollary}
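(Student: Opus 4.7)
The plan is to invoke \cref{lemma:gammas} directly in the regime $v \leq p$ for both $\Gamma_s$ and $\Gamma_{s'}$, then compare the resulting closed-form expressions. Since $s \geq s' \geq p \geq v$, the hypothesis $v \leq p$ puts both $\Gamma_s(v)$ and $\Gamma_{s'}(v)$ into the first branch of the case split in \cref{lemma:gammas}, so the only piece of the formula that can depend on which threshold we chose is the $z \cdot (1 - F(s))$ term. This makes the comparison a one-line algebraic manipulation rather than any sort of case analysis over the four branches of \cref{lemma:gammas}.

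Concretely, first I would apply \cref{lemma:gammas} to write
\[
\Gamma_s(v) = v \cdot \paren*{ 1 - z \cdot \paren*{ 1 - F(s) } - F(v) }
\quad\text{and}\quad
\Gamma_{s'}(v) = v \cdot \paren*{ 1 - z \cdot \paren*{ 1 - F(s') } - F(v) } .
\]
Subtracting these gives $\Gamma_s(v) - \Gamma_{s'}(v) = v \cdot z \cdot \paren*{ F(s) - F(s') }$. Then I would conclude by observing three sign facts: $v \geq 0$ (valuations are non-negative), $z \geq 0$ (it is a probability), and $F(s) \geq F(s')$ because $s \geq s'$ and $F$ is the CDF of $D$ and hence non-decreasing. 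The product of three non-negative quantities is non-negative, giving $\Gamma_s(v) \geq \Gamma_{s'}(v)$ as required.

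There is essentially no obstacle here: the hypothesis $p \geq v$ is precisely what is needed to land in the first branch of \cref{lemma:gammas} simultaneously for both $s$ and $s'$, and the hypothesis $s \geq s' \geq p$ merely ensures that both thresholds are in the valid range for which $D_s$ and $D_{s'}$ were defined in \cref{eq:fs}. The intuition matches the picture in \cref{fig:ds}: moving the threshold from $s'$ up to $s$ only reassigns mass from the atom at $0$ to the upper tail, which leaves the density on $[0,p]$ unchanged but increases the total probability placed above any given $v \leq p$, and consequently increases the posted-price revenue at $v$.
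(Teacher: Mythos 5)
Your proof is correct and is precisely the argument the paper intends; the paper states this as a corollary of \cref{lemma:gammas} without writing out a proof, and your one-line computation (both values fall into the first branch of \cref{lemma:gammas} since $v \leq p$, giving $\Gamma_s(v) - \Gamma_{s'}(v) = v z (F(s) - F(s')) \geq 0$) is exactly the immediate consequence being invoked.
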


We are now ready to prove \cref{lemma:oneseller}.
\begin{proof}[Proof of \cref{lemma:oneseller}]
We only show that $\rev_B\paren*{ D_s } = \rev_{ \mathsf{B}, \paren*{ A, B } }\paren*{ D }$ as \cref{eq:oneseller} was already argued in \cref{sec:auxdist}. For a buyer with valuation $v \geq 0$, let $r\paren*{ v }$ be the payment of this buyer to Bob when there are two sellers, Alice and Bob, using the mechanisms $A$ and $B$ respectively. As his payment when Bob is the only seller is $B\paren*{ \mathtt{z}_B(v) }$, we have to show that $\int_0^{ \infty } f_s\paren*{ v } \cdot B\paren*{ \mathtt{z}_B\paren*{ v } } \diff v = \int_0^{ \infty } f\paren*{ v } \cdot r\paren*{ v } \diff v$. Using \cref{eq:fs} and reparametrizing $\frac{ v - a }{ 1 - z } \to v$, we have to show that:
\begin{multline*}
\int_0^{ \infty } f\paren*{ v } \cdot r\paren*{ v } \diff v = \int_0^p f\paren*{ v } \cdot B\paren*{ \mathtt{z}_B\paren*{ v } } \diff v + \int_p^s f\paren*{ v } \cdot B\paren*{ \mathtt{z}_B\paren*{ a + \paren*{ 1 - z } \cdot v } } \diff v \\
+ \paren*{ 1 - z } \cdot \int_s^{ \infty } f\paren*{ v } \cdot B\paren*{ \mathtt{z}_B\paren*{ v } } \diff v .
\end{multline*}
To show this, we show that the functions being integrated are pointwise identical. For this, we fix $v \geq 0$. Recall that, in the duopolist setting, if $v \leq s$, we have $v \in \vba\paren*{ B }$ and the buyer purchases from Bob first. This means that $r\paren*{ v } = B\paren*{ \mathtt{x}_B(v) }$ and we are done using \cref{eq:oneseller}. On the other hand, if $v > s$, we have $v \in \vba\paren*{ B }$ and the buyer purchases from Bob second. This means that $r\paren*{ v } = \paren*{ 1 - z } \cdot B\paren*{ \mathtt{x}_B(v) }$ and we are done from \cref{eq:oneseller}.
\end{proof}

\subsection{Fixed Price Mechanisms}
\label{sec:fixedprice}

In this section, we show some lemmas about what happens when Bob uses a fixed price mechanism in our duopolist setting. Recall the notion of a fixed price mechanism from \cref{def:fixedprice}.

\begin{lemma}
\label{lemma:abstartfixed}
Let $q \geq 0$ We have:
\[
\abstart\paren*{ \fp_q } = \begin{cases}
\infty , &\text{~if~} q \leq p \\
q , &\text{~if~} q > p
\end{cases} .
\]
\end{lemma}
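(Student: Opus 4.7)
The plan is to directly unfold the buyer's utility-maximization problem using \cref{eq:utilba,eq:utilab}. Since Bob's pricing function is $B(x) = qx$, which is linear in $x$, both $\utilba\paren*{\fp_q, x, v}$ and $\utilab\paren*{\fp_q, x, v}$ are affine in $x$, so without loss of generality one may restrict attention to $x \in \set*{0, 1}$ on Bob's side. This leaves only four candidate pure strategies for the buyer, each with a closed-form utility that is affine in $v$, and the analysis reduces to comparing a small number of such affine expressions.

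The key computation is
\[
\utilba\paren*{\fp_q, 1, v} - \utilab\paren*{\fp_q, 1, v} = (v - q) - (v - a - (1-z)q) = a - zq = z(p - q),
\]
which is independent of $v$ and has sign determined by $p - q$. A second, trivial inequality $\utilba\paren*{\fp_q, 0, v} = \max(0, zv - a) \geq zv - a = \utilab\paren*{\fp_q, 0, v}$ covers the $x = 0$ branch. Together, these two comparisons resolve the ``Bob first versus Alice first'' question on each branch.

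When $q \leq p$, both comparisons weakly favor Bob-first. By the tie-breaking rule (first in favor of approaching Bob), every valuation $v \geq 0$ lies in $\vba\paren*{\fp_q}$, so $\vab\paren*{\fp_q} = \emptyset$ and hence $\abstart\paren*{\fp_q} = \infty$ by convention. When $q > p$, the $x = 1$ Alice-first strategy strictly beats its Bob-first counterpart, and one must compare the Alice-first utility $v - a - (1-z)q$ against the best Bob-first utility. A short calculation shows that, in this regime, the best Bob-first utility equals $\max(0, zv - a)$ (the $x = 1$ Bob-first option never wins), and the crossover between the two happens exactly at $v = q$: below it Bob-first is weakly better, above it Alice-first is strictly better. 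This yields $\vab\paren*{\fp_q} = (q, \infty)$ and so $\abstart\paren*{\fp_q} = q$.

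The main obstacle, while modest, is careful tie-breaking at $v = q$ and in the low regime $v < p$. At $v = q$ several of the candidate utilities coincide at $z(q - p)$, and the rule ``Bob first, then larger $x$'' must be invoked to ensure $q \in \vba\paren*{\fp_q}$, consistent with \cref{lemma:typesoftypes:h}. For $v < p$ one verifies that every Alice-first option yields negative expected utility, so Bob-first with $x = 0$ (declining to transact) remains optimal and places $v$ in $\vba\paren*{\fp_q}$.
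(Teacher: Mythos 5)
Your reduction to four affine candidates and the two pairwise comparisons constitute a cleaner, more direct route than the paper's proof, which instead argues each case by contradiction (deriving $q > p$ from the assumption $v \in \vab\paren*{\fp_q}$ when $q \leq p$, and invoking Lemma~\ref{lemma:typesoftypes:abc} plus an algebraic contradiction when $q > p$). Your key identity $\utilba\paren*{\fp_q, 1, v} - \utilab\paren*{\fp_q, 1, v} = z(p-q)$ and the inequality $\utilba\paren*{\fp_q, 0, v} \geq \utilab\paren*{\fp_q, 0, v}$ are both correct, and together they do settle the case $q \leq p$ immediately: both pairwise comparisons weakly favor Bob-first, so $\vab\paren*{\fp_q} = \emptyset$.

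However, in the $q > p$ branch you assert that ``the best Bob-first utility equals $\max(0, zv - a)$ (the $x = 1$ Bob-first option never wins).'' This is false for large $v$: $\utilba\paren*{\fp_q, 1, v} = v - q$ has slope $1$ in $v$ while $\utilba\paren*{\fp_q, 0, v} = \max(0, zv - a)$ has slope $z < 1$, so $\utilba\paren*{\fp_q, 1, v} > \utilba\paren*{\fp_q, 0, v}$ for all $v > (q-a)/(1-z)$, a nonempty ray (lying strictly above $q$). The error is harmless only because you have already computed the fact that repairs it: when $q > p$, $\utilab\paren*{\fp_q, 1, v} - \utilba\paren*{\fp_q, 1, v} = z(q-p) > 0$ for \emph{every} $v$, so Alice-first-with-$x=1$ dominates Bob-first-with-$x=1$ regardless. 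The correct bookkeeping is: for $v \leq q$, $\utilba\paren*{\fp_q,1,v} = v - q \leq 0 \leq \utilba\paren*{\fp_q,0,v}$, so the best Bob-first utility is indeed $\max(0, zv-a)$, and $\utilab\paren*{\fp_q,1,v} - \utilba\paren*{\fp_q,0,v} = (1-z)(v-q) \leq 0$ places $v$ in $\vba\paren*{\fp_q}$ (with the tie at $v = q$ resolved toward Bob); for $v > q$, $\utilab\paren*{\fp_q,1,v}$ strictly beats $\utilba\paren*{\fp_q,0,v}$ by $(1-z)(v-q) > 0$ and strictly beats $\utilba\paren*{\fp_q,1,v}$ by $z(q-p) > 0$, so $v \in \vab\paren*{\fp_q}$. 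With this reorganization the conclusion $\abstart\paren*{\fp_q} = q$ follows as you state, and the overall approach is sound.
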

\begin{proof}
We prove both cases separately. For the first one, suppose for the sake of contradiction that $q \leq p$ and $v \in \vab\paren*{ \fp_q }$. Observe that we must have $q \leq v$ as otherwise the buyer gets negative utility. By our tie-breaking, this means that $\utilba\paren*{ \fp_q, 1, v } < \utilab\paren*{ \fp_q, \mathtt{x}_{ \fp_q }\paren*{ v }, v }$. Expanding using \cref{eq:utilba,eq:utilab,def:fixedprice} we get:
\[
v - q < z v - a + \paren*{ 1 - z } \cdot \paren*{ v - q } \cdot \mathtt{x}_{ \fp_q }(v) .
\]
As $q \leq v$, we can upper bound $\mathtt{x}_{ \fp_q }(v)$ by $1$ to get $p < q$, a contradiction. Now, consider the case $q > p$. In order to show that $\abstart\paren*{ \fp_q } = q$, we show that for all $v < q$, we have $v \in \vba\paren*{ \fp_q }$ and for all $v > q$, we have $v \in \vab\paren*{ \fp_q }$. For the former, observe that if $v < q \in \vab\paren*{ \fp_q }$, then our tie-breaking implies that this buyer must buy from Bob with positive probability. However, as $v < q$ this would mean that he can get better utility by not buying from Bob, a contradiction. For the latter, assume for the sake of contradiction that $v > q \in \vba\paren*{ \fp_q }$. This means that $\utilab\paren*{ \fp_q, 1, v } < \utilba\paren*{ \fp_q, \mathtt{x}_{ \fp_q }\paren*{ v }, v }$. Expanding using \cref{eq:utilba,eq:utilab,def:fixedprice} we get:
\[
z v - a + \paren*{ 1 - z } \cdot \paren*{ v - q } \leq \paren*{ v - q } \cdot \mathtt{x}_{ \fp_q }(v) + \paren*{ 1 - \mathtt{x}_{ \fp_q }(v) } \cdot \paren*{ z v - a } .
\]
This simplifies to $\paren*{ 1 - z } \cdot \paren*{ v - q } \leq \mathtt{x}_{ \fp_q }(v) \cdot \paren*{ v - q - z \cdot \paren*{ v - p } }$. As the left hand side is positive and $v > q > p$, this is a contradiction.
\end{proof}

\begin{lemma}
\label{lemma:revfixedprice}
Let $q \geq 0$. We have:
\[
\rev_{ \mathsf{B}, \paren*{ A, \fp_q } }\paren*{ D } = \begin{cases}
\Gamma_D(q) , &\text{~if~} q \leq p \\
\paren*{ 1 - z } \cdot \Gamma_D(q) , &\text{~if~} q > p 
\end{cases} .
\]
\end{lemma}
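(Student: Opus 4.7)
The plan is to apply \cref{lemma:abstartfixed} to determine, for each value of $q$, which buyer types visit Bob first and which visit Alice first, and then compute Bob's expected revenue by direct integration in each region. Since $D$ is atomless, the boundary type $v = q$ has measure zero and may be handled arbitrarily.

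In the case $q \leq p$, \cref{lemma:abstartfixed} gives $\abstart(\fp_q) = \infty$, so every buyer type lies in $\vba(\fp_q)$ and approaches Bob first. Facing the take-it-or-leave-it price $q$, a buyer with value $v$ purchases from Bob if and only if $v \geq q$, in which case he pays $q$ and exits; otherwise he pays $0$ to Bob regardless of whether he subsequently visits Alice. Integrating against the density of $D$ yields $\rev_{\mathsf{B},(A,\fp_q)}(D) = q(1-F(q)) = \Gamma_D(q)$, as required.

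In the case $q > p$, \cref{lemma:abstartfixed} gives $\abstart(\fp_q) = q$, so types $v < q$ belong to $\vba(\fp_q)$ and types $v > q$ belong to $\vab(\fp_q)$. For $v < q$ the buyer visits Bob first and strictly prefers not to purchase (since $v - q < 0$), contributing $0$ to Bob's revenue. For $v > q$ the buyer first participates in Alice's lottery, receiving the item with probability $z$; conditional on failure (probability $1-z$) he visits Bob and purchases at price $q$ because $v > q$. Integrating against $D$, Bob's expected revenue from these types is $(1-z)\cdot q \cdot (1-F(q)) = (1-z)\Gamma_D(q)$.

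There is essentially no obstacle here beyond book-keeping; the substantive content of the lemma is already carried by \cref{lemma:abstartfixed}, which locates the threshold between the two buyer behaviors, together with the elementary observation that a buyer facing price $q$ purchases exactly when his value exceeds $q$. Combining the two cases and using atomlessness of $D$ at the single point $v = q$ completes the proof.
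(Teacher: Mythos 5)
Your proof is correct and reaches the same conclusion, but it takes a more elementary route than the paper. Both arguments begin by invoking \cref{lemma:abstartfixed} to locate $\abstart(\fp_q)$ (at $\infty$ if $q \leq p$, at $q$ if $q > p$). From there the paper's proof folds this into its auxiliary-distribution machinery: it observes that $\rev_{\mathsf{B},(A,\fp_q)}(D) = \rev_{\fp_q}(D_s) = \Gamma_s(q)$ via \cref{lemma:oneseller}, and then reads off $\Gamma_s(q)$ from the explicit formula in \cref{lemma:gammas}. You instead bypass $D_s$ and $\Gamma_s$ entirely and integrate Bob's payment directly over buyer types, using the threshold to split the population and the elementary fact that a buyer facing posted price $q$ purchases exactly when $v \geq q$ (the boundary $v=q$ being null by atomlessness, and the fractional-$x$ option being dominated by linearity of $\fp_q$). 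Your route is simpler and self-contained for this special case; the paper's route is more uniform with the surrounding argument, reusing the $D_s$ reduction that it has already built for arbitrary mechanisms $B$. Either way the substantive content is carried by \cref{lemma:abstartfixed}, as you note.
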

\begin{proof}
Note from \cref{lemma:oneseller,def:fixedprice} that $\rev_{ \mathsf{B}, \paren*{ A, \fp_q } }\paren*{ D } = \rev_{ \fp_q }\paren*{ D_{ \abstart\paren*{ \fp_q } } } = \Gamma_{ \abstart\paren*{ \fp_q } }(q)$. Now, if $q \leq p$, we have from \cref{lemma:abstartfixed} that $\abstart\paren*{ \fp_q } = \infty$ implying that $\Gamma_{ \abstart\paren*{ \fp_q } }(q) = \Gamma_{ \infty }(q) = \Gamma_D(q)$ by \cref{lemma:gammas}. On the other hand, if $q > p$, we have from \cref{lemma:abstartfixed} that $\abstart\paren*{ \fp_q } = q$ implying that $\Gamma_{ \abstart\paren*{ \fp_q } }(q) = \Gamma_q(q) = \paren*{ 1 - z } \cdot \Gamma_D(q)$ by \cref{lemma:gammas}.
\end{proof}

\subsection{A Revenue Inequality for a Single Seller}
\label{sec:myerson-app}

In this section, we state and prove revenue properties for the single seller, single buyer setting with our auxilliary distribution, under certain conditions on the original distribution $D$.

%
\begin{lemma}
\label{lemma:nobottom}
Let $D$ be a distribution, $M$ be a mechanism as above, and $v^* \geq 0$ be given. Assume that $\mathtt{z}_M\paren*{ v^* } > 0$ and define\footnote{Observe that this definition implies that $q \leq v^*$.} $q = \frac{ M\paren*{ \mathtt{z}_M\paren*{ v^* } } }{ \mathtt{z}_M\paren*{ v^* } }$ and $M'\paren*{ z } = \max\paren*{ M\paren*{ z }, q z }$. Suppose that we have:
\begin{align*}
\forall 0 \leq v \leq q: \hspace{1cm} v \cdot f(v) - \paren*{ 1 - F(v) } &\leq q \cdot f(q) - \paren*{ 1 - F(q) } , \\
\forall q \leq v < v^*: \hspace{1cm} q \cdot f(q) - \paren*{ 1 - F(q) } &\leq v \cdot f(v) - \paren*{ 1 - F(v) } .
\end{align*}
It holds that
\[
\rev_M\paren*{ D } \leq \rev_{ M' }\paren*{ D } .
\]
The same also holds with $v^* = \infty$.
\end{lemma}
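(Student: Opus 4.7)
The plan is to prove the inequality by an explicit comparison of the two revenues via Myerson, after first pinning down the allocation rule induced by $M'$. Since $M$ is proper, hence convex with $M(0)=0$, its bang-per-buck $M(z)/z$ is non-decreasing in $z > 0$; in particular $M(z) \leq qz$ for $z \leq \mathtt{z}_M(v^*)$ and $M(z) \geq qz$ for $z \geq \mathtt{z}_M(v^*)$. Consequently $M'$ coincides with $qz$ on $[0,\mathtt{z}_M(v^*)]$ and with $M$ beyond. A short computation using $v^* \in \partial M(\mathtt{z}_M(v^*))$ then yields the allocation rule
$$\mathtt{z}_{M'}(v) = \begin{cases} 0 & \text{if } v < q, \\ \mathtt{z}_M(v^*) & \text{if } q \le v \le v^*, \\ \mathtt{z}_M(v) & \text{if } v > v^*. \end{cases}$$

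Next I would invoke \cref{prop:revvv} with $\psi(v) := v f(v) - (1 - F(v))$ to write
$$\rev_{M'}(D) - \rev_M(D) = \mathtt{z}_M(v^*) \int_q^{v^*} \psi(v) \diff v - \int_0^{v^*} \mathtt{z}_M(v) \psi(v) \diff v,$$
noting that the integrals over $v > v^*$ cancel. The key move is to split $\psi(v) = \tilde\psi(v) + \psi(q)$, where $\tilde\psi(v) := \psi(v) - \psi(q)$. Rearranging the right-hand side produces three pieces: a non-negative term $\int_q^{v^*}\paren*{\mathtt{z}_M(v^*) - \mathtt{z}_M(v)}\tilde\psi(v) \diff v$, because $\mathtt{z}_M$ is monotone (\cref{prop:revvv}) and $\tilde\psi \geq 0$ on $[q, v^*]$ by the second hypothesis; a non-negative term $-\int_0^q \mathtt{z}_M(v) \tilde\psi(v) \diff v$, because $\tilde\psi \leq 0$ on $[0, q]$ by the first hypothesis; and a residual $\psi(q) \cdot R$ with
$$R = \int_q^{v^*} \paren*{\mathtt{z}_M(v^*) - \mathtt{z}_M(v)} \diff v - \int_0^q \mathtt{z}_M(v) \diff v = (v^* - q)\mathtt{z}_M(v^*) - \int_0^{v^*}\mathtt{z}_M(v) \diff v.$$

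The main obstacle is to show that the residual $R$ vanishes, and this is where I would appeal to the envelope theorem. The buyer's indirect utility $u_M(v) := v \mathtt{z}_M(v) - M(\mathtt{z}_M(v))$ satisfies $u_M(0) = 0$ and has derivative $\mathtt{z}_M(v)$ almost everywhere, so $u_M(v) = \int_0^v \mathtt{z}_M(u) \diff u$. Evaluating at $v = v^*$ and using the defining identity $M(\mathtt{z}_M(v^*)) = q \mathtt{z}_M(v^*)$ gives $\int_0^{v^*} \mathtt{z}_M(u) \diff u = (v^* - q)\mathtt{z}_M(v^*)$, so $R = 0$ and $\rev_{M'}(D) - \rev_M(D) \geq 0$, as desired.

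Finally, the case $v^* = \infty$ is handled in the same way under the convention $\mathtt{z}_M(\infty) = 1$ (so $q = M(1)$): for all sufficiently large $v$ one has $u_M(v) = v - q$, so taking $v \to \infty$ in the envelope identity yields $\int_0^\infty \paren*{1 - \mathtt{z}_M(u)} \diff u = q$, from which the identical algebraic cancellation forces $R = 0$.
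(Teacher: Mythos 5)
Your proposal is correct and takes essentially the same route as the paper: establish the allocation rule $\mathtt{z}_{M'}$ (the paper does this via Claims A.17--A.18 on IC inequalities, you via the non-decreasing bang-per-buck argument for convex $M$), express the revenue gap via virtual welfare, split $\psi(v)$ against $\psi(q)$ using the two hypotheses together with monotonicity of $\mathtt{z}_M$, and then close the argument by the envelope identity $\int_0^{v^*}\mathtt{z}_M = \mathtt{z}_M(v^*)(v^*-q)$. The two write-ups differ only cosmetically (you decompose $\psi=\tilde\psi+\psi(q)$ explicitly; the paper writes the same content as a single pointwise bound followed by the same cancellation).
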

\begin{proof}
The proof of this lemma has two important claims. The first claim shows that going from the mechanism $M$ to the mechanism $M'$ does not affect the behavior of the buyers with valuation $v > v^*$ and the second claim explains the behavior of buyers with valuation $v \leq v^*$ when the mechanism is $M'$. Formally, we have:
\begin{claim}
\label{claim:nobottom:1}
For all $v \geq v^*$, we have $\mathtt{z}_M\paren*{ v } = \mathtt{z}_{ M' }\paren*{ v }$ and $M\paren*{ \mathtt{z}_M\paren*{ v } } = M'\paren*{ \mathtt{z}_{ M' }\paren*{ v } }$.
\end{claim}
\begin{claim}
\label{claim:nobottom:2}
For all $0 \leq v \leq v^*$, we have $\mathtt{z}_{ M' }(v) = \mathtt{z}_M\paren*{ v^* } \cdot \1\paren*{ v \geq q }$.
\end{claim}
Assuming these proofs for now, we are ready to prove \cref{lemma:nobottom}. We have:
\begin{align*}
\rev_M\paren*{ D } - \rev_{ M' }\paren*{ D } &= \int_0^{ \infty } \paren*{ \mathtt{z}_M(v) - \mathtt{z}_{ M' }(v) } \cdot \paren*{ v \cdot f(v) - \paren*{ 1 - F(v) } } \diff v \tag{\cref{prop:revvv}} \\
&= \int_0^{ v^* } \paren*{ \mathtt{z}_M(v) - \mathtt{z}_{ M' }(v) } \cdot \paren*{ v \cdot f(v) - \paren*{ 1 - F(v) } } \diff v \tag{\cref{claim:nobottom:1}} \\
&= \int_0^{ v^* } \paren*{ \mathtt{z}_M(v) - \mathtt{z}_M\paren*{ v^* } \cdot \1\paren*{ v \geq q } } \cdot \paren*{ v \cdot f(v) - \paren*{ 1 - F(v) } } \diff v \tag{\cref{claim:nobottom:2}} \\
&\leq \paren*{ q \cdot f(q) - \paren*{ 1 - F(q) } } \cdot \int_0^{ v^* } \paren*{ \mathtt{z}_M(v) - \mathtt{z}_M\paren*{ v^* } \cdot \1\paren*{ v \geq q } } \diff v \tag{As $\mathtt{z}_M(v) \leq \mathtt{z}_M\paren*{ v^* }$ by \cref{prop:revvv}} .
\end{align*}
To finish, recall from the Myerson characterization of incentive compatible payments that the utility of a buyer with type $v^*$ facing the mechanism $M$ is just $\int_0^{ v^* } \mathtt{z}_M(v) \diff v$. This means that $\int_0^{ v^* } \mathtt{z}_M(v) \diff v = v^* \cdot \mathtt{z}_M\paren*{ v^* } - M\paren*{ \mathtt{z}_M\paren*{ v^* } } = \mathtt{z}_M\paren*{ v^* } \cdot \paren*{ v^* - q }$ by the definition of $q$. Thus, the last integral in the derivation above is $0$ and we are done. 
\end{proof}

We now prove \cref{claim:nobottom:1,claim:nobottom:2}. 
\begin{proof}[Proof of \cref{claim:nobottom:1}]
Note that, for all $z \in [0, 1]$, we have:
\begin{align*}
z v^* - M\paren*{ z } &\leq v^* \cdot \mathtt{z}_M\paren*{ v^* } - M\paren*{ \mathtt{z}_M\paren*{ v^* } } , \\
z v - M\paren*{ z } &\leq v \cdot \mathtt{z}_M(v) - M\paren*{ \mathtt{z}_M(v) } , \\
z v - M'\paren*{ z } &\leq v \cdot \mathtt{z}_{ M' }(v) - M'\paren*{ \mathtt{z}_{ M' }(v) } .
\end{align*}
Setting $z = \mathtt{z}_M(v)$ in the first equation, and using the definition of $q$, we have $v^* \cdot \mathtt{z}_M(v) - M\paren*{ \mathtt{z}_M(v) } \leq \mathtt{z}_M\paren*{ v^* } \cdot \paren*{ v^* - q }$. Now, as $v^* \leq v$, we have from \cref{prop:revvv} that $0 < \mathtt{z}_M\paren*{ v^* } \leq \mathtt{z}_M\paren*{ v }$. As $q \leq v^*$, we can plug this in and get $q \leq \frac{ M\paren*{ \mathtt{z}_M(v) } }{ \mathtt{z}_M(v) }$. By definition of $M'$, this means that $M'\paren*{ \mathtt{z}_M\paren*{ v } } = M\paren*{ \mathtt{z}_M\paren*{ v } }$. Now, setting $z = \mathtt{z}_{ M' }(v)$ in the second equation and $z = \mathtt{z}_M(v)$ in the third equation, we get:
\begin{align*}
v \cdot \mathtt{z}_{ M' }(v) - M\paren*{ \mathtt{z}_{ M' }(v) } &\leq v \cdot \mathtt{z}_M(v) - M\paren*{ \mathtt{z}_M(v) } , \\
v \cdot \mathtt{z}_M(v) - M'\paren*{ \mathtt{z}_M(v) } &\leq v \cdot \mathtt{z}_{ M' }(v) - M'\paren*{ \mathtt{z}_{ M' }(v) } .
\end{align*}
As $M'\paren*{ \mathtt{z}_M(v) } = M\paren*{ \mathtt{z}_M(v) }$ and $M\paren*{ \mathtt{z}_{ M' }(v) } \leq M'\paren*{ \mathtt{z}_{ M' }(v) }$ (by definition of $M'$), we get:
\[
v \cdot \mathtt{z}_{ M' }(v) - M\paren*{ \mathtt{z}_{ M' }(v) } \leq v \cdot \mathtt{z}_M(v) - M\paren*{ \mathtt{z}_M(v) } \leq v \cdot \mathtt{z}_{ M' }(v) - M\paren*{ \mathtt{z}_{ M' }(v) } .
\]
Thus, both our equation must be tight. With consistent tie-breaking, this is possible only if $\mathtt{z}_M\paren*{ v } = \mathtt{z}_{ M' }\paren*{ v }$, as desired, implying the first part of the claim. The other part of the lemma also follows as we showed that $M'\paren*{ \mathtt{z}_M\paren*{ v } } = M\paren*{ \mathtt{z}_M\paren*{ v } }$.
\end{proof}
\begin{proof}[Proof of \cref{claim:nobottom:2}]
For all $v < q$ and all $z \in [0, 1]$, we have $z v - M'(z) \leq z \cdot \paren*{ v - q } \leq 0$. Moreover, the inequality is strict for all $z > 0$. By definition, we then get $\mathtt{z}_{ M' }(v) = 0$, and the claim follows for all $v < q$. Consider now an arbitrary $q \leq v \leq v^*$ and note by the definition of $\mathtt{z}_{ M' }(v)$ that, for all $z \in [0, 1]$:
\[
z v - M'\paren*{ z } \leq v \cdot \mathtt{z}_{ M' }(v) - M'\paren*{ \mathtt{z}_{ M' }(v) } .
\]
Setting $z = \mathtt{z}_M\paren*{ v^* }$ and using \cref{claim:nobottom:1} and the defintion of $q$ and $M'$, we get $\mathtt{z}_{ M' }\paren*{ v^* } \cdot \paren*{ v - q } \leq \mathtt{z}_{ M' }(v) \cdot \paren*{ v - q }$. From \cref{prop:revvv}, we get that this must hold with an equality implying by our tie-breaking that $\mathtt{z}_{ M' }(v) = \mathtt{z}_M\paren*{ v^* }$, as desired.
\end{proof}

\subsection{Missing Proofs from \texorpdfstring{\cref{sec:oneoptionbob}}{}}

We now complete the proof that, regardless of the distribution $D$, if Bob is restricted to respond with a fixed price mechanism, the bound of $\frac{1}{ \mathrm{e} }$ in \cref{thm:tight2-onebob} is tight. Recall that $\mathcal{M} = \Gamma_D(\texttt{v})$ is Myerson's revenue for the distribution $D$ and Alice's mechanism $A$ is defined as:
\begin{equation}
\label{eq:alice-onebob}
A\paren*{ x } = \begin{cases}
\int_0^x \Gamma_D^{ -1 }\paren*{ \frac{ \mathcal{M} }{ \mathrm{e} \cdot \paren*{ 1 - z } } } \diff z , &\text{~if~} x \leq 1 - \frac{1}{ \mathrm{e} } \\
\infty , &\text{~otherwise~} 
\end{cases} .
\end{equation}

Let $q \geq 0$ and recall that when Bob is using a fixed price mechanism, the buyer always goes to Alice first. This means that, with the mechanism $A$ above for Alice and the fixed price mechanism $\fp_q$ for Bob, if a buyer with type $v$ purchases the option $x$ from Alice he gets a utility of:
\begin{equation}
\label{eq:ab-onebob}
\utilab\paren*{ A, q, x, v } = \begin{cases}
x v - A\paren*{ x }, &\text{~if~} v < q \\
v - q + x q - A\paren*{ x } , &\text{~if~} v \geq q 
\end{cases} .
\end{equation}
The buyer will choose the option $x$ that maximizes this utility. Observe that this maximizer is just $\mathtt{z}_A\paren*{ \min\paren*{ v, q } }$, where $\mathtt{z}_A\paren*{ \cdot }$ is as defined in \cref{sec:prelim}.

Having discussed the behavior of the buyer, we look at the problem from Bob's perspective to find out which value of $q$ maximizes his revenue given that Alice's mechanism is $A$. For this, observe from \cref{eq:ab-onebob} that Bob does not get any revenue from buyers with valuation $v < q$, and from buyer with valuation $v \geq q$, Bob gets revenue $\paren*{ 1 - \mathtt{z}_A\paren*{ q } } \cdot q$. Thus, Bob's expected revenue for a given $q$ equals:
\begin{equation}
\label{eq:bobrev-onebob}
\rev_{ \mathsf{B}, \paren*{ A, \fp_q } }\paren*{ D } = \paren*{ 1 - F\paren*{ q } } \cdot \paren*{ 1 - \mathtt{z}_A\paren*{ q } } \cdot q = \Gamma_D\paren*{ q } \cdot \paren*{ 1 - \mathtt{z}_A\paren*{ q } } .
\end{equation}
It follows that Bob will choose the value $q$ that maximizes the above expression. To compute this value, we first compute the function $\mathtt{z}_A\paren*{ \cdot }$.

\begin{lemma}
\label{lemma:za-onebob}
Let $A$ be as in \cref{eq:alice-onebob}. For all $v \geq 0$, we have:
\[
\mathtt{z}_A\paren*{ v } = \begin{cases}
0 , &\text{~if~} v < \Gamma_D^{ -1 }\paren*{ \frac{ \mathcal{M} }{ \mathrm{e} } } \\
1 - \frac{ \mathcal{M} }{ \mathrm{e} \cdot \Gamma_D\paren*{ v } } , &\text{~if~} \Gamma_D^{ -1 }\paren*{ \frac{ \mathcal{M} }{ \mathrm{e} } } \leq v \leq \Gamma_D^{ -1 }\paren*{ \mathcal{M} } \\
1 - \frac{1}{ \mathrm{e} } , &\text{~if~} \Gamma_D^{ -1 }\paren*{ \mathcal{M} } < v
\end{cases} .
\]
\end{lemma}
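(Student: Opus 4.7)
The plan is to compute $\mathtt{z}_A(v) = \argmax_{z \in [0,1]}(zv - A(z))$ by treating $A$ as a convex function on its finite-price domain. First I would observe that $A(z) = \infty$ for $z > 1 - 1/\mathrm{e}$, so the maximizer must lie in $[0, 1 - 1/\mathrm{e}]$. On that interval the fundamental theorem of calculus gives
\[
A'(z) \;=\; \Gamma_D^{-1}\!\paren*{\tfrac{\mathcal{M}}{\mathrm{e}(1-z)}}.
\]
Adopting the convention $\Gamma_D^{-1}(y) = \inf\{v : \Gamma_D(v) \geq y\}$, the argument of $\Gamma_D^{-1}$ is strictly increasing in $z$ and $\Gamma_D^{-1}$ is nondecreasing, so $A'$ is nondecreasing and $A$ is convex on $[0, 1-1/\mathrm{e}]$. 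Consequently $z \mapsto zv - A(z)$ is concave.

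Next I would apply a standard first-order analysis. The range of $A'$ over $[0, 1-1/\mathrm{e}]$ is exactly $[\Gamma_D^{-1}(\mathcal{M}/\mathrm{e}), \Gamma_D^{-1}(\mathcal{M})]$, the two endpoints corresponding to $z = 0$ and $z = 1-1/\mathrm{e}$ respectively. This trichotomy in $v$ matches the three cases of the lemma. In the interior regime $\Gamma_D^{-1}(\mathcal{M}/\mathrm{e}) \leq v \leq \Gamma_D^{-1}(\mathcal{M})$, I would solve the first-order condition $A'(z) = v$ by applying $\Gamma_D$ to both sides to obtain $\mathcal{M}/(\mathrm{e}(1-z)) = \Gamma_D(v)$, which rearranges to $z = 1 - \mathcal{M}/(\mathrm{e}\Gamma_D(v))$. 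In the regime $v < \Gamma_D^{-1}(\mathcal{M}/\mathrm{e})$, $A'$ exceeds $v$ throughout the interval, so the concave utility is strictly decreasing and the maximum is at the left endpoint $z = 0$; symmetrically, for $v > \Gamma_D^{-1}(\mathcal{M})$ the derivative stays below $v$ and the maximum is at the right endpoint $z = 1 - 1/\mathrm{e}$. A direct substitution shows the middle-case formula collapses to $0$ at $v = \Gamma_D^{-1}(\mathcal{M}/\mathrm{e})$ and to $1 - 1/\mathrm{e}$ at $v = \Gamma_D^{-1}(\mathcal{M})$, so the three pieces agree at the transition points.

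The main technical obstacle is that $\Gamma_D$ need not be monotone or injective (the distribution here is not assumed to be regular), so $\Gamma_D^{-1}$ is not a genuine inverse and the step ``apply $\Gamma_D$ to both sides'' used in the middle case requires justification. Under the infimum convention above, $\Gamma_D^{-1}$ is nondecreasing and left-continuous, and continuity of $\Gamma_D$ yields $\Gamma_D(\Gamma_D^{-1}(y)) = y$ for every $y$ in the range $[\mathcal{M}/\mathrm{e}, \mathcal{M}]$ (which is exactly where we need it, since $\mathcal{M}$ is the maximum of $\Gamma_D$ and is therefore attained). Once this bookkeeping is in place, the monotonicity of $A'$, the endpoint identifications, and the algebraic inversion of the first-order condition are all routine, so I expect the entire proof to be driven by this one technical clarification plus the convex-optimization argument above.
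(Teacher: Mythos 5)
Your approach is the same as the paper's: compute $A'(z) = \Gamma_D^{-1}\paren*{\mathcal{M}/(\mathrm{e}(1-z))}$ from \cref{eq:alice-onebob}, observe that $A'$ is nondecreasing so $z \mapsto zv - A(z)$ is concave on $[0, 1 - 1/\mathrm{e}]$, and read off the maximizer by cases from where $A'$ crosses $v$. Making the convention $\Gamma_D^{-1}(y) = \inf\{v : \Gamma_D(v) \geq y\}$ explicit is an improvement, since the paper uses $\Gamma_D^{-1}$ without defining it and merely asserts that it is ``monotone increasing.''

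That said, the obstacle you flag is not the real one, and your patch does not close the actual gap. The identity $\Gamma_D(\Gamma_D^{-1}(y)) = y$ on $[\mathcal{M}/\mathrm{e}, \mathcal{M}]$ is easy and you justify it correctly, but what your middle case actually relies on is the stronger claim that the range of $A'$ over $[0, 1 - 1/\mathrm{e}]$ is \emph{all} of $[\Gamma_D^{-1}(\mathcal{M}/\mathrm{e}), \Gamma_D^{-1}(\mathcal{M})]$, so that the first-order condition $A'(z) = v$ has a solution for every $v$ in the middle regime. That requires $\Gamma_D^{-1}$, and hence $A'$, to be continuous, which holds exactly when $\Gamma_D$ never drops back below a level in $[\mathcal{M}/\mathrm{e}, \mathcal{M}]$ once it has reached it, i.e., when $\Gamma_D$ is weakly nondecreasing up to $\mathtt{v}$. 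For a non-regular $D$ this can fail: one can build a continuous $\Gamma_D$ that rises above $\mathcal{M}/\mathrm{e}$, dips below it, and then climbs to $\mathcal{M}$, all while keeping $\Gamma_D(v)/v = 1 - F(v)$ nonincreasing, so this is realized by a genuine value distribution. In that case $A'$ jumps over an interval of values, the first-order condition has no solution for $v$ in the skipped range, the true maximizer $\mathtt{z}_A(v)$ is the kink point of $A$, and the displayed formula $1 - \mathcal{M}/(\mathrm{e}\,\Gamma_D(v))$ is even negative there. To your credit, the paper's own two-line proof carries the identical tacit assumption (it simply asserts $A'(x) > v$ if and only if $x > 1 - \mathcal{M}/(\mathrm{e}\,\Gamma_D(v))$ in the middle case), so you have reproduced the published argument faithfully; but since \cref{thm:alicerev-onebob} is claimed for arbitrary $D$, neither version actually delivers that level of generality, and the $\Gamma_D \circ \Gamma_D^{-1}$ bookkeeping you emphasize does not by itself supply the missing continuity of $\Gamma_D^{-1}$ on which the case analysis hinges.
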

\begin{proof}
Recall that $\mathtt{z}_A\paren*{ v }  = \argmax_x \paren*{ x v - A\paren*{ x } }$ and observe from \cref{eq:alice-onebob} that $0 \leq \mathtt{z}_A\paren*{ v } \leq 1 - \frac{1}{ \mathrm{e} }$. Next, note that $x v - A\paren*{ x }$ is increasing at $x$ if and only if $A'\paren*{ x } \leq v$ which is equivalent to $\Gamma_D^{ -1 }\paren*{ \frac{ \mathcal{M} }{ \mathrm{e} \cdot \paren*{ 1 - x } } } \leq v$ by \cref{eq:alice-onebob} for $0 \leq x \leq 1 - \frac{1}{ \mathrm{e} }$. We now argue each part in turn. Observe that the function $\Gamma_D^{ -1 }\paren*{ \cdot }$ is monotone increasing

For the first case, when $0 \leq v < \Gamma_D^{ -1 }\paren*{ \frac{ \mathcal{M} }{ \mathrm{e} } }$, we have that $A'\paren*{ x } > v$ for all $0 \leq x \leq 1 - \frac{1}{ \mathrm{e} }$ and thus $\mathtt{z}_A\paren*{ v } = 0$. For the second case, when $\Gamma_D^{ -1 }\paren*{ \frac{ \mathcal{M} }{ \mathrm{e} } } \leq v \leq \Gamma_D^{ -1 }\paren*{ \mathcal{M} }$, we have that $A'\paren*{ x } > v$ if and only if $x > 1 - \frac{ \mathcal{M} }{ \mathrm{e} \cdot \Gamma_D\paren*{ v } }$. It follows that $\mathtt{z}_A\paren*{ v } = 1 - \frac{ \mathcal{M} }{ \mathrm{e} \cdot \Gamma_D\paren*{ v } }$. Finally, for the third case, when $v > \Gamma_D^{ -1 }\paren*{ \mathcal{M} }$, we have that $A'\paren*{ x } \leq v$ for all $0 \leq x \leq 1 - \frac{1}{ \mathrm{e} }$ and thus $\mathtt{z}_A\paren*{ v } = 1 - \frac{1}{ \mathrm{e} }$.
\end{proof}

\paragraph{Computing Bob's best response.} \cref{lemma:za-onebob} allows us to compute Bob's revenue for every posted price mechanism.

\begin{lemma}
\label{lemma:bobrev-onebob}
For all $q \geq 0$, we have:
\[
\rev_{ \mathsf{B}, \paren*{ A, \fp_q } }\paren*{ D } = \begin{cases}
\Gamma_D\paren*{ q }, &\text{~if~} q < \Gamma_D^{ -1 }\paren*{ \frac{ \mathcal{M} }{ \mathrm{e} } } \\
\frac{ \mathcal{M} }{ \mathrm{e} }, &\text{~if~} \Gamma_D^{ -1 }\paren*{ \frac{ \mathcal{M} }{ \mathrm{e} } } \leq q \leq \Gamma_D^{ -1 }\paren*{ \mathcal{M} } \\
\Gamma_D\paren*{ q } \cdot \frac{1}{ \mathrm{e} } , &\text{~if~} \Gamma_D^{ -1 }\paren*{ \mathcal{M} } < q
\end{cases} .
\]
\end{lemma}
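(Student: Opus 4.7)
The plan is to read the revenue formula for Bob directly off of the two results already established in the preceding discussion: equation~\eqref{eq:bobrev-onebob}, which gives the identity
\[
\rev_{ \mathsf{B}, \paren*{ A, \fp_q } }\paren*{ D } = \Gamma_D\paren*{ q } \cdot \paren*{ 1 - \mathtt{z}_A\paren*{ q } },
\]
and \cref{lemma:za-onebob}, which characterizes the allocation rule $\mathtt{z}_A(\cdot)$ induced by Alice's menu $A$ in~\eqref{eq:alice-onebob}. Given these, no further economic reasoning is required; the proof reduces to substitution and a three-way case split on the position of $q$ relative to the two thresholds $\Gamma_D^{-1}(\mathcal{M}/\mathrm{e})$ and $\Gamma_D^{-1}(\mathcal{M})$ that appear in \cref{lemma:za-onebob}.

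The three cases proceed in parallel. First, for $q < \Gamma_D^{-1}(\mathcal{M}/\mathrm{e})$, \cref{lemma:za-onebob} gives $\mathtt{z}_A(q)=0$, so $1-\mathtt{z}_A(q)=1$ and the product collapses to $\Gamma_D(q)$. Second, for $\Gamma_D^{-1}(\mathcal{M}/\mathrm{e}) \leq q \leq \Gamma_D^{-1}(\mathcal{M})$, \cref{lemma:za-onebob} gives $\mathtt{z}_A(q) = 1 - \tfrac{\mathcal{M}}{\mathrm{e}\cdot\Gamma_D(q)}$, so $1-\mathtt{z}_A(q) = \tfrac{\mathcal{M}}{\mathrm{e}\cdot\Gamma_D(q)}$, and multiplying by $\Gamma_D(q)$ yields the constant $\mathcal{M}/\mathrm{e}$. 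Third, for $q > \Gamma_D^{-1}(\mathcal{M})$, \cref{lemma:za-onebob} gives $\mathtt{z}_A(q) = 1 - 1/\mathrm{e}$, so $1-\mathtt{z}_A(q) = 1/\mathrm{e}$, and the product becomes $\Gamma_D(q)/\mathrm{e}$.

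Since the statement is essentially an immediate calculation downstream of \cref{lemma:za-onebob}, there is no significant obstacle; the only thing to verify is that the two thresholds $\Gamma_D^{-1}(\mathcal{M}/\mathrm{e})$ and $\Gamma_D^{-1}(\mathcal{M})$ are well defined, which follows from our assumption that $D$ has a revenue-maximizing price so that $\Gamma_D$ attains its maximum $\mathcal{M}$ and the relevant inverse images exist. The endpoints of the piecewise formula agree (both case~1 and case~2 evaluate to $\mathcal{M}/\mathrm{e}$ at $q=\Gamma_D^{-1}(\mathcal{M}/\mathrm{e})$, and both case~2 and case~3 evaluate to $\mathcal{M}/\mathrm{e}$ at $q=\Gamma_D^{-1}(\mathcal{M})$), so the revenue function is continuous across the thresholds and the case split is unambiguous.
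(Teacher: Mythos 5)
Your proposal is correct and takes exactly the same approach as the paper, whose entire proof is the one-liner ``Follows from \cref{eq:bobrev-onebob,lemma:za-onebob}.'' You have simply spelled out the three substitutions and sanity-checked the threshold endpoints, which is a faithful elaboration of the same argument.
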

\begin{proof}
Follows from \cref{eq:bobrev-onebob,lemma:za-onebob}.
\end{proof}

\paragraph{Computing Alice's revenue when her mechanism is $A$.} Due to \cref{lemma:bobrev-onebob}, we can assume without loss of generality that Bob's best response is when he sets $q = \Gamma_D^{ -1 }\paren*{ \mathcal{M} }$. We now compute Alice's revenue when Bob uses the mechanism $\fp_q$ with this value of $q$.

\begin{lemma}
\label{lemma:alicerev-onebob}
Let $q = \Gamma_D^{ -1 }\paren*{ \mathcal{M} }$. We have:
\[
\rev_{ \mathsf{A}, \paren*{ A, \fp_q } }\paren*{ D } = \frac{ \mathcal{M} }{ \mathrm{e} } .
\]
\end{lemma}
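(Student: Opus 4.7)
The plan is to compute Alice's revenue directly by integrating her per-type payment against the density $f$, then exchange the order of a resulting double integral so that the definition of $\Gamma_D^{-1}$ collapses the inner integrand into an elementary expression. Throughout, set $b = q = \Gamma_D^{-1}\paren*{\mathcal{M}}$ and let $A$ be the mechanism in \cref{eq:alice-onebob}. Since Bob uses a fixed-price mechanism, the buyer visits Alice first and, per the discussion preceding \cref{lemma:za-onebob}, a buyer of type $v$ pays Alice $A\paren*{\mathtt{z}_A\paren*{\min\paren*{v,b}}}$. Partitioning types by the three-piece description of $\mathtt{z}_A$ in \cref{lemma:za-onebob}, types $v < \Gamma_D^{-1}\paren*{\mathcal{M}/\mathrm{e}}$ contribute $0$, types $v \in \bracket*{\Gamma_D^{-1}\paren*{\mathcal{M}/\mathrm{e}},\, b}$ contribute $A\paren*{1 - \mathcal{M}/(\mathrm{e}\,\Gamma_D(v))}$, and types $v \geq b$ contribute $A\paren*{1 - 1/\mathrm{e}}$.

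Next I would unfold each $A\paren*{\cdot}$ using the integral definition in \cref{eq:alice-onebob}, yielding a double integral over $(v,z)$ with $z \in \bracket*{0,\, 1 - \mathcal{M}/(\mathrm{e}\,\Gamma_D(v))}$ for the middle range and $z \in \bracket*{0,\, 1 - 1/\mathrm{e}}$ for the tail. The key observation is that the inequality $z \leq 1 - \mathcal{M}/(\mathrm{e}\,\Gamma_D(v))$ is equivalent to $v \geq \Gamma_D^{-1}\paren*{\mathcal{M}/(\mathrm{e}(1-z))}$ (using that $v \leq b = \mathtt{v}$ lies on the branch where $\Gamma_D^{-1}$ is well-defined). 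Switching the order of integration then rewrites the middle contribution as
\[
\int_0^{1 - 1/\mathrm{e}} \Gamma_D^{-1}\!\paren*{\frac{\mathcal{M}}{\mathrm{e}(1-z)}} \cdot \paren*{F(b) - F\!\paren*{\Gamma_D^{-1}\!\paren*{\frac{\mathcal{M}}{\mathrm{e}(1-z)}}}} \diff z,
\]
and the tail contribution is $\paren*{1 - F(b)} \int_0^{1 - 1/\mathrm{e}} \Gamma_D^{-1}\paren*{\mathcal{M}/(\mathrm{e}(1-z))} \diff z$. Summing, the $F(b)$ terms cancel and I am left with a single integral in $z$.

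The final step is to invoke the defining identity $\Gamma_D(v) = v \cdot (1 - F(v))$ at $v = \Gamma_D^{-1}\paren*{\mathcal{M}/(\mathrm{e}(1-z))}$, which gives
\[
\Gamma_D^{-1}\!\paren*{\frac{\mathcal{M}}{\mathrm{e}(1-z)}} \cdot \paren*{1 - F\!\paren*{\Gamma_D^{-1}\!\paren*{\frac{\mathcal{M}}{\mathrm{e}(1-z)}}}} = \frac{\mathcal{M}}{\mathrm{e}(1-z)}.
\]
Substituting, Alice's revenue reduces to $\int_0^{1 - 1/\mathrm{e}} \frac{\mathcal{M}}{\mathrm{e}(1-z)} \diff z = \frac{\mathcal{M}}{\mathrm{e}} \cdot \bracket*{-\ln(1-z)}_0^{1-1/\mathrm{e}} = \mathcal{M}/\mathrm{e}$, as required.

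The only delicate step is the exchange of order of integration: I must verify that on the range where the middle contribution is active, $\Gamma_D^{-1}$ is well-defined and monotone so that the equivalence between $z \leq 1 - \mathcal{M}/(\mathrm{e}\,\Gamma_D(v))$ and $v \geq \Gamma_D^{-1}\paren*{\mathcal{M}/(\mathrm{e}(1-z))}$ holds. This is ensured because all relevant values lie on the left branch of the revenue curve, i.e., $v \leq \mathtt{v}$, where $\Gamma_D$ is non-decreasing up to the Myerson price. Everything else is a routine calculation once the integrals are arranged.
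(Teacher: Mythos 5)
Your proof is correct and follows essentially the same route as the paper's: both partition buyer types via the three-piece characterization of $\mathtt{z}_A$ from \cref{lemma:za-onebob}, unfold $A$ via its integral definition in \cref{eq:alice-onebob}, switch the order of integration, and collapse via the identity $\Gamma_D(v) = v(1-F(v))$ to arrive at $\int_0^{1-1/\mathrm{e}} \frac{\mathcal{M}}{\mathrm{e}(1-z)}\diff z = \mathcal{M}/\mathrm{e}$. The paper organizes the cancellation slightly differently (combining the inner integral before splitting into two terms, one of which cancels against $A(\mathtt{z}_A(q))\cdot(1-F(q))$), but the substance is identical; your attention to where $\Gamma_D^{-1}$ is well-defined is also appropriate.
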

\begin{proof}
Note from \cref{lemma:za-onebob} that $\mathtt{z}_A\paren*{ q } = 1 - \frac{1}{ \mathrm{e} }$. Observe from \cref{eq:ab-onebob} that Alice's revenue for buyers with valuations $v < q$ is $A\paren*{ \mathtt{z}_A\paren*{ v } }$. For buyers with valuation $v \geq q$, Alice gets revenue $A\paren*{ \mathtt{z}_A\paren*{ q } }$. Thus, Alice's expected revenue when the buyer's values are sampled from $D$ equals:
\begin{align*}
\rev_{ \mathsf{A}, \paren*{ A, \fp_q } }\paren*{ D } &= A\paren*{ \mathtt{z}_A\paren*{ q } } \cdot \paren*{ 1 - F\paren*{ q } } + \int_0^q f\paren*{ v } \cdot A\paren*{ \mathtt{z}_A\paren*{ v } } \diff v \\
&= A\paren*{ \mathtt{z}_A\paren*{ q } } \cdot \paren*{ 1 - F\paren*{ q } } + \int_{ \Gamma_D^{ -1 }\paren*{ \frac{ \mathcal{M} }{ \mathrm{e} } } }^q f\paren*{ v } \cdot A\paren*{ 1 - \frac{ \mathcal{M} }{ \mathrm{e} \cdot \Gamma_D\paren*{ v } } } \diff v \tag{\cref{lemma:za-onebob}} \\
&= A\paren*{ \mathtt{z}_A\paren*{ q } } \cdot \paren*{ 1 - F\paren*{ q } } + \int_{ \Gamma_D^{ -1 }\paren*{ \frac{ \mathcal{M} }{ \mathrm{e} } } }^q \int_0^{ 1 - \frac{ \mathcal{M} }{ \mathrm{e} \cdot \Gamma_D\paren*{ v } } } f\paren*{ v } \cdot \Gamma_D^{ -1 }\paren*{ \frac{ \mathcal{M} }{ \mathrm{e} \cdot \paren*{ 1 - z } } } \diff z \diff v \tag{\cref{eq:alice-onebob}} \\
&= A\paren*{ \mathtt{z}_A\paren*{ q } } \cdot \paren*{ 1 - F\paren*{ q } } + \int_0^{ \mathtt{z}_A\paren*{ q } } \int_{ \Gamma_D^{ -1 }\paren*{ \frac{ \mathcal{M} }{ \mathrm{e} \cdot \paren*{ 1 - z } } } }^q f\paren*{ v } \cdot \Gamma_D^{ -1 }\paren*{ \frac{ \mathcal{M} }{ \mathrm{e} \cdot \paren*{ 1 - z } } } \diff v \diff z \tag{Switching integrals} \\
&= A\paren*{ \mathtt{z}_A\paren*{ q } } \cdot \paren*{ 1 - F\paren*{ q } } + \int_0^{ \mathtt{z}_A\paren*{ q } } \Gamma_D^{ -1 }\paren*{ \frac{ \mathcal{M} }{ \mathrm{e} \cdot \paren*{ 1 - z } } } \cdot \paren*{ F\paren*{ q } - F\paren*{ \Gamma_D^{ -1 }\paren*{ \frac{ \mathcal{M} }{ \mathrm{e} \cdot \paren*{ 1 - z } } } } } \diff z \\
&= A\paren*{ \mathtt{z}_A\paren*{ q } } \cdot \paren*{ 1 - F\paren*{ q } } - \int_0^{ \mathtt{z}_A\paren*{ q } } \Gamma_D^{ -1 }\paren*{ \frac{ \mathcal{M} }{ \mathrm{e} \cdot \paren*{ 1 - z } } } \cdot \paren*{ 1 - F\paren*{ q } } \diff z + \int_0^{ \mathtt{z}_A\paren*{ q } } \frac{ \mathcal{M} }{ \mathrm{e} \cdot \paren*{ 1 - z } } \diff z \tag{Definition of $\Gamma_D\paren*{ \cdot }$} \\
&= \frac{ \mathcal{M} }{ \mathrm{e} } \tag{\cref{eq:alice-onebob} and $\mathtt{z}_A\paren*{ q } = 1 - \frac{1}{ \mathrm{e} }$} .
\end{align*}
\end{proof}


\section{Missing Proofs from the Nash Setting}
\label{sec:fixedtype-app}

We provide the proof of \cref{claim:nonash-2} that was omitted from \cref{sec:fixedtype}.
\begin{proof}[Proof of \cref{claim:nonash-2}]
From \cref{claim:nonash-1}, we have that $A\paren*{ x_A } > 0$ and $B\paren*{ x_B } > 0$ and $x_A < 1$. It follows that $x_A, x_B > 0$. Next, we prove that $x_B - B\paren*{ x_B } = \max_{ z \in [0, 1] } \paren*{ z - B\paren*{ z } }$. Suppose for the sake of contradiction that there exists $z \in [0, 1]$ such that $x_B - B\paren*{ x_B } < z - B\paren*{ z }$. As we already showed that $x_A < 1$, this implies that $\utilab\paren*{ A, B, x_A, x_B } < \utilab\paren*{ A, B, x_A, z }$, a contradiction to the choice of $x_A$ and $x_B$.

Now, we show that $A\paren*{ x_A } < x_A$ and $B\paren*{ x_B } < x_B$. For this note from $\utilba\paren*{ A, B, x_A, x_B } \leq \utilab\paren*{ A, B, x_A, x_B }$ that $B\paren*{ x_B } < x_B$ implies $A\paren*{ x_A } < x_A$. Thus, we only need to show $B\paren*{ x_B } < x_B$. If we assume for contradiction that $x_B \leq B\paren*{ x_B }$, then we have by $x_B - B\paren*{ x_B } = \max_{ z \in [0, 1] } \paren*{ z - B\paren*{ z } }$ that $z \leq B\paren*{ z }$ for all $z \in [0, 1]$. 
In this case, for all $\delta > 0$, consider the mechanism $A'_{ \delta }$ for Alice where the only lottery offered to the buyer is to purchase the item with probability $1$ at a price of $1 - \delta$. We claim that this implies $\rev_{ \mathsf{A}, \paren*{ A', B } }\paren*{ D } = 1 - \delta$. Indeed, this is direct if the buyer goes to Alice first when she uses the mechanism $A'_{ \delta }$ and if the buyer goes to Bob first, it suffices to show that it chooses the option corresponding to $0$ from Bob. This is because, for all $x'_B > 0$, we have:
\[
\utilba\paren*{ A'_{ \delta }, B, 1, x'_B } = x'_B - B\paren*{ x'_B } + \paren*{ 1 - x'_B } \cdot \delta \leq \paren*{ 1 - x'_B } \cdot \delta < \delta = \utilba\paren*{ A'_{ \delta }, B, 1, 0 } . 
\]

Finally, we prove that $x_A - A\paren*{ x_A } = \max_{ z \in [0, 1] } \paren*{ z - A\paren*{ z } }$. For this, recall that $0 < A\paren*{ x_A } < x_A < 1$ and for all $\delta > 0$ small enough, define the mechanism $A'_{ \delta }$ for Alice where the only lottery offered to the buyer is to purchase the item with probability $x_A + \delta$ at a price of $A\paren*{ x_A } + \delta$. As $A\paren*{ x_A } = \rev_{ \mathsf{A}, \paren*{ A, B } }\paren*{ D } \geq \rev_{ \mathsf{A}, \paren*{ A'_{ \delta }, B } }\paren*{ D }$ for all $\delta > 0$, we have that the buyer goes to Bob first in the mechanism $A'_{ \delta }$ for all $\delta > 0$. This means that for all $\delta > 0$, there exists $z_{ \delta } \in [0, 1]$ such that:
\begin{multline*}
\utilab\paren*{ A'_{ \delta }, B, x_A + \delta, x_B } = x_A - A\paren*{ x_A } + \paren*{ 1 - x_A - \delta } \cdot \paren*{ x_B - B\paren*{ x_B } } \\
\leq z_{ \delta } - B\paren*{ z_{ \delta } } + \paren*{ 1 - z_{ \delta } } \cdot \paren*{ x_A - A\paren*{ x_A } } = \utilba\paren*{ A'_{ \delta }, B, x_A + \delta, z_{ \delta } } .
\end{multline*}
This means that for all $\delta > 0$ there exists $z_{ \delta } \in [0, 1]$ such that:
\begin{align*}
\utilab\paren*{ A, B, x_A, x_B } - \delta &= x_A - A\paren*{ x_A } + \paren*{ 1 - x_A } \cdot \paren*{ x_B - B\paren*{ x_B } } - \delta \\
&\leq x_A - A\paren*{ x_A } + \paren*{ 1 - x_A - \delta } \cdot \paren*{ x_B - B\paren*{ x_B } } \\
&\leq \utilba\paren*{ A, B, x_A, z_{ \delta } } \\
&\leq \max_{ x'_B \in [0, 1] } \utilba\paren*{ A, B, x_A, x'_B } \\
&\leq \utilab\paren*{ A, B, x_A, x_B } \tag{Choice of $x_A$ and $x_B$} .
\end{align*}
As $\delta > 0$ is arbitrary, this is possible only if $\max_{ x'_B \in [0, 1] } \utilba\paren*{ A, B, x_A, x'_B } = \utilab\paren*{ A, B, x_A, x_B }$. Next, observe that the fact that $B$ is a mechanism implies that the maximum in this expression must be attained (at the point $x'_B = \mathtt{z}_B\paren*{ 1 - x_A + A\paren*{ x_A } }$). Letting $x'_B$ denote this value henceforth, we again use the choice of $x_A$ and $x_B$ to get that $\max_{ x'_A \in [0, 1] } \utilba\paren*{ A, B, x'_A, x'_B } \leq \utilab\paren*{ A, B, x_A, x_B } = \utilba\paren*{ A, B, x_A, x'_B }$. Using the definition of $\utilba\paren*{ \cdot }$, we are done unless $x'_B = 1$.

However, if $x'_B = 1$, we have that $\utilab\paren*{ A, B, x_A, x_B } = \utilba\paren*{ A, B, x_A, x'_B } = 1 - B\paren*{ 1 }$. It follows from the choice of $x_A$ and $x_B$ that $1 - B\paren*{ 1 } = \max_{ z \in [0, 1] } \paren*{ z - B\paren*{ z } } = x_B - B\paren*{ x_B }$ implying that $B\paren*{ x_B } \leq B\paren*{ 1 }$. Also, noting that $x_A, B\paren*{ x_B } > 0$, we have for small enough $\delta > 0$ that $\rev_{ \mathsf{B}, \paren*{ A, B } }\paren*{ D } = \paren*{ 1 - x_A } \cdot B\paren*{ x_B } < B\paren*{ x_B } - \delta \leq B\paren*{ 1 } - \delta$. Consider the mechanism $B'$ for Bob where the only lottery offered to the buyer is to purchase the item with probability $1$ at a price of $B\paren*{ 1 } - \delta$. Due to the foregoing inequality, we have that, when Bob uses the mechanism $B'$, the buyer must go to Alice first and pick an option, say $x'_A > 0$, from her before going to Bob and picking the option $1$. Moreover, the option $x'_A$ satisfies
\begin{multline*}
\utilba\paren*{ A, B', x_A, 1 } = 1 - B\paren*{ 1 } + \delta \\
\leq x'_A - A\paren*{ x'_A } + \paren*{ 1 - x'_A } \cdot \paren*{ 1 - B\paren*{ 1 } + \delta } = \utilab\paren*{ A, B', x'_A, 1 } . 
\end{multline*}
As $1 - B\paren*{ 1 } = x_B - B\paren*{ x_B } = \utilab\paren*{ A, B, x_A, x_B } \geq \utilab\paren*{ A, B, x'_A, x_B }$, we get that
\[
x'_A - A\paren*{ x'_A } + \paren*{ 1 - x'_A } \cdot \paren*{ x_B - B\paren*{ x_B } } + \delta \leq x'_A - A\paren*{ x'_A } + \paren*{ 1 - x'_A } \cdot \paren*{ x_B - B\paren*{ x_B } + \delta } . 
\]
This simplifies to $x'_A \cdot \delta \leq 0$, a contradiction.
\end{proof}

\end{document}